\documentclass[12pt,a4paper]{article}
\usepackage{amsmath,amsthm,amsfonts,amssymb,bbm,sansmath}
\usepackage{graphicx,psfrag,subfigure,color}
\usepackage{cite}
\usepackage{hyperref}

%\setlength{\hoffset}{-0.5cm}
%\addtolength{\textwidth}{1cm}
%\addtolength{\textheight}{2cm}
%\setlength{\voffset}{-1cm}

%\usepackage{showlabels}

\numberwithin{equation}{section}
\newcommand{\X}{\mathfrak{X}}
\newcommand{\Y}{\mathfrak{Y}}

\newcommand{\Ai}{\mathrm{Ai}}
\newcommand{\const}{\mathrm{const}}
\newcommand{\Pb}{\mathbbm{P}}
\newcommand{\E}{\mathbbm{E}}
\newcommand{\Id}{\mathbbm{1}}
\newcommand{\e}{\varepsilon}
\newcommand{\I}{{\rm i}}
\newcommand{\dx}{\mathrm{d}}
\newcommand{\T}{\mathcal{T}}
\newcommand{\C}{\mathbb{C}}
\newcommand{\R}{\mathbb{R}}
\newcommand{\N}{\mathbb{N}}
\newcommand{\Z}{\mathbb{Z}}
\renewcommand{\Re}{\mathrm{Re}}
\renewcommand{\Im}{\mathrm{Im}}

\DeclareMathOperator{\Tr}{Tr}
\DeclareMathOperator*{\diag}{diag}
\DeclareMathOperator*{\virt}{virt}

\newtheorem{prop}{Proposition}[section]
\newtheorem{thm}[prop]{Theorem}
\newtheorem{lem}[prop]{Lemma}
\newtheorem{defin}[prop]{Definition}
\newtheorem{cor}[prop]{Corollary}

\newtheorem{cla}[prop]{Claim}

\newtheorem{rem}[prop]{Remark}
\newenvironment{remark}{\begin{rem}\normalfont}{\end{rem}}

\title{Why random matrices share universal processes with interacting particle systems?}
\author{Patrik L.\ Ferrari\footnote{Institute for Applied Mathematics, Bonn University, Endenicher Allee 60, 53115 Bonn, Germany. E-mail: {\tt ferrari@uni-bonn.de}}}

\date{Extended lecture notes of the ICTP minicourse in Trieste,\\[0.5em] September 2013}

\begin{document}
\maketitle
\sloppy
\vfill
\begin{abstract}
In these lecture we explain why limiting distribution function, like the Tracy-Widom distribution, or limit processes, like the Airy$_2$ process, arise both in random matrices and interacting particle systems. The link is through a common mathematical structure on an interlacing structure, also known as Gelfand-Tsetlin pattern, that appears for specific models in both fields.
\end{abstract}

\vfill

\newpage
\tableofcontents

\newpage

\section{Introduction}\label{SectIntro}
Universal distribution functions and stochastic processes like the GUE Tracy-Widom distribution and the Airy$_2$ process arise both in random matrix models and some in one-dimensional driven interacting particle systems. Here we explain in which scaling this happens for the Gaussian Unitary Ensemble and its dynamical version given by Dyson's Brownian motion. Then we will do the same for the totally asymmetric simple exclusion process.

\subsection{Random matrices}
\subsubsection{The Gaussian Unitary Ensemble}
Let us first define the random matrix ensemble that we will consider in these lectures.
\begin{defin}
The \textbf{Gaussian Unitary Ensemble} (GUE) of random matrices consists in $N\times N$ Hermitian matrices distributed according to
\begin{equation}\label{eqGUE}
\Pb(H\in \dx H) = \const\exp\left(-\frac{1}{2N}\Tr(H^2)\right) \dx H,
\end{equation}
where the reference measure $\dx H$ is the product Lebesgue measure over the (not constrained by symmetry) entries of the matrix $H$, namely \mbox{$\dx H =\prod_{i=1}^N \dx H_{i,i} \prod_{1\leq i<j\leq N} \dx \Re H_{i,j} \dx \Im H_{i,j}$}, and $\const$ is the normalisation constant.
\end{defin}
The name GUE derives from the fact that the measure (\ref{eqGUE}) is invariant over the unitary transformations and has a Gaussian form.

The GUE measure has the nice (and special) property that, on top of being unitary-invariant, the entries of the matrices are independent. Indeed, another way to define (\ref{eqGUE}) is to set the upper-triangular entries of the matrix $H$ to be independent and Gaussian distributed with $H_{i,j}~{\cal N}(0,N)$ for \mbox{$1\leq i \leq N$}, and $\Re H_{i,j}\sim {\cal N}(0,N/2)$, $\Im H_{i,j}\sim {\cal N}(0,N/2)$, for $1\leq i < j \leq N$.

The prefactor $1/2N$ in front of the trace in (\ref{eqGUE}) can be freely chosen: the only difference is a global scaling of the eigenvalues. With the choice in (\ref{eqGUE}), the largest eigenvalue of a $N\times N$ matrix has fluctuations of order $N^{1/3}$ around the deterministic value $2N$. Also, the eigenvalues' density remains of order $1$ in the bulk of the spectrum. Another standard choice in random matrix literature consists in choosing the prefactor to be $1$, then the eigenvalues are scaled down by a factor $1/\sqrt{2N}$, i.e., the fluctuations are of order $N^{-1/6}$ around $\sqrt{2N}$. This choice is natural since the law of the entries are independent of $N$. Finally, setting the prefactor to be $N$, the largest eigenvalue fluctuates in a scale $N^{-2/3}$ around $\sqrt{2}$. This choice is natural if one wants to study the support of the spectrum, that remains bounded in the $N\to\infty$ limit.

The unitary invariance of (\ref{eqGUE}) allows to compute explicitly the measure of the eigenvalues, because the measure (\ref{eqGUE}) written in terms of the eigenvalues and of the angular variables becomes a product measure. The result is the following.
\begin{lem}\label{lemGUEonePt}
Let $\lambda=(\lambda_1,\ldots,\lambda_N)\in\R^N$ denote the $N$ eigenvalues of a GUE random matrix. Then,
\begin{equation}\label{eqEVdistrGUE}
\Pb(\lambda\in \dx\lambda)=\const\, \Delta_N(\lambda)^2\prod_{i=1}^N e^{-\lambda_i^2/2N}\dx\lambda_i,
\end{equation}
where $\Delta_N(\lambda)=\prod_{1\leq i < j \leq N}(\lambda_j-\lambda_i)\equiv \det(\lambda_i^{j-1})_{i,j=1}^N$ is the Vandermonde determinant and $\const$ is the normalization constant.
\end{lem}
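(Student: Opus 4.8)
The plan is to change variables in \eqref{eqGUE} from the matrix entries to the eigenvalue-eigenvector coordinates, exploiting the spectral theorem. Any Hermitian matrix $H$ can be written as $H = U \Lambda U^*$ with $\Lambda = \diag(\lambda_1,\ldots,\lambda_N)$ and $U$ unitary; to make the parametrization essentially one-to-one I would order the eigenvalues as $\lambda_1 < \cdots < \lambda_N$ (which holds almost surely, since coincidences form a lower-dimensional set) and fix the phases of the columns of $U$, so that $U$ ranges over the flag manifold $U(N)/U(1)^N$ of real dimension $N^2 - N$. First I would observe that the exponential factor depends only on the spectrum: $\Tr(H^2) = \Tr(\Lambda^2) = \sum_i \lambda_i^2$, so it immediately produces $\prod_i e^{-\lambda_i^2/2N}$ and is constant along the eigenvector directions. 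The entire content of the lemma is therefore in the Jacobian of the map $(H_{i,j}) \mapsto (\lambda, U)$.

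The key step is to compute that Jacobian and show it equals $\Delta_N(\lambda)^2$ times a factor depending only on $U$. I would do this via the tangent-space computation: writing $H = U\Lambda U^*$ and differentiating, $\dx H = U\big(\dx\Lambda + [\,U^*\dx U,\Lambda\,]\big)U^*$; since $\dx H \mapsto U^*\dx H\, U$ is an isometry of the space of Hermitian matrices (it preserves $\dx H$), it suffices to compute the Jacobian of $(\lambda,U)\mapsto \dx\Lambda + [\,\omega,\Lambda\,]$ where $\omega = U^*\dx U$ is an anti-Hermitian matrix (the Maurer–Cartan form). The diagonal part of this expression is just $\dx\lambda_1,\ldots,\dx\lambda_N$, contributing $1$; the $(i,j)$ off-diagonal entry with $i<j$ is $(\lambda_j-\lambda_i)\omega_{i,j}$, and as $\omega_{i,j}$ runs over $\C$ this contributes a factor $|\lambda_j-\lambda_i|^2 = (\lambda_j-\lambda_i)^2$ to the Jacobian for each pair $i<j$. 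Multiplying over all pairs gives exactly $\Delta_N(\lambda)^2$, while the remaining Jacobian factor, which involves only $\omega$ and hence only $U$, assembles into the (normalized) Haar measure on $U(N)/U(1)^N$. Integrating that out is harmless — it is finite and independent of $\lambda$ — and its total mass is absorbed into $\const$. Finally I would note the determinantal identity $\Delta_N(\lambda) = \det(\lambda_i^{j-1})_{i,j=1}^N$, which is the standard Vandermonde evaluation and can be cited or proved in one line by column operations.

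The main obstacle I expect is making the change of variables rigorous rather than formal: one must handle the non-injectivity of $H\mapsto(\Lambda,U)$ carefully (the ambiguity from permuting eigenvalues and from the residual torus $U(1)^N$ acting on the columns of $U$), verify that the singular set where eigenvalues collide is Lebesgue-null and does not contribute, and justify that the pushforward of $\dx H$ genuinely factorizes as (Jacobian)$\times$(Haar on the flag manifold) — i.e. that $H\mapsto(\Lambda,U)$ is a smooth submersion off the singular set with the claimed fiber structure. The algebraic heart, namely the appearance of $\prod_{i<j}(\lambda_j-\lambda_i)^2$ from the commutator $[\omega,\Lambda]$, is short once the setup is in place; the care is all in the measure-theoretic bookkeeping and in pinning down exactly which $U$-manifold one integrates over.
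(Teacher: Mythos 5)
Your proposal is correct and follows essentially the same route the paper indicates: the paper justifies the lemma precisely by the unitary invariance of \eqref{eqGUE} and the change to eigenvalue/angular coordinates, in which the measure factorizes, and your tangent-space computation of the Jacobian $\Delta_N(\lambda)^2$ via $\dx H = U\big(\dx\Lambda + [\,U^*\dx U,\Lambda\,]\big)U^*$ is the standard way of making that factorization explicit. The caveats you list (ordering of eigenvalues, the residual torus, nullity of the collision set) are exactly the bookkeeping points one must handle, and your treatment of them is sound.
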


The fluctuations of the largest eigenvalue for GUE were characterized by Tracy and Widom:
\begin{thm}[Tracy and Widom~\cite{TW94}]
Let us denote by $\lambda_{N,\rm max}$ the largest eigenvalue of a $N\times N$ GUE matrix. Then,
\begin{equation}
F_2(s):=\lim_{N\to\infty}\Pb\left(\frac{\lambda_{N,\rm max}-2 N}{N^{1/3}}\leq s\right)
\end{equation}
is a non-degenerate distribution function, called \textbf{GUE Tracy-Widom distribution}. It can be characterized by the Fredholm determinant
\begin{equation}\label{eq1.4}
\begin{aligned}
F_2(s)&=\det(\Id-K_2)_{L^2((s,\infty))}\\
&\equiv \sum_{n\geq 0}\frac{(-1)^n}{n!}\int_s^\infty \dx x_1\cdots \int_s^\infty \dx x_n \det(K_2(x_i,x_j))_{i,j=1}^n
\end{aligned}
\end{equation}
with $K_2(x,y)=\int_{0}^\infty \dx \mu \Ai(x+\mu)\Ai(y+\mu)$, the \textbf{Airy kernel}. $\Ai$ is the Airy function, the solution of $\Ai''(x)= x \Ai(x)$ with $\Ai(x)\sim e^{-2 x^3/3}$ as $x\to\infty$. Equivalently, $F_2$ can be written as
\begin{equation}
F_2(s)=\exp\left(-\int_s^\infty (x-s)^2 q^2(x)\dx x\right),
\end{equation}
where $q$ is the solution of the Painlev\'e II equation $q''(x)= x q(x)+2 q(x)^3$ satisfying the asymptotic condition $q(x)\sim \Ai(x)$ as $x\to\infty$.
\end{thm}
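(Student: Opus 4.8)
The plan is to exploit the determinantal structure that is already latent in Lemma~\ref{lemGUEonePt}. First I would rewrite the Vandermonde factor using the orthogonal polynomials for the Gaussian weight $e^{-x^2/2N}\dx x$: since $\Delta_N(\lambda)=\det(\lambda_i^{j-1})_{i,j=1}^N$ and column operations leave a determinant unchanged, one may replace $\lambda_i^{j-1}$ by $p_{j-1}(\lambda_i)$, where $p_k$ is the monic Hermite-type polynomial of degree $k$ orthogonal for that weight. Writing $\Delta_N(\lambda)^2\prod_i e^{-\lambda_i^2/2N}$ as a product of two such determinants with the weight split symmetrically puts the eigenvalue law into orthogonal-ensemble normal form, and the Andr\'eief (Gram) identity then yields that the $k$-point correlation functions are
\begin{equation}
\rho_k(x_1,\ldots,x_k)=\det\bigl(K_N(x_i,x_j)\bigr)_{i,j=1}^k,\qquad K_N(x,y)=\sum_{\ell=0}^{N-1}\psi_\ell(x)\psi_\ell(y),
\end{equation}
with $\psi_\ell$ the $L^2(\R)$-normalised Hermite functions. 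Thus the eigenvalue process is determinantal with explicit kernel $K_N$.

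Next I would convert the law of the top eigenvalue into a gap probability. Since $\{\lambda_{N,\max}\leq t\}$ is exactly the event of having no eigenvalue in $(t,\infty)$, the standard inclusion--exclusion identity for determinantal point processes gives
\begin{equation}
\Pb(\lambda_{N,\max}\leq t)=\det\bigl(\Id-K_N\bigr)_{L^2((t,\infty))},
\end{equation}
the Fredholm determinant being meaningful because $K_N$ is trace class on $(t,\infty)$ (Gaussian decay of the $\psi_\ell$), with series expansion of the form (\ref{eq1.4}). Setting $t=2N+sN^{1/3}$ reduces the theorem to a limit statement for this Fredholm determinant.

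The analytic heart of the matter is the edge scaling of the kernel. Using the Christoffel-Darboux formula to express $K_N$ through $\psi_{N-1},\psi_N$, and then the Plancherel-Rotach asymptotics of Hermite functions near the spectral edge --- obtained by a steepest-descent analysis of the contour-integral representation of $\psi_\ell$, in which two saddle points coalesce at the edge and thereby produce the Airy function --- I would establish the pointwise limit
\begin{equation}
\lim_{N\to\infty}N^{1/3}\,K_N\bigl(2N+\xi N^{1/3},\,2N+\eta N^{1/3}\bigr)=K_2(\xi,\eta),
\end{equation}
uniformly for $\xi,\eta$ in compact sets, together with an $N$-uniform bound $N^{1/3}\bigl|K_N(2N+\xi N^{1/3},2N+\eta N^{1/3})\bigr|\leq C\,e^{-c(\xi+\eta)}$ valid for $\xi,\eta\geq -M$. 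Changing variables $x_i\mapsto 2N+x_iN^{1/3}$ in the Fredholm series, invoking dominated convergence justified by the trace-class bound, and passing to the limit term by term then gives
\begin{equation}
\lim_{N\to\infty}\Pb\!\left(\frac{\lambda_{N,\max}-2N}{N^{1/3}}\leq s\right)=\det(\Id-K_2)_{L^2((s,\infty))},
\end{equation}
which is $F_2(s)$ and is manifestly a non-degenerate distribution function. The main obstacle is precisely this step: proving the edge Plancherel-Rotach asymptotics with the required uniformity and with exponential tail control strong enough to exchange limit and infinite sum; the determinantal reduction and the Fredholm bookkeeping around it are essentially formal by comparison. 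Finally, the equivalence with the Painlev\'e~II representation is a separate matter: differentiating $\log\det(\Id-K_2)_{L^2((s,\infty))}$ and using the integrable structure of the Airy kernel, $K_2(x,y)=\bigl(\Ai(x)\Ai'(y)-\Ai'(x)\Ai(y)\bigr)/(x-y)$, produces a closed system of ODEs whose analysis identifies the relevant solution of $q''=xq+2q^3$ as the one with $q(x)\sim\Ai(x)$; for this identity I would simply cite the computation of Tracy and Widom~\cite{TW94}.
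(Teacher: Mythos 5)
Your outline is a correct and standard proof of this theorem, but it is worth noting that the paper itself does not prove it: the statement is quoted as background and attributed to Tracy and Widom~\cite{TW94}. Your route is the classical one specific to GUE: Hermite orthogonal polynomials and the Andr\'eief identity to get the determinantal correlations, the gap-probability identity (the paper's Lemma~\ref{lemGapProbability}), Christoffel--Darboux plus Plancherel--Rotach asymptotics at the coalescing saddle to produce the Airy kernel, and dominated convergence with Hadamard's bound for the Fredholm series. The machinery these lecture notes actually develop would reach the same limit differently: the determinantal structure comes from conditional $L$-ensembles and the Eynard--Mehta theorem (Theorem~\ref{thmEynardMehta}), the kernel is written as a double contour integral, and the edge limit is extracted by steep-descent analysis around a double critical point as in Section~\ref{sectAsymptAnalysis}, following the rescaling Lemma~\ref{lemRescalingKernel}. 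Your approach buys a self-contained argument tailored to GUE; the paper's representation buys uniformity across models (DBM, GUE minors, TASEP) and multi-time (extended Airy kernel) statements with essentially the same saddle-point analysis. One small caveat: the limit being a \emph{non-degenerate distribution function} is not quite ``manifest'' from pointwise convergence of the Fredholm determinants --- one still needs $F_2(s)\to 1$ as $s\to\infty$ (easy from the decay of the Airy kernel) and $F_2(s)\to 0$ as $s\to-\infty$, which requires an extra estimate or the Painlev\'e~II representation; citing~\cite{TW94} for that final identification, as you do, is consistent with how the paper handles it.
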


\subsubsection{Dyson's Brownian Motion}
We will see in the course of the lectures where the Fredholm determinant expression comes from, but before let us shortly discuss a dynamics on random matrices introduced by Dyson~\cite{Dys62} and therefore known as \textbf{Dyson's Brownian motion} (DBM). Dyson observed that if the independent entries of a GUE-distributed random matrix evolve as independent stationary Ornstein-Uhlenbeck processes, then the transition probability on matrices is given by
\begin{equation}\label{eq1.6}
\Pb(H(t)\in \dx H | H(0)=H_0)=\const\, \exp\left(-\frac{\Tr(H-q H_0)^2}{2N(1-q^2)}\right)\dx H,
\end{equation}
with $q(t)=e^{-t/2N}$. Further, the evolution of its eigenvalues satisfies the coupled stochastic differential equations
\begin{equation}
\dx \lambda_j(t)=\left(-\frac{1}{2N} \lambda_j(t)+\sum_{i\neq j}\frac{1}{\lambda_j(t)-\lambda_i(t)}\right)\dx t +\dx b_j(t),\quad 1\leq j \leq N,
\end{equation}
where $b_1,\ldots,b_N$ are independent standard Brownian motions. Equivalently, let $L$ be the generator of $N$ independent Ornstein-Uhlenbeck processes,
\begin{equation}
(L f)(\lambda)=\left(\sum_{i=1}^N \frac12 \frac{\partial^2}{\partial \lambda_i^2}-\frac{\lambda_i}{2N}\right) f(\lambda).
\end{equation}
Let $h(\lambda)=\Delta_N(\lambda)$. Then, the generator $L^h$ of the DBM eigenvalues' process is given by
\begin{equation}\label{eqHTransformForDBM}
\begin{aligned}
(L^h f)(\lambda) &= \left(\sum_{i=1}^N \frac12 \frac{\partial^2}{\partial \lambda_i^2}+\left(\sum_{j\neq i}\frac{1}{\lambda_i-\lambda_j}-\frac{1}{2N}\lambda_i\right)\frac{\partial}{\partial \lambda_i}\right) f(\lambda)\\
&=\frac{1}{h(\lambda)} (L (h f))(\lambda).
\end{aligned}
\end{equation}
Using the Harish-Chandra--Itzykson--Zuber formula~\cite{HC57,IZ80} (see (\ref{eqIZHC1})) one then obtains an expression for the joint measure of eigenvalues (an extension of Lemma~\ref{lemGUEonePt}). It turns out that the joint distributions of the largest eigenvalue at different times is given as a Fredholm determinant too.

In the large-$N$ limit the process of the largest eigenvalue converges (properly rescaled) to the Airy$_2$ process, ${\cal A}_2$:
\begin{thm}
Let us denote by $\lambda_{N,\rm max}(t)$ the largest eigenvalue of the stationary GUE Dyson's Brownian motion. Then,
\begin{equation}
\lim_{N\to\infty} \Pb\left(\bigcap_{k=1}^m\left\{\frac{\lambda_{N,\rm max}(2 t_k N^{2/3})-2N}{N^{1/3}}\leq s_k\right\}\right) = \Pb\left(\bigcap_{k=1}^m \left\{{\cal A}_2(t_k)\leq s_k\right\}\right),
\end{equation}
where ${\cal A}_2$ is the \textbf{Airy$_2$ process}. It is defined by its joint-distributions: for any $m\in\N$, $t_1<t_2<\ldots<t_m\in\R$ and $s_1,\ldots, s_m\in\R$, it holds
\begin{equation}\label{eq1.11}
\begin{aligned}
&\Pb\left(\bigcap_{k=1}^m \left\{{\cal A}_2(t_k)\leq s_k\right\}\right)=\det(\Id-P_s K_2 P_s)_{L^2(\R\times \{t_1,\ldots,t_m\})}\\
&\equiv \sum_{n\geq 0}\frac{(-1)^n}{n!}\sum_{\ell_1=1}^m\cdots \sum_{\ell_n=1}^m\int_{s_{\ell_1}}^\infty \dx x_1\cdots\int_{s_{\ell_n}}^\infty \dx x_n \det(K_2(x_i,t_{\ell_i};x_j,t_{\ell_j}))_{i,j=1}^n,
\end{aligned}
\end{equation}
where $P_s(x,t_k)=\Id_{[x\leq s_k]}$ and the \textbf{extended Airy kernel} $K_2$ is given by
\begin{equation}\label{eq1.12}
K_2(x,t;x',t')=\left\{
                 \begin{array}{ll}
                   \int_{\R_+}\dx\lambda e^{-\lambda(t'-t)}\Ai(x+\lambda)\Ai(x'+\lambda), & \textrm{for }t\leq t', \\
                   -\int_{\R_-}\dx\lambda e^{-\lambda(t'-t)}\Ai(x+\lambda)\Ai(x'+\lambda), & \textrm{for }t>t'.
                 \end{array}
               \right.
\end{equation}
\end{thm}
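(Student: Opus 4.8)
The plan is to pass through the determinantal structure that Dyson's Brownian motion inherits from the interlacing/Gelfand--Tsetlin picture, so the proof splits naturally into a finite-$N$ exact formula and an asymptotic analysis of its kernel. First I would establish the exact joint distribution of the eigenvalues at times $t_1<\dots<t_m$ as a determinantal point process on $\R\times\{t_1,\dots,t_m\}$. Starting from the transition density (\ref{eq1.6}), the Harish-Chandra--Itzykson--Zuber formula turns the matrix heat kernel into a determinantal kernel acting on eigenvalues; combined with the stationary initial law (\ref{eqEVdistrGUE}) and the Eynard--Mehta theorem (or equivalently the biorthogonal-ensemble machinery for non-intersecting Ornstein--Uhlenbeck bridges), one obtains that $\{(\lambda_j(t_k),t_k)\}$ form a determinantal process with an explicit \emph{extended Hermite kernel} $K_N(x,t;x',t')$. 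The gap probability appearing on the left of the displayed identity is then, for each finite $N$, the Fredholm determinant $\det(\Id-P_s K_N P_s)$ on $L^2(\R\times\{t_1,\dots,t_m\})$, with the same structure as (\ref{eq1.11}).

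The second step is the edge-scaling limit. With the centering $2N$ and spatial scale $N^{1/3}$ (and time scale $2tN^{2/3}$) prescribed in the statement, I would substitute $x_i = 2N + s + x N^{1/3}$ etc.\ into the extended Hermite kernel and perform a steepest-descent / Plancherel--Rotach analysis of the Hermite functions near the spectral edge. The classical one-time computation gives $N^{1/3}K_N \to K_2$ with the Airy kernel; the extended version requires tracking the extra exponential factor $e^{-\lambda(t'-t)}$, which emerges precisely because the Ornstein--Uhlenbeck time rescaled by $N^{2/3}$ produces, at the edge, a Brownian-type local time and the Hermite-index shift by $t_k N^{2/3}$ contributes the $e^{-\lambda(t'-t)}$ weight in (\ref{eq1.12}), including the sign flip and the switch between $\R_+$ and $\R_-$ according to whether $t\le t'$ or $t>t'$. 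I would also need the standard off-diagonal (Gaussian-type) decay bounds on $K_N$ in the scaling window to guarantee that the Fredholm expansions converge and that the limit may be taken term by term.

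Finally, I would upgrade pointwise kernel convergence to convergence of the Fredholm determinant, hence of the gap probabilities, hence (by inclusion--exclusion over the finitely many levels $s_1,\dots,s_m$ and the observation that the largest eigenvalue is the right-most point of the process) of the joint distribution functions of $\lambda_{N,\rm max}$ at the times $t_k$ to those of $\mathcal{A}_2$. The convergence of Fredholm determinants follows from dominated convergence in each $n$-fold integral together with a uniform-in-$N$ integrable bound coming from the Gaussian decay estimates, plus tightness/equicontinuity so that the finite-dimensional limits indeed identify the process $\mathcal{A}_2$ as defined by (\ref{eq1.11})--(\ref{eq1.12}).

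The main obstacle I expect is the uniform asymptotic control of the extended kernel: one must show that $N^{1/3}K_N(2N+s+xN^{1/3},2t N^{2/3};\,2N+s+x'N^{1/3},2t'N^{2/3})$ converges to $K_2$ \emph{with uniform Gaussian tails} in the variables $x,x'$ over the whole scaling region, not merely pointwise, because the Fredholm series involves integrals over $(s,\infty)$. Handling the two regimes $t\le t'$ and $t>t'$ on an equal footing — in particular the ``backward'' part, where the relevant contour must be deformed past the saddle and produces the $-\int_{\R_-}$ term — is where the delicate steepest-descent bookkeeping lives; once those estimates are in place, the rest is the by-now-standard Fredholm-determinant continuity argument.
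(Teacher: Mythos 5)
Your outline matches the route the paper itself points to: it does not write out a proof but refers to exactly this strategy (finite-$N$ determinantal structure for the eigenvalues via Harish-Chandra--Itzykson--Zuber and the Eynard--Mehta theorem, edge/steepest-descent convergence of the extended Hermite kernel to the extended Airy kernel as in Appendix~A of~\cite{FerPhD} and~\cite{Wei11}, then convergence of Fredholm determinants via uniform tail bounds, dominated convergence and Hadamard's bound). Since only finite-dimensional distributions are claimed, your final tightness remark is unnecessary, but otherwise the proposal is the same argument in sketch form.
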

This result is known to be true since about 10 years but a proof was not written down explicitly for a while (being just an exercise in comparison for instance to~\cite{Jo01}). The convergence of the kernel can be found for example in Appendix~A of~\cite{FerPhD}, of the process in~\cite{Wei11}.

\subsection{The totaly asymmetric simple exclusion process}
The \textbf{totally asymmetric simple exclusion process} (TASEP) is one of the simplest
interacting stochastic particle systems. It consists of particles on the
lattice of integers, $\Z$, with at most one particle at each site (exclusion principle).
The dynamics in continuous time is as follows. Particles jump on the
neighboring right site with rate $1$ provided that the site is empty. This means
that jumps are independent of each other and take place after an exponential
waiting time with mean $1$, which is counted from the time instant when the
right neighbor site is empty.

More precisely, we denote by $\eta$ a particle configuration, $\eta\in
\Omega=\{0,1\}^\Z$. Let $f$: $\Omega\to \R$ be a function depending only on a
finite number of $\eta_j$'s. Then the backward generator of the TASEP is given by
\begin{equation}\label{1.1}
Lf(\eta)=\sum_{j\in\Z}\eta_j(1-\eta_{j+1})\big(f(\eta^{j,j+1})-f(\eta)\big).
\end{equation}
Here $\eta^{j,j+1}$ denotes the configuration $\eta$ with the
occupations at sites $j$ and \mbox{$j+1$} interchanged. The semigroup $e^{Lt}$ is
well-defined as acting on bounded and continuous functions on $\Omega$. $e^{Lt}$
is the transition probability of the TASEP~\cite{Li99}.

If we denote by $\rho(\xi,\tau)$ the macroscopic density at position $\xi\in\R$ and time $\tau\in\R$, given by $\lim_{\e\to 0} \Pb(\eta_{[\xi \e^{-1}]}(\tau \e^{-1})=1)$, then it satisfies the (deterministic) Burgers equation
\begin{equation}
\partial_\tau \rho+\partial_\xi (\rho(1-\rho))=0.
\end{equation}

TASEP dynamics preserves the order of particles. We denote by $x_k(t)$ the position of particle with label $k$ at time $t$ and choose the right-to-left ordering, i.e., $x_{k+1}(t)<x_k(t)$ for any $k$. Consider now the initial condition where the left of $0$ is initially fully occupied and the rest is empty, i.e., $x_k(0)=-k$ for $k\geq 1$. This is called \textbf{step initial condition}.
\begin{thm}[Johansson~\cite{Jo00b}]
Consider TASEP with step initial condition. Then,
\begin{equation}
\lim_{t\to\infty}\Pb(x_{[t/4]}(t)\geq -s (t/2)^{1/3})=F_2(s).
\end{equation}
\end{thm}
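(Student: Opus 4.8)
The plan is to reduce the statement to Johansson's exact solution of exponential last-passage percolation (LPP) and then to the soft-edge asymptotics of a Laguerre-type determinantal ensemble; the identity $F_2=\det(\Id-K_2)$ supplied by the Tracy--Widom theorem above is exactly what closes the argument.

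\textbf{Step 1: From TASEP to LPP (Rost's coupling).} I would encode the step-initial-condition dynamics through its waiting times: let $w_{i,j}$, $i,j\ge1$, be i.i.d.\ $\mathrm{Exp}(1)$, where $w_{i,j}$ is the time that particle $j$ waits, \emph{after} its target site becomes empty, before making its $i$-th jump. If $T(i,j)$ denotes the time of the $i$-th jump of particle $j$, the exclusion rule forces $T(i,j)=\max\{T(i-1,j),T(i,j-1)\}+w_{i,j}$ with $T(0,j)=T(i,0)=0$, whose solution is the last-passage time
\[
G(i,j)=\max_{\pi\colon(1,1)\to(i,j)}\ \sum_{(a,b)\in\pi}w_{a,b},
\]
the maximum over up-right lattice paths. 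Since $x_n(t)=-n+(\text{number of jumps of particle }n\text{ by time }t)$, one has $\{x_n(t)\ge-m\}=\{T(n-m,n)\le t\}$ for $m<n$, hence the exact identity
\[
\Pb\big(x_n(t)\ge-m\big)=\Pb\big(G(n-m,n)\le t\big).
\]
(Monotonicity in $t$ and harmless integer parts let one pass between the discrete particle position and the continuous scaling window.)

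\textbf{Step 2: Exact solution of the LPP.} Apply the RSK correspondence to the array $(w_{i,j})_{1\le i\le N',\,1\le j\le N}$ with $N'\le N$: the common shape $\lambda=(\lambda_1\ge\cdots\ge\lambda_{N'})$ satisfies $\lambda_1=G(N',N)$ (Greene's theorem), and the growing sequence of shapes produced by RSK insertion is itself a Gelfand--Tsetlin pattern, matching the viewpoint of these notes. For $\mathrm{Exp}(1)$ weights the law of $\lambda$ is the Laguerre Unitary Ensemble,
\[
\Pb(\lambda\in\dx\lambda)=\const\,\Delta_{N'}(\lambda)^2\prod_{i=1}^{N'}\lambda_i^{\,N-N'}e^{-\lambda_i}\,\dx\lambda_i,
\]
i.e.\ the eigenvalue law of a complex Wishart matrix --- the same $\Delta^2$ structure met in Lemma~\ref{lemGUEonePt}. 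Consequently $\{\lambda_i\}$ is a determinantal point process with the Christoffel--Darboux kernel $K_{N'}$ built from Laguerre polynomials, and
\[
\Pb\big(G(N',N)\le t\big)=\Pb(\lambda_1\le t)=\det(\Id-K_{N'})_{L^2((t,\infty))}.
\]

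\textbf{Step 3: Edge asymptotics and bookkeeping of constants.} Put $N=[t/4]$ and $N'=N-m$ with $m=[s(t/2)^{1/3}]$, so that $N'/N\to1$ and $N\to\infty$ as $t\to\infty$. The soft edge of this ensemble lies at $4N$ with fluctuation scale $2^{4/3}N^{1/3}$, while the typical value of $G(N',N)$ is $(\sqrt{N'}+\sqrt N)^2=4N-2m+o(1)$. Plancherel--Rotach asymptotics (steepest descent) of Laguerre polynomials give, uniformly on compacts and with a super-exponentially decaying domination,
\[
2^{4/3}N^{1/3}\,K_{N'}\!\Big(4N+2^{4/3}N^{1/3}x,\ 4N+2^{4/3}N^{1/3}y\Big)\ \longrightarrow\ K_2(x,y),
\]
with $K_2$ the Airy kernel. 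The domination lets one pass to the limit term-by-term in the Fredholm expansion, so $\Pb(G(N',N)\le t)\to\det(\Id-K_2)_{L^2((\xi,\infty))}=F_2(\xi)$ whenever $t=4N-2m+2^{4/3}N^{1/3}\xi+o(N^{1/3})$. Solving for $\xi$ with $t=4N+O(1)$, $m=s(t/2)^{1/3}$ and $N^{1/3}=(t/4)^{1/3}$ gives $\xi=\dfrac{2m+O(1)}{2^{4/3}N^{1/3}}\to s$, and therefore $\Pb\big(x_{[t/4]}(t)\ge-s(t/2)^{1/3}\big)\to F_2(s)$.

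\textbf{Main obstacle.} The analytic core is Step 3: one must not only identify the pointwise Airy limit of the Laguerre kernel at the soft edge, but also establish the uniform Hilbert--Schmidt/trace-class bounds that justify $\det(\Id-K_{N'})\to\det(\Id-K_2)$ --- this is where the real work sits, and where the Fredholm-determinant formula for $F_2$ from the Tracy--Widom theorem is invoked. A secondary technical point is making the RSK $\to$ Laguerre-ensemble identification (hence the determinantal formula) rigorous for continuous, exponentially distributed weights, together with the routine monotonicity and continuity arguments needed to transfer the limit from integer positions to the continuous scaling window.
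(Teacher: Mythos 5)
Your proposal is correct in substance, and the constant bookkeeping checks out (centering $(\sqrt{N'}+\sqrt{N})^2=4N-2m+o(1)$, scale $2^{4/3}N^{1/3}$, so $\xi\to s$ when $m=s(t/2)^{1/3}$ and $t=4N+O(1)$), but it follows Johansson's original route rather than the one developed in these notes. You pass through Rost's coupling of step-initial TASEP with exponential last-passage percolation, the RSK/Wishart identification $\Pb(G(N',N)\le t)=\Pb(\lambda_{\max}^{\rm LUE}\le t)$, and Plancherel--Rotach asymptotics of the Laguerre kernel at the soft edge; the analytic core you flag (uniform kernel bounds justifying convergence of the Fredholm determinants) is indeed where the work lies, and the RSK-to-LUE step for continuous weights needs either a geometric-to-exponential limit (Meixner $\to$ Laguerre, as in Johansson) or a direct Wishart argument. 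The notes instead never leave the particle system: step initial condition is the packed initial condition of the interlacing Markov chain on $GT_N$, the joint law of the TASEP particles is a gap probability of a conditional $L$-ensemble (Lemma~\ref{lemGapProbability}, Theorem~\ref{thmspacelikekernel}, Corollary~\ref{corTASEP}) with an explicit double-contour-integral kernel, and the steep descent analysis of Section~\ref{sectAsymptAnalysis} with $\alpha=1/4$ (double critical point $z_c=1-\sqrt{\alpha}=1/2$) produces the (extended) Airy kernel, hence $F_2$ at one point and in fact the full Airy$_2$ process of Theorem~\ref{thmTASEP}. What each approach buys: yours is the historical argument, makes the random-matrix (LUE) connection completely explicit, and rests on classical orthogonal-polynomial asymptotics, but it is essentially a one-time, one-point statement and multi-time extensions require extra input (e.g.\ slow decorrelation); the notes' formalism avoids RSK altogether, gives joint distributions along space-like paths directly, and continues to operate for initial data where the RSK/nonnegativity structure breaks down (only a signed measure on the interlacing configurations), at the price of first building the intertwining and $L$-ensemble machinery.
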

Further, the joint law of particles' positions is governed, in the large time $t$ limit, by the Airy$_2$ process:
\begin{thm}\label{thmTASEP}
It holds
\begin{equation}
\lim_{t\to\infty} \frac{x_{[t/4+u(t/2)^{2/3}]}(t)+2u(t/2)^{2/3}-u^2(t/2)^{1/3}}{-(t/2)^{1/3}}={\cal A}_2(u)
\end{equation}
in the sense of finite-dimensional distributions.
\end{thm}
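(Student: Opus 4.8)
The plan is to reduce Theorem~\ref{thmTASEP} to the convergence of a single Fredholm determinant, along the same lines that produce the Airy$_2$ limit for Dyson's Brownian motion discussed above; the structural difference is that the determinantal input now comes from the interlacing (Gelfand--Tsetlin) structure attached to TASEP with step initial condition, equivalently from exponential last-passage percolation, so that the relevant finite-$t$ object is a Laguerre-type ensemble rather than a Hermite one. First I would recall this determinantal structure: for step initial condition one can couple TASEP with a Markov dynamics on Gelfand--Tsetlin patterns (via the Robinson--Schensted--Knuth correspondence applied to the exponential jump clocks, or via the $2{+}1$-dimensional construction of Borodin and Ferrari) so that, at a fixed time $t$, the whole array $\{x^{(m)}_n(t):1\le n\le m\le N\}$ is a determinantal point process on $\Z\times\{1,\dots,N\}$ whose correlation kernel $K_t$ is explicit --- a double contour integral, obtained from the Eynard--Mehta theorem, or from the Lindstr\"om--Gessel--Viennot lemma for the associated non-intersecting paths --- and so that the TASEP particle is recovered as the leftmost point on its level, $x_n(t)=x^{(n)}_n(t)$. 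Equivalently, via the identity $\{x_n(t)\ge a\}=\{G(n,a+n)\le t\}$ with $G$ the exponential last-passage time, this is the Fredholm determinant formula of Johansson~\cite{Jo00b} for last-passage times along a space-like path.

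\emph{Reduction to a gap probability.} Because $x_n(t)$ is the leftmost level-$n$ particle, the event $\{x_n(t)\ge a\}$ coincides with the event that no level-$n$ particle lies in $(-\infty,a)$. Hence, for labels $n_1<\dots<n_m$ and thresholds $a_1,\dots,a_m$,
\begin{equation*}
\Pb\Big(\bigcap_{k=1}^m\{x_{n_k}(t)\ge a_k\}\Big)=\det\big(\Id-\chi\,K_t\,\chi\big)_{\ell^2(D)},\qquad D=\bigcup_{k=1}^m\big((-\infty,a_k)\times\{n_k\}\big).
\end{equation*}
Now I would specialise to $n_k=[t/4+u_k(t/2)^{2/3}]$ and, writing $c_k=-2u_k(t/2)^{2/3}+u_k^2(t/2)^{1/3}$ for the parabolic TASEP limit shape, to $a_k=c_k-s_k(t/2)^{1/3}$; performing in each fibre $\{n_k\}$ the change of variables $y=c_k-\zeta(t/2)^{1/3}$ turns the gap region $(-\infty,a_k)$ into $\{\zeta>s_k\}$, which is exactly the gap description of $\{{\cal A}_2(u_k)\le s_k\}$ appearing in (\ref{eq1.11}). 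Thus Theorem~\ref{thmTASEP} reduces to showing that, as $t\to\infty$,
\begin{equation*}
\det\big(\Id-\chi\,K_t\,\chi\big)_{\ell^2(D)}\ \longrightarrow\ \det\big(\Id-\chi_s\,K_2\,\chi_s\big),
\end{equation*}
with $K_2$ the extended Airy kernel (\ref{eq1.12}) and $\chi_s$ the indicator of $\bigcup_k\big((s_k,\infty)\times\{u_k\}\big)$; by (\ref{eq1.11}) the right-hand side equals $\Pb\big(\bigcap_k\{{\cal A}_2(u_k)\le s_k\}\big)$.

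\emph{Asymptotic analysis.} The analytic core is the asymptotics of $K_t$. Inserting the above scalings into the double contour integral, I would identify the relevant saddle point of the exponent --- it sits precisely where the limit shape $c_k$ places it, which is what forces that centering --- deform the two contours to steepest-descent paths through it, on which the exponent has a cubic local behaviour yielding Airy functions, and obtain
\begin{equation*}
(t/2)^{1/3}\,e^{-\Phi}\,K_t\big(c_i-\zeta(t/2)^{1/3},n_i;\,c_j-\eta(t/2)^{1/3},n_j\big)\ \longrightarrow\ K_2(\zeta,u_i;\eta,u_j)
\end{equation*}
uniformly for $\zeta,\eta$ in compact sets, where the conjugation prefactor $e^{-\Phi}$ factorises into a part depending only on $(\zeta,u_i)$ and one depending only on $(\eta,u_j)$, and hence drops out of the determinant. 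The piece of $K_t$ carrying the propagator of the level dynamics (present when $n_i\neq n_j$, and responsible for the two-branch structure of (\ref{eq1.12})) must be estimated separately. Finally, to upgrade pointwise convergence of kernels to convergence of the Fredholm determinants on the semi-infinite set $D$, I would establish a $t$-uniform integrable bound on the rescaled kernel --- from Gaussian/exponential decay of the integrand away from the saddle together with Airy-type decay near it --- and conclude by dominated convergence applied termwise to the Fredholm series in (\ref{eq1.11}).

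\emph{Main obstacle.} The real work lies in that last step of the asymptotics: producing $t$-uniform, integrable kernel bounds valid on the unbounded regions $(-\infty,a_k)\times\{n_k\}$, and handling cleanly the propagator term, which in the TASEP setting is not a correlation kernel but the transition operator of the underlying level dynamics. Granting those estimates, the argument is, step for step, a transcription of the one behind the Airy$_2$ limit for Dyson's Brownian motion stated above, now with a Laguerre kernel in place of the Hermite one.
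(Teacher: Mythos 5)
Your outline is essentially the route these notes themselves take: TASEP realized as the projection of the Markov chain on Gelfand--Tsetlin patterns, joint particle laws rewritten as gap probabilities of a determinantal process with the explicit double-contour kernel (Corollary~\ref{corTASEP}), steep-descent analysis giving the extended Airy kernel, and convergence of the Fredholm determinants via $t$-uniform tail bounds, Hadamard's bound and dominated convergence, exactly as in Section~\ref{sectAsymptAnalysis} leading to (\ref{eq3.49}). The only divergence is cosmetic: the introduction additionally records the historical proof via last passage percolation combined with slow decorrelation, but your argument matches the derivation carried out in the body of the paper.
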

The first proof of Theorem~\ref{thmTASEP} goes back to~\cite{Jo00b} for the one-point and it was improved to a functional limit theorem in a discrete time setting in~\cite{Jo01}. To be precise, the result was for a last passage percolation. However, along the characteristics of the Burgers equation, the decorrelation happens in a macroscopic scale (one has to move away of order $t$ to see a fluctuation of order $t^{1/3}$), in particular it is much slower than in the spatial direction (where it is enough to move away of order $t^{2/3}$). This is phenomenon is also known as \textbf{slow-decorrelation}~\cite{Fer08,CFP10b}. Using this fact one can deduce Theorem~\ref{thmTASEP} for joint distributions from the last passage percolation result as shown in~\cite{CFP10b}.

\subsection{Outlook}
In these lectures we will address the question of the reason why the Tracy-Widom distribution and the Airy$_2$ process arises both in the GUE model of random matrices and in TASEP. In Section~\ref{SectIntPart} we define an interacting particle system on interlacing configuration evolving as Markov chain. TASEP will be a projection of this model. The measures that we will encounter are \mbox{$L$-ensembles} and have determinantal correlation functions, see Section~\ref{SectLmeasures}. In Section~\ref{SectRandMatr} we will come back to random matrices and discover an interlacing structure as in the particle system.

\section{Interacting particle systems}\label{SectIntPart}
In this section we consider a Markov chain on interlacing configurations (also known as \textbf{Gelfand-Tsetlin patterns}). The state space is given by
\begin{equation}
\small GT_N=\left\{X^N=(x^1,\ldots,x^N), x^n=(x_1^n,\ldots,x_n^n)\in\Z^n\, |\, x^n\prec x^{n+1}, 1\leq n\leq N-1\right\},
\end{equation}
where
\begin{equation}
x^n\prec x^{n+1}\iff x_1^{n+1} < x_1^n \leq x_2^{n+1} < x_2^{n+1}\leq \ldots < x_n^n \leq x_{n+1}^{n+1}.
\end{equation}
If $x^n\prec x^{n+1}$ we say that \textbf{$x^n$ interlace with $x^{n+1}$}. We can (and will) think of the configurations as unions of levels: so the vector $x^n$ of $X^N$ is the state at \emph{level $n$}. See Figure~\ref{figGTN} for an illustration.
\begin{figure}[t]
\begin{center}
  \psfrag{m}[c]{$<$} \psfrag{e}[c]{$\leq$} \psfrag{x11}{$x_1^1$}
  \psfrag{x12}{$x_1^2$} \psfrag{x13}{$x_1^3$} \psfrag{x14}{$x_1^4$}
  \psfrag{x22}{$x_2^2$} \psfrag{x23}{$x_2^3$} \psfrag{x24}{$x_2^4$}
  \psfrag{x33}{$x_3^3$} \psfrag{x34}{$x_3^4$} \psfrag{x44}{$x_4^4$}
  \includegraphics[height=3.5cm]{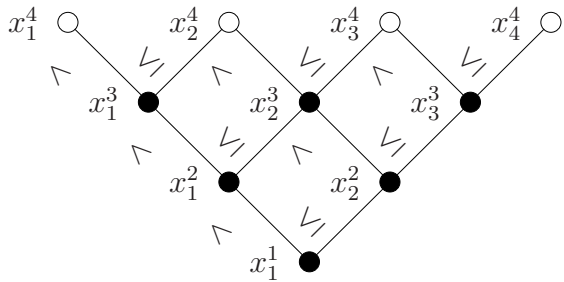}
\caption{Graphical representation of $GT_4$. The white dots represents the vector at level $n=4$.}
\label{figGTN}
\end{center}
\end{figure}

The Markov chain is built up through two basic Markov chains: (a) the first is the time evolution at a fixed level and (b) the second is a Markov chain on $GT_N$, linking level $n$ with level $n-1$, $2\leq n\leq N$. We first discuss the two chains separately and then define the full Markov chain with their properties.

\subsection{The Charlier process}\label{sectCharlier}
Here we consider a discrete analogue of Dyson's Brownian motion, where the Ornstein-Uhlenbeck processes are replaced by Poisson processes of intensity~$1$ (illustrated in Figure~\ref{figCharlier}).
\begin{figure}[t]
\begin{center}
  \psfrag{x}{$x$}
  \psfrag{t}{$t$}
  \includegraphics[height=3.5cm]{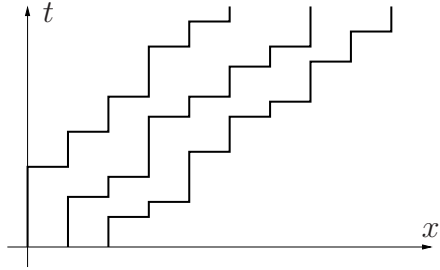}
\caption{Possible space-time trajectories for the Charlier process with $n=3$ particles.}
\label{figCharlier}
\end{center}
\end{figure}

Let us fix $n\in \N$. Consider the continuous-time random walk \mbox{$x^n=(x_1^n,\ldots,x_n^n)\in\Z^n$} where each of the $x_k^n$, $1\leq k \leq n$, are independent one-sided random walks with jump rate $1$. The generator of the random walk is
\begin{equation}
(L_n f)(x)=\sum_{i=1}^n \nabla_i f(x),
\end{equation}
where $\nabla_i f(x)= f(x+e_i)-f(x)$, with $e_i$ the vector with entries $e_i(j)=\delta_{i,j}$. We want to condition the random walk $x^n$ on never having a collision between any two of its components, i.e., to stay in the Weyl chamber
\begin{equation}
W_n=\{x=(x_1,\ldots,x_n)\in\Z^n\, | \, x_1<x_2<\ldots<x_n\}.
\end{equation}
This can be achieved by the Doob $h$-transform with $h$ given by the Vandermonde determinant~\cite{KOR02}. Indeed, one can verify that
\begin{equation}\label{eq2.5}
h_n(x)=\prod_{1\leq i < j \leq n} (x_j-x_i)\equiv \Delta_n(x)
\end{equation}
is harmonic, i.e., $L_n h_n = 0$. Then, the conditioned process given that the walker never leaves the Weyl chamber $W_n$ is the Doob $h$-transform of the free walk. This process is called \textbf{Charlier process}. For $x,y\in W_n$ and $t>0$, the transition probability $P_{n,t}$ from $x$ to $y$ of the conditioned random walk is given by
\begin{equation}
P_{n,t}(x,y)=\frac{h_n(y)}{h_n(x)}\Pb(x^n(t)=y, T>t\,|\, x^n(0)=x),
\end{equation}
where $T=\inf\{t>0\, |\, x^n(t)\not\in W_n\}$. Using Karlin-McGregor's formula~\cite{KM59}, we also have
\begin{equation}
\Pb(x^n(t)=y, T>t\,|\, x^n(0)=x) = \det\left(p_t(y_i-x_j)\right)_{i,j=1}^n,
\end{equation}
with
\begin{equation}
p_t(x)= \frac{e^{-t}t^x}{x!}\Id_{x\geq 0}
\end{equation}
the one-particle transition probability. We have obtained the following:
\begin{prop}\label{propCharlier}
The Charlier process has the transition probability given by
\begin{equation}\label{eqTransProbaCharlier}
P_{n,t}(x,y)=\frac{\Delta_n(y)}{\Delta_n(x)}\det\left(p_t(y_j-x_i)\right)_{i,j=1}^n
\end{equation}
with $x,y\in W_n$, $t>0$, and $\Delta_n$ the Vandermonde determinant.
\end{prop}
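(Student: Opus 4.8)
The plan is to derive \eqref{eqTransProbaCharlier} by combining two inputs that have already been set up: the representation of the conditioned walk as a Doob $h$-transform, and Karlin--McGregor's determinantal formula for non-colliding walks. Write $q_t(x,y):=\Pb(x^n(t)=y,\,T>t\mid x^n(0)=x)$ for the sub-Markovian transition kernel of the $n$ independent one-sided walks killed upon exiting $W_n$. Then the Charlier process, being by definition the $h_n$-transform of the killed walk with $h_n=\Delta_n$, has transition kernel $P_{n,t}(x,y)=\frac{h_n(y)}{h_n(x)}q_t(x,y)$, so the task reduces to two checks: (i) that $h_n=\Delta_n$ is a strictly positive $L_n$-harmonic function on $W_n$, which is what makes this $h$-transform legitimate, and (ii) that $q_t(x,y)=\det(p_t(y_i-x_j))_{i,j=1}^n$.

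For (i), strict positivity on $W_n$ is immediate from $x_1<\dots<x_n$. For $L_n h_n=0$ I would argue by degree versus symmetry: the operator $L_n=\sum_{i=1}^n\nabla_i$ commutes with permutations of the coordinates, so $L_n\Delta_n$ is, like $\Delta_n$ itself, antisymmetric under the symmetric group; an antisymmetric polynomial vanishes on each hyperplane $x_i=x_j$ and is hence divisible by $\Delta_n$; but each $\nabla_i$ strictly lowers the polynomial degree, so $\deg(L_n\Delta_n)\le\binom{n}{2}-1<\deg\Delta_n$, which forces $L_n\Delta_n\equiv0$. (One could instead expand $\Delta_n=\det(x_i^{j-1})$ and use multilinearity of the determinant, but the symmetry argument is shorter.)

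For (ii), I would invoke Karlin--McGregor in the standard way: expanding the determinant, $\det(p_t(y_i-x_j))_{i,j=1}^n=\sum_{\sigma\in S_n}\sgn(\sigma)\prod_{i=1}^n p_t(y_i-x_{\sigma(i)})$, the identity permutation contributes the probability that walk $j$ goes from $x_j$ to $y_j$ for every $j$ (collisions allowed), while the remaining signed terms cancel exactly the contribution of trajectories that do collide, by the usual path-interchange argument applied at the first collision time $T$ --- using only that $p_t$ is the translation-invariant one-particle kernel, so that the two involved paths can be swapped in a measure-preserving way after $T$. Putting (i) and (ii) together in $P_{n,t}(x,y)=\frac{h_n(y)}{h_n(x)}q_t(x,y)$, and noting the determinant is unchanged under transposition, gives \eqref{eqTransProbaCharlier}.

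The part I expect to need the most care is not any of the algebra but the probabilistic legitimacy of ``conditioning the walk to remain in $W_n$ for all time'': one must know that $h_n$ is a genuine regular (indeed the minimal) positive harmonic function for the killed walk, so that $\lim_{s\to\infty}\Pb(\,\cdot\mid T>s)$ exists and coincides with the $h_n$-transform, rather than merely writing down the $h$-transform formula formally. For the one-sided walks considered here this is exactly the content borrowed from \cite{KOR02}; granted that, the transition-probability identity is the elementary consequence sketched above.
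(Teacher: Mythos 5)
Your proof is correct and follows essentially the same route as the paper: the paper also obtains \eqref{eqTransProbaCharlier} by writing the Charlier process as the Doob $h$-transform of the killed walk with $h_n=\Delta_n$ (citing~\cite{KOR02} for harmonicity and the legitimacy of the conditioning) and then substituting Karlin--McGregor's determinant~\cite{KM59} for the killed kernel. The only difference is that you spell out the antisymmetry/degree argument for $L_n\Delta_n=0$ and the path-switching cancellation, which the paper delegates to the cited references.
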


This conditioned process has the generator $L^h_n$ given by
\begin{equation}\label{eqGenCharlier}
(L^h_n f)(x)=\frac{1}{h_n(x)} (L_n (h_n f))(x).
\end{equation}
One clearly sees the similarity between (\ref{eqGenCharlier}) and (\ref{eqHTransformForDBM}).

The reason for the name Charlier process is the following. Consider the initial condition $x^n(t=0)=x^*\equiv(0,1,\ldots,n-1)$. Then, one can show (see Proposition~3.3 of~\cite{KOR02}) that
\begin{equation}\label{eqDensityCharlier}
\Pb_{n,t}(x^*,x) = \const\, \Delta_n(x)^2\prod_{i=1}^n\frac{e^{-t} t^{x_i}}{x_i!},
\end{equation}
for some normalization constant $\const$. A measure of the form (\ref{eqDensityCharlier}) has, as explained in Section~\ref{SectLmeasures}, determinantal correlation functions expressed in terms of Charlier orthogonal polynomials.

\subsubsection*{A discrete time version}
The discrete time analogue is obtained by setting the one-step transition probability of one-particle by\footnote{For a set $S$, by $\frac{1}{2\pi\I}\oint_{\Gamma_S} \dx w f(w)$ we mean the contour integral where the contour can be taken to be any anticlockwise oriented simple path containing all the points in the set $S$ but no other poles of the function $f$.}
\begin{equation}\label{eq2.11}
P(x,y)= \frac{1}{2\pi\I}\oint_{\Gamma_0} \dx w \frac{1-p+p w^{-1}}{w^{x-y+1}}
=\left\{
         \begin{array}{ll}
           p, & \textrm{if }y=x+1, \\
           1-p, & \textrm{if }y=x, \\
           0, & \textrm{otherwise}.
         \end{array}
       \right.
\end{equation}
This discrete time analogue of the Charlier process has the transition probability given by
\begin{equation}
p_t(x)=(P^t)(0,x)=\left(\begin{array}{c}t\\x\end{array}\right)p^x (1-p)^{t-x}\Id_{[0\leq x\leq t]}
\end{equation}
for $t\in\N$. The Charlier process is then recovered by replacing $t$ by $t/p$ and taking the $p\to 0$ limit.

Then Proposition~\ref{propCharlier} is still valid and becomes.
\begin{prop}\label{propDiscreteCharlier}
The discrete time analogue of the Charlier process has the one-step transition probability given by
\begin{equation}\label{eq2.14}
P_{n}(x,y)=\frac{\Delta_n(y)}{\Delta_n(x)}\det\left(P(x_i,y_j)\right)_{i,j=1}^n
\end{equation}
with $P$ as in (\ref{eq2.11}), $x,y\in W_n$, and $\Delta_n$ the Vandermonde determinant.
\end{prop}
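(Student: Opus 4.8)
The plan is to derive Proposition~\ref{propDiscreteCharlier} exactly as Proposition~\ref{propCharlier} was derived, simply replacing the continuous-time ingredients (Karlin–McGregor for the Poisson walk, the harmonicity $L_n h_n = 0$) by their one-step discrete-time analogues. The two facts I need are: (i) the Vandermonde $h_n = \Delta_n$ is harmonic for the $n$-fold product of the one-step kernel $P$ of \eqref{eq2.11}, in the sense that $\sum_y \prod_i P(x_i, y_i)\, \Delta_n(y) = \Delta_n(x)$ for $x \in W_n$; and (ii) the Karlin–McGregor / Lindström–Gessel–Viennot identity for the walk killed upon leaving $W_n$, namely that $\Pb(x^n(1) = y,\ T > 1 \mid x^n(0) = x) = \det(P(x_i, y_j))_{i,j=1}^n$ for $x, y \in W_n$. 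Granting these, the Doob $h$-transform of the free $n$-particle walk conditioned to stay in $W_n$ forever has one-step kernel
\begin{equation*}
P_n(x,y) = \frac{h_n(y)}{h_n(x)}\, \Pb(x^n(1) = y,\ T > 1 \mid x^n(0) = x) = \frac{\Delta_n(y)}{\Delta_n(x)} \det\big(P(x_i,y_j)\big)_{i,j=1}^n,
\end{equation*}
which is exactly \eqref{eq2.14}.

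For step (ii), the argument is the standard reflection/sign-reversing-involution proof of Karlin–McGregor: expand the determinant $\det(P(x_i,y_j))$ over permutations; since $P$ is a nearest-neighbour-to-the-right kernel in one step, any family of $n$ one-step paths from $x$ to a permutation $\sigma$ of $y$ that is \emph{not} the identity permutation must have two paths sharing a site (indeed for one step, with $x$ and $y$ both strictly ordered, only $\sigma = \mathrm{id}$ can be realised by disjoint monotone steps), and the usual crossing involution cancels all such terms pairwise with opposite sign. What survives is the identity permutation with non-crossing (here: automatically vertex-disjoint) paths, which is precisely the event $\{x^n(1) = y, T > 1\}$; its probability is $\prod_i P(x_i, y_i)$, matching the diagonal term. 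In fact, since we only take one step, the verification is almost trivial: $x^n(1) = y$ with $T>1$ forces $y$ coordinate-wise obtained from $x$ by a $0/1$ increment that keeps strict order, and one checks directly that off-diagonal permutations contribute zero. For step (i), harmonicity of $\Delta_n$ can be checked from the contour-integral representation in \eqref{eq2.11}, or more slickly: the free one-particle kernel $P$ has moment generating structure $\sum_y P(x,y) z^y = z^x (1 - p + p z)$, and applying this to the Vandermonde written as $\Delta_n(y) = \det(y_i^{j-1})$ — expanded into a linear combination of monomials $\prod_i y_i^{\alpha_i}$ — one sees each symmetric operation $\prod_i(\text{shift in } x_i)$ acts diagonally up to lower-order terms, with the top-degree part reproducing $\Delta_n(x)$ and the rest cancelling by antisymmetry; equivalently this is the statement that $P$ preserves the space of polynomials of each fixed degree and acts unipotently, so it fixes the unique (up to scalar) antisymmetric polynomial of minimal degree $n(n-1)/2$.

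The main obstacle, such as it is, is purely bookkeeping: making precise that the Doob $h$-transform is legitimate, i.e.\ that $h_n > 0$ on $W_n$ (clear, since $\Delta_n(x) > 0$ for strictly increasing integer vectors) and that $h_n$ is \emph{exactly} harmonic for the \emph{killed} kernel $x \mapsto \det(P(x_i,y_j))$ rather than merely for the free kernel — but this follows because, thanks to step (ii), the killed kernel and the free symmetric kernel agree on $W_n \times W_n$ as they only differ by permutation terms that vanish there. With that in hand the normalisation $\sum_y P_n(x,y) = 1$ and the Markov property are immediate, and \eqref{eq2.14} follows. I would present step (ii) first (it is where the structure lives), then deduce harmonicity of $\Delta_n$ for the killed kernel as a corollary, then assemble the $h$-transform.
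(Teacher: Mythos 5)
Your proposal is correct and follows essentially the same route as the paper, which presents Proposition~\ref{propDiscreteCharlier} as the verbatim discrete-time analogue of Proposition~\ref{propCharlier}: define the process as the Doob $h$-transform of the free walk with $h_n=\Delta_n$, use the (here one-step, nearly trivial) Karlin--McGregor identity to write the killed kernel as $\det(P(x_i,y_j))$, and invoke harmonicity of the Vandermonde. Your explicit verifications of the one-step determinant identity and of the harmonicity (noting that configurations reachable in one step outside $W_n$ have coinciding coordinates, so $\Delta_n$ vanishes there) supply exactly the details the paper leaves implicit.
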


\subsection{The interlacing Markov link}\label{sectLevelLinks}
Now we consider a Markov link between levels of $GT_N$ that generates the uniform measure on $GT_N$ given the value $x^N$ of the level $N$, i.e., with \mbox{$x^N=(x_1^N<x_2^N<\ldots<x_N^N)$} fixed. It can be shown, see Corollary~\ref{CorA4}, that
\begin{equation}\label{eq2.12}
\textrm{\# of $GT_N$ patterns with given $x^N$}=\prod_{1\leq i < j \leq N}\frac{x^N_j-x_i^N}{j-i} = \frac{\Delta_N(x^N)}{\prod_{n=1}^{N-1}n!}.
\end{equation}
Thus, the uniform measure on $GT_N$ given $x^N$ can be obtained by setting
\begin{equation}
\Pb(x^{N-1}\, | \, x^N)=\frac{\textrm{\# of $GT_{N-1}$ patterns with given $x^{N-1}$}}{\textrm{\# of $GT_N$ patterns with given $x^N$}}\Id_{[x^{N-1}\prec x^N]}.
\end{equation}
Using (\ref{eq2.12}) we obtain
\begin{equation}
\Pb(x^{N-1}\, | \, x^N) = (N-1)! \frac{\Delta_{N-1}(x^{N-1})}{\Delta_N(x^N)}\Id_{[x^{N-1}\prec x^N]}.
\end{equation}
Consequently, let us define the Markov link between level $n$ and $n-1$ by
\begin{equation}\label{eqMarkovLink}
\Lambda^n_{n-1}(x^n,x^{n-1}):=(n-1)!\frac{\Delta_{n-1}(x^{n-1})}{\Delta_n(x^n)}\Id_{[x^{n-1}\prec x^n]}.
\end{equation}
for $n=2,\ldots,N$. Then, the measure on $GT_N$ given $x^N$ is given by
\begin{equation}\label{eq2.13}
\prod_{n=2}^N (n-1)!\frac{\Delta_{n-1}(x^{n-1})}{\Delta_n(x^n)}\Id_{[x^{n-1}\prec x^n]} = \frac{\prod_{n=1}^{N-1}n!}{\Delta_N(x^N)} \Id_{[x^1\prec x^2\prec\ldots\prec x^N]},
\end{equation}
i.e., it is the uniform measure on $GT_N$ given $x^N$ by (\ref{eq2.12}).

There is an important representation of the interlacing through determinants. This will be relevant when studying more in details the correlation functions, see Section~\ref{SectLmeasures}.
\begin{lem}\label{LemInterlacing}
Let $x^N\in W_N$ and $x^{N-1}\in W_{N-1}$ be ordered configurations in the Weyl chambers. Then, setting $x_N^{N-1}\equiv {\rm virt}$ a ``virtual variable'', we have
\begin{equation}\label{eqInterlacingDeterminants}
\Id_{[x^{N-1}\prec x^N]} = \pm \det(\phi(x_i^{N-1},x_j^N))_{i,j=1}^N,
\end{equation}
with $\phi(x,y)=\Id_{[y\geq x]}=\frac{1}{2\pi\I}\oint_{\Gamma_0} \dx w \frac{(1-w)^{-1}}{w^{y-x+1}}$ and $\phi({\rm virt},y)=1$ (the $\pm$ sign depends on the size of the matrix).
\end{lem}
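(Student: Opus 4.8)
The plan is to reduce the $N\times N$ determinant to an $(N-1)\times(N-1)$ one by elementary column operations, and then to recognise the reduced matrix as being the identity precisely when the interlacing holds, and singular otherwise. Write $M$ for the matrix with entries $M_{ij}=\phi(x_i^{N-1},x_j^N)$, $i,j=1,\dots,N$, so that $M_{ij}=\Id_{[x_j^N\ge x_i^{N-1}]}$ for $i\le N-1$ and $M_{Nj}=\phi(\mathrm{virt},x_j^N)=1$.

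First I would exploit that $x_1^N<x_2^N<\dots<x_N^N$: for each $i\le N-1$ the $i$-th row of $M$ is a non-decreasing $0$--$1$ sequence in $j$, while the last row is identically $1$. Apply the column operations $C_j\mapsto C_j-C_{j-1}$ for $j=N,N-1,\dots,2$ (in this order, so that one always subtracts a not-yet-modified column); these leave the determinant unchanged. In the resulting matrix $M'$ the last row becomes $(1,0,\dots,0)$, and for $i\le N-1$, $j\ge2$ one has
\begin{equation}
M'_{ij}=\Id_{[x_j^N\ge x_i^{N-1}]}-\Id_{[x_{j-1}^N\ge x_i^{N-1}]}=\Id_{[x_{j-1}^N<x_i^{N-1}\le x_j^N]}.
\end{equation}
Expanding $\det M=\det M'$ along the last row (whose only nonzero entry sits in position $(N,1)$, with cofactor sign $(-1)^{N+1}$) yields
\begin{equation}
\det\big(\phi(x_i^{N-1},x_j^N)\big)_{i,j=1}^N=(-1)^{N+1}\det\big(\widetilde M\big),\qquad \widetilde M_{ij}=\Id_{[x_j^N<x_i^{N-1}\le x_{j+1}^N]},
\end{equation}
where now $i,j$ range over $1,\dots,N-1$.

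It remains to analyse $\widetilde M$. Since the half-open intervals $(x_j^N,x_{j+1}^N]$, $j=1,\dots,N-1$, are pairwise disjoint, each row of $\widetilde M$ contains at most one $1$; hence $\det\widetilde M\neq0$ forces every row to contain exactly one $1$, say in column $\sigma(i)$, while equal rows are excluded, so $\sigma$ is injective, hence a bijection of $\{1,\dots,N-1\}$, and $\det\widetilde M=\sgn(\sigma)$. Finally, because $x^{N-1}$ is strictly increasing and the intervals are indexed in increasing order, $i<i'$ forces $\sigma(i)<\sigma(i')$: otherwise $\sigma(i)\ge\sigma(i')+1$, whence $x_i^{N-1}>x_{\sigma(i)}^N\ge x_{\sigma(i')+1}^N\ge x_{i'}^{N-1}$, contradicting $x_i^{N-1}<x_{i'}^{N-1}$. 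Thus the only admissible $\sigma$ is the identity, i.e.\ $\det\widetilde M\neq0$ if and only if $x_i^N<x_i^{N-1}\le x_{i+1}^N$ for all $i=1,\dots,N-1$, which is exactly $x^{N-1}\prec x^N$; and in that case $\widetilde M$ is the $(N-1)\times(N-1)$ identity, so the determinant equals $(-1)^{N+1}$. The contour-integral expression for $\phi$ is separate and elementary: the coefficient of $w^{-1}$ in $(1-w)^{-1}w^{-(y-x+1)}=\sum_{k\ge0}w^{\,k-(y-x+1)}$ equals $\Id_{[y\ge x]}$.

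There is no deep obstacle here; the only step requiring a little care is the bookkeeping in the column reduction — that the virtual row collapses to $(1,0,\dots,0)$ and that the overall sign comes out as $(-1)^{N+1}$ — after which the disjointness of the intervals $(x_j^N,x_{j+1}^N]$ immediately forces the permutation-matrix structure of $\widetilde M$ and the equivalence with interlacing.
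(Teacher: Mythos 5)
Your proof is correct, and it takes a somewhat different (more systematic) route than the paper's. The paper argues by direct inspection: when $x^{N-1}\prec x^N$, the matrix $(\phi(x_i^{N-1},x_j^N))_{i,j=1}^N$ is triangular with $1$'s on the diagonal (literally so only after moving the virtual row to the top, which is exactly where the $\pm$ sign comes from), and when interlacing is violated one exhibits two equal rows or columns, so the determinant vanishes. You instead do the column differencing $C_j\mapsto C_j-C_{j-1}$, expand along the virtual row, and reduce to the $(N-1)\times(N-1)$ matrix $\widetilde M_{ij}=\Id_{[x_j^N<x_i^{N-1}\le x_{j+1}^N]}$, whose analysis (disjointness of the intervals $(x_j^N,x_{j+1}^N]$ plus strict monotonicity of $x^{N-1}$ forcing the permutation $\sigma$ to be the identity) gives both directions simultaneously. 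What your version buys: the sign is pinned down explicitly as $(-1)^{N+1}$, consistent with the paper's ``$\pm$ depending on the size,'' and the vanishing direction is proved completely, including the degenerate case of a zero row (e.g.\ when some $x_i^{N-1}>x_N^N$), which the paper's one-line ``two equal rows or columns'' glosses over; what the paper's version buys is brevity. Your residue computation verifying $\phi(x,y)=\Id_{[y\geq x]}$ from the contour integral is also correct.
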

\begin{proof}
The proof is quite easy. For $x^{N-1}\prec x^N$ one sees that the matrix on the r.h.s.~of (\ref{eqInterlacingDeterminants}) is triangular with $1$ on the diagonal. Further, by violating the interlacing conditions, one gets two rows or columns which are equal, so that the determinant is equal to zero.
\end{proof}

\subsection{Intertwining of the Markov chains}\label{sectIntertwining}
%\note{For the lecture present the $n=2$ case only on the blackboard.}
In Section~\ref{sectCharlier} we have described a continuous time Markov chain living on the Weyl chamber $W_n$, for any $n\in\N$, while in Section~\ref{sectLevelLinks} we have constructed a Markov link between states in $W_n$ and $W_{n-1}$ by the transition kernel (\ref{eqMarkovLink}). In this section we want to define a Markov chain on $W_n\times W_{n-1}$ such that its projections on $W_n$ and $W_{n-1}$ are Charlier processes and projection at fixed time is the process given by $\Lambda^n_{n-1}$. This can then be easily extended to a Markov chain on the whole $GT_N$. It is simpler to understand the construction in discrete time. Therefore we will first do it for the discrete time analogue of the Charlier process. We can then take the continuous limit afterwards on the main statement. The construction discussed here is a particular case of the one in~\cite{BF08,DF90}.

The key property that will allow us to define such a dynamic is the \textbf{intertwining relation}
\begin{equation}\label{eqIntertwining}
\Delta^{n}_{{n-1}}:=P_{n} \Lambda^n_{n-1} = \Lambda^n_{n-1} P_{n-1}\quad 2\leq n \leq N.
\end{equation}
In our specific case, to see that (\ref{eqIntertwining}) one uses the Fourier representation for $P_n$ (see (\ref{eq2.11})) and of $\Lambda^n_{n-1}$ (see Lemma~\ref{LemInterlacing}). The intertwining relation can be obtained quite generically when the transition matrices are translation invariant, see Proposition 2.10 of~\cite{BF08} for a detailed statement. See Appendix~\ref{AppToeplitz} for more details.

\subsubsection{Construction in discrete time}
Let us now explain the generic construction. Let $P_n$ be the transition probability of a Markov chain in $S_n$ and let $\Lambda^n_{n-1}$ be a Markov link between $S_n$ and $S_{n-1}$ satisfying the intertwining condition (\ref{eqIntertwining}), illustrated in Figure~\ref{figIntertwining}.
\begin{figure}[t!]
\begin{center}
  \psfrag{L}[l]{$\Lambda^{n}_{n-1}$}
  \psfrag{Pn}[c]{$P_n$}
  \psfrag{Pnm1}[c]{$P_{n-1}$}
  \psfrag{Sn}[c]{$S_n$}
  \psfrag{Snm1}[c]{$S_{n-1}$}
  \psfrag{C}[c]{$\circlearrowleft$}
  \includegraphics[height=3cm]{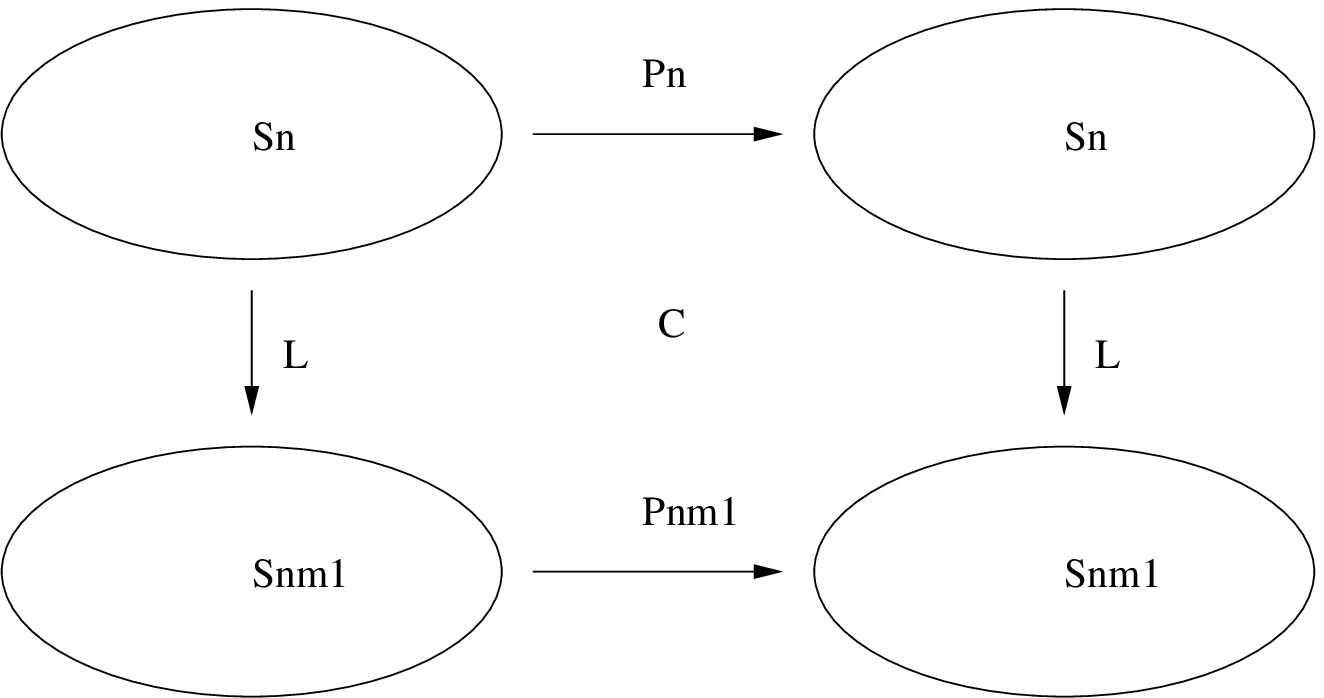}
\caption{Intertwinings.}
\label{figIntertwining}
\end{center}
\end{figure}

Denote by
\begin{equation}
S_\Lambda^n=\{X^n=(x^1,\ldots,x^n)\in S_1\times\cdots\times S_n\, | \, \Lambda^k_{k-1}(x^k,x^{k-1})\neq 0, 2\leq k \leq n\},
\end{equation}
the set of allowed configurations. In our special case, $S_n=W_n$ and $S_\Lambda^n$ is nothing else than $GT_n$.

Define the transition probabilities of a Markov chain on $S_\Lambda^n$ by (we use the notation $X^n=(x^1,\ldots,x^n)$ and $Y^n=(y^1,\ldots,y^n)$)
\begin{multline}\label{eqMConGTn}
P_{\Lambda}^n(X^n,Y^n)\\
=\left\{
\begin{array}{ll}
P_{1}(x^1,y^1)\prod_{k=2}^n\frac{P_{k}(x^k,y^k)\Lambda^k_{k-1}(y^k,y^{k-1})}{\Delta^k_{k-1}(x^k,y^{k-1})}, & \prod_{k=2}^n\Delta^k_{k-1}(x^k,y^{k-1})>0, \\
0, & \textrm{otherwise}.
\end{array}
\right.
\end{multline}
One can think of $P_{\Lambda}^n$ as follows. Starting from $X^n=(x^1,\ldots,x^n)$, we first choose $y^1$ according to the transition matrix $P_{1}(x^1,y^1)$, then choose $y^2$ using $\frac{P_{2}(x^2,y^2)\Lambda^2_{1}(y^2,y^{1})}{\Delta^2_{1}(x^2,y^{1})}$, which is the conditional distribution of the middle point in the successive application of $P_{2}$ and $\Lambda^2_1$, provided that we start at $x^2$ and finish at $y^1$. After that we choose $y^3$ using the conditional distribution of the middle point in the successive application of $P_{3}$ and $\Lambda^3_2$ provided that we start at $x^3$ and finish at $y^2$, and so on. This is called \emph{sequential update}.

With our specific choice of $P_n$'s and $\Lambda^n_{n-1}$'s the dynamics is the following:
\begin{itemize}
  \item[(a)] $x_1^1$ just performs a one-sided random walk (with jump probability $p$).
  \item[(b1)] $x_1^2$ performs a one-sided random walk but the jumps leading to $x_1^2=x_1^1$ are suppressed (we say that $x_1^2$ is blocked by $x_1^1$).
  \item[(b2)] $x_2^2$ performs one-sided random walk but the jumps leading to $x_2^2=x_1^1$ are forced to happen (we say that $x_2^2$ is pushed by $x_1^1$).
  \item[(c)] Similarly, $x_k^n$ is blocked by $x_k^{n-1}$ and is pushed by $x_{k-1}^{n-1}$ (whenever they exists).
\end{itemize}

\subsubsection{A class of conserved measures}\label{sectConsDetMeas}
Given the Markov chain on $S_\Lambda^n$ described above, it is of interest to know which class of measures are conserved by the time evolution. Here is such a class.

\begin{prop}\label{PropConservationMeasureType}
Let $\mu_n(x^n)$ a probability measure on $S_n$. Consider the evolution of the measure
\begin{equation}\label{eq2.20}
M_n(X^n)=\mu_n(x^n)\Lambda^n_{n-1}(x^n,x^{n-1})\Lambda^{n-1}_{n-2}(x^{n-1},x^{n-2})\cdots\Lambda^2_1(x^2,x^1)
\end{equation}
on $S_\Lambda^n$ under the Markov chain $P^n_{\Lambda}$. Then the measure at time $t$ is given by
\begin{multline}
(M_n \underbrace{P^n_{\Lambda}\cdots P^n_{\Lambda}}_{t\textrm{ times}})(Y^n)\\
=(\mu_n \underbrace{P_n\cdots P_n}_{t\textrm{ times}})(y^n)\Lambda^n_{n-1}(y^n,y^{n-1})\Lambda^{n-1}_{n-2}(y^{n-1},y^{n-2})\cdots\Lambda^2_1(y^2,y^1).
\end{multline}
\end{prop}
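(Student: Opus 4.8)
The plan is to prove the statement for a single time step $t=1$ and then bootstrap to general $t$ by induction. Indeed, $\mu_n P_n$ is again a probability measure on $S_n$, so once the one-step claim is established the image measure has exactly the same product form as (\ref{eq2.20}), with $\mu_n$ replaced by $\mu_n P_n$; iterating the one-step identity $t$ times then yields the assertion with $\mu_n$ replaced by $\mu_n\underbrace{P_n\cdots P_n}_{t}$. So it remains to show
\[
(M_n P^n_\Lambda)(Y^n)=(\mu_n P_n)(y^n)\,\Lambda^n_{n-1}(y^n,y^{n-1})\Lambda^{n-1}_{n-2}(y^{n-1},y^{n-2})\cdots\Lambda^2_1(y^2,y^1).
\]

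First I would write $(M_n P^n_\Lambda)(Y^n)=\sum_{X^n}M_n(X^n)P^n_\Lambda(X^n,Y^n)$ and insert the definitions (\ref{eq2.20}) of $M_n$ and (\ref{eqMConGTn}) of $P^n_\Lambda$. The factors $\Lambda^k_{k-1}(y^k,y^{k-1})$, $k=2,\dots,n$, do not depend on $X^n$, so they pull out of the sum; what remains is
\[
\Big(\textstyle\prod_{k=2}^n\Lambda^k_{k-1}(y^k,y^{k-1})\Big)\sum_{x^1,\dots,x^n}\mu_n(x^n)\Big(\textstyle\prod_{k=2}^n\Lambda^k_{k-1}(x^k,x^{k-1})\Big)P_1(x^1,y^1)\prod_{k=2}^n\frac{P_k(x^k,y^k)}{\Delta^k_{k-1}(x^k,y^{k-1})}.
\]
Then I would carry out the sums over $x^1,x^2,\dots,x^{n-1}$ in this order. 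The only $x^1$-dependent factors are $\Lambda^2_1(x^2,x^1)P_1(x^1,y^1)$, and by the intertwining relation (\ref{eqIntertwining}) we have $\sum_{x^1}\Lambda^2_1(x^2,x^1)P_1(x^1,y^1)=(\Lambda^2_1 P_1)(x^2,y^1)=\Delta^2_1(x^2,y^1)$, which cancels the denominator of the $k=2$ term and frees a factor $P_2(x^2,y^2)$ now multiplying $\Lambda^3_2(x^3,x^2)$. Repeating with $x^2$ (using $\sum_{x^2}\Lambda^3_2(x^3,x^2)P_2(x^2,y^2)=\Delta^3_2(x^3,y^2)$), then with $x^3$, and so on, each step cancels exactly one denominator. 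After the sum over $x^{n-1}$ one is left with $\sum_{x^n}\mu_n(x^n)P_n(x^n,y^n)=(\mu_n P_n)(y^n)$, and reinstating the extracted $\Lambda$-factors gives precisely the displayed right-hand side.

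The only point that requires genuine care — and where the main bookkeeping lies — is the interplay of this telescoping with the ``otherwise $0$'' branch in the definition (\ref{eqMConGTn}) of $P^n_\Lambda$, i.e.\ with the support conditions $\Delta^k_{k-1}(x^k,y^{k-1})>0$. One has to check that the summands for which some $\Delta^k_{k-1}(x^k,y^{k-1})$ vanishes contribute zero and can be dropped before dividing, and that discarding them removes no term needed to rebuild $\Delta^{k+1}_{k}(x^{k+1},y^k)$ at the next step. This is exactly the kind of verification carried out in the general construction of \cite{BF08,DF90}, to which I would refer for the details. Modulo this, the proof is the routine telescoping computation above together with the induction on $t$.
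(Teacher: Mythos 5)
Your proposal is correct and follows essentially the same route as the paper's own proof: reduce to $t=1$ (the paper simply notes ``it is enough to prove it for $t=1$''), pull out the $y$-dependent $\Lambda$-factors, and telescope the sums over $x^1,\dots,x^{n-1}$ using the intertwining relation $\Lambda^k_{k-1}P_{k-1}=\Delta^k_{k-1}$ to cancel each denominator in turn, leaving $(\mu_n P_n)(y^n)$. The only difference is that you explicitly flag the handling of the ``otherwise $0$'' branch of $P^n_\Lambda$ (deferring to~\cite{BF08,DF90}), a point the paper's proof passes over silently in the same telescoping computation.
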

\begin{proof} It is enough to prove it for $t=1$.
The measure (\ref{eq2.20}) evolved by $P^n_{\Lambda}$ is given by
\begin{equation}\label{eq2.24}
\sum_{x^1,\ldots,x^n} \mu_n(x^n)\prod_{k=2}^n \Lambda^k_{k-1}(x^k,x^{k-1}) P_{1}(x^1,y^1)\prod_{k=2}^n\frac{P_{k}(x^k,y^k)\Lambda^k_{k-1}(y^k,y^{k-1})}{\Delta^k_{k-1}(x^k,y^{k-1})}.
\end{equation}
By the intertwining property (\ref{eqIntertwining}), it holds $\Lambda^2_1 P_{1}=\Delta^2_{1}$ so that
\begin{equation}
\sum_{x^1} \Lambda^2_1(x^2,x^1) P_{1}(x^1,y^1) = \Delta^2_{1}(x^2,y_1).
\end{equation}
This term cancels the denominator for $k=2$ of the last term in (\ref{eq2.24}). Similarly, applying sequentially the sums over $x^2, x^3,\ldots,x^{n-1}$ we obtain
\begin{equation}
\begin{aligned}
(\ref{eq2.24})&=\sum_{x^n}\mu_n(x^n)P_{n}(x^n,y^n)\prod_{k=2}^n\Lambda^k_{k-1}(y^k,y^{k-1})\\
&=(\mu_n P_{n})(y^n)\prod_{k=2}^n\Lambda^k_{k-1}(y^k,y^{k-1}),
\end{aligned}
\end{equation}
that is the claimed result.
\end{proof}

In particular, if we consider the measure $\mu_n$ given by (\ref{eqDensityCharlier}) and $\Lambda^n_{n-1}$ as in (\ref{eqMarkovLink}) then the measure (\ref{eq2.20}) turns out to have determinantal correlations (see Section~\ref{SectLmeasures}). This nice property is conserved by the time evolution.

The next question is to determine the joint measure at different times and different levels, see Figure~\ref{FigDetStructure} for an illustration.
\begin{figure}[t!]
\begin{center}
\psfrag{0}[r]{$0$}
\psfrag{n2}[r]{$n_2$}
\psfrag{n1}[r]{$n_1$}
\psfrag{t0}[c]{$0$}
\psfrag{t1}[c]{$t_1$}
\psfrag{t2}[c]{$t_2$}
\psfrag{112}[r]{$x_1^1(t_2)$}
\psfrag{1n22}[r]{$x_1^{n_2}(t_2)$}
\psfrag{n2n22}{$x_{n_2}^{n_2}(t_2)$}
\psfrag{1n21}[r]{$x_1^{n_2}(t_1)$}
\psfrag{n2n21}{$x_{n_2}^{n_2}(t_1)$}
\psfrag{1n11}[r]{$x_1^{n_1}(t_1)$}
\psfrag{n1n11}{$x_{n_1}^{n_1}(t_1)$}
\psfrag{1n10}[r]{$x_{1}^{n_1}(0)$}
\psfrag{n1n10}{$x_{n_1}^{n_1}(0)$}
\includegraphics[height=6cm]{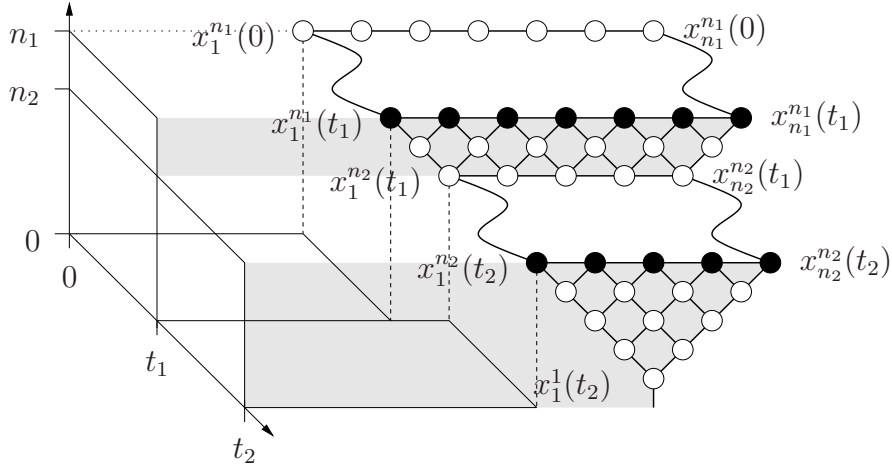}
\caption{A graphical representation of variables entering in Proposition~\ref{PropJointMeasures}, illustrated for $m=2$. The wavy lines represents the time evolution between $0$ and $t_1$ and from $t_1$ to $t_2$. For simplicity we have taken $n_1=n$. The black dots represents the variables we project on.}
\label{FigDetStructure}
\end{center}
\end{figure}

%\note{For the lectures, present $n=2$; explain why summing up the upper variables is trivial; explain how to sum up $x_1^1(t_1)$.}
\begin{prop}[Proposition 2.5 of~\cite{BF08}]\label{PropJointMeasures}
Let $\mu_n(x^n)$ a probability measure on $S_n$. Consider the evolution of the measure
\begin{equation}\label{eq2.28a}
\mu_n(x^n)\Lambda^n_{n-1}(x^n,x^{n-1})\Lambda^{n-1}_{n-2}(x^{n-1},x^{n-2})\cdots\Lambda^2_1(x^2,x^1)
\end{equation}
on $S_\Lambda^n$ under the Markov chain $P^n_{\Lambda}$. Denote by $(x^1(t),\ldots,x^n(t))$ the result at time $t$.
Consider $m$ ``space-like''  points $(n_1,t_1)\prec (n_2,t_2)\prec \cdots \prec (n_m,t_m)$ where
\begin{equation}\label{eqDefPartialOrder}
(n_i,t_i)\prec (n_j,t_j) \iff n_i\geq n_j, t_i\leq t_j, \textrm{and }(n_i,t_i)\neq (n_j,t_j).
\end{equation}
With the notation
\begin{equation}\label{eqDefDelta}
\Delta^n_{m,t}:=(P_{n})^t\Lambda^n_{n-1}\cdots\Lambda^{m+1}_{m}.
\end{equation}
for $n>m\geq 1$, the joint distribution of
\begin{equation}
(x^{n_1}(t_1),\ldots,x^{n_m}(t_m))
\end{equation}
coincides with the stochastic evolution of $\mu_n$ under the transition matrices
\begin{equation}
(\Delta^n_{n_1,t_1},\Delta^{n_1}_{n_2,t_2-t_1},\ldots,\Delta^{n_{m-1}}_{n_m,t_m-t_{m-1}}).
\end{equation}
\end{prop}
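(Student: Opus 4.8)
The plan is to prove the statement by induction on $m$, using Proposition~\ref{PropConservationMeasureType} as the base case and the intertwining relation (\ref{eqIntertwining}) (in its iterated form) as the engine at each step. The key observation is that a ``space-like'' path of projection points can be reached by alternately applying time evolution at a fixed level and Markov links downward in the level index, and that the sequential-update structure of $P^n_\Lambda$ is exactly what makes these two operations commute in the right way. First I would set up the right object to track: at each intermediate stage I claim the joint law of $(x^{n_1}(t_1),\ldots,x^{n_j}(t_j))$ together with the \emph{full} current configuration $X^{n_j}(t_j) = (x^1(t_j),\ldots,x^{n_j}(t_j))$ factorizes as a known marginal for the projected points times the conditional measure of the remaining coordinates, and the conditional part is again of the product-of-$\Lambda$'s form appearing in (\ref{eq2.20}). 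This is the natural inductive hypothesis, because Proposition~\ref{PropConservationMeasureType} tells us precisely that the $\Lambda$-layered form is preserved under $P^n_\Lambda$.

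Second, I would carry out the inductive step in two sub-steps. (i) Fix the level $n_j$ and run the dynamics from $t_j$ to $t_{j+1}$: by Proposition~\ref{PropConservationMeasureType} the marginal on level $n_j$ evolves by $(P_{n_j})^{t_{j+1}-t_j}$ and the layered conditional structure below level $n_j$ is conserved, while the joint law with the already-recorded past points $(x^{n_1}(t_1),\ldots,x^{n_j}(t_j))$ evolves by the appropriate $\Delta$-kernel acting only on the level-$n_j$ component — here one needs that, conditionally on $X^{n_j}(t_j)$, the future of the chain on $S_\Lambda^{n_j}$ is independent of the recorded past, which is just the Markov property of $P^{n_j}_\Lambda$ together with the fact that all recorded points live at levels $\ge n_j$ and earlier times, i.e. are functions of the trajectory up to time $t_j$. (ii) Now pass from level $n_j$ down to level $n_{j+1}\le n_j$ at the fixed time $t_{j+1}$: the layered form $\mu\,\Lambda^{n_j}_{n_j-1}\cdots\Lambda^{n_{j+1}+1}_{n_{j+1}}\cdots$ already contains the links $\Lambda^{n_j}_{n_j-1}\cdots\Lambda^{n_{j+1}+1}_{n_{j+1}}$, so summing out the coordinates $x^{n_{j+1}+1},\ldots,x^{n_j}$ at time $t_{j+1}$ literally applies these links and produces the marginal on level $n_{j+1}$, again in layered form, and composing with the kernel from sub-step (i) yields exactly $\Delta^{n_{j-1}}_{n_j,t_j-t_{j-1}}$ followed by $\Lambda^{n_j}_{n_j-1}\cdots\Lambda^{n_{j+1}+1}_{n_{j+1}}$, then $(P_{n_{j+1}})^{\,0}$ — matching the definition (\ref{eqDefDelta}) of $\Delta^{n_j}_{n_{j+1},t_{j+1}-t_j}$ once we absorb the time steps correctly. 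Unwinding the composition over all $j$ gives the claimed list of transition matrices $(\Delta^n_{n_1,t_1},\Delta^{n_1}_{n_2,t_2-t_1},\ldots)$.

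The step I expect to be the main obstacle is sub-step (i): making rigorous the claim that the joint law of the \emph{already-projected} points with the current full configuration evolves as a Markov chain whose kernel acts ``diagonally'' on the top coordinate. Concretely, one must show that conditionally on $x^{n_j}(t_j)$ the variables $x^{n_1}(t_1),\ldots,x^{n_j}(t_j)$ and the future trajectory are conditionally independent, \emph{and} that the recorded projected points are themselves a Markov chain in the claimed kernels — this is where the space-like ordering (\ref{eqDefPartialOrder}) is essential (it guarantees each new point is at a level no higher and a time no earlier, so it is measurable with respect to the sigma-algebra generated by level $n_j$ up to time $t_j$, which by the layered-measure structure is exactly the information that decouples past and future). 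The cleanest way I can see to organize this is to prove a slightly stronger statement by induction — the joint law of \emph{all} recorded points together with the current configuration, in an explicit layered-Markov form — rather than just the marginal on the recorded points; the bookkeeping of which $\Lambda$'s and which powers of $P$ land where is then routine but must be done carefully to land precisely on (\ref{eqDefDelta}). Everything else (the base case, the commutation of summing-out links with applying $P^n_\Lambda$) is a direct appeal to Proposition~\ref{PropConservationMeasureType} and the intertwining (\ref{eqIntertwining}).
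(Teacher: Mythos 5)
Your proposal is correct and is essentially the paper's own argument: the paper proves the case $m=2$ by writing the explicit layered joint density at times $t_1,t_2$ (via Proposition~\ref{PropConservationMeasureType} and the sequential-update form of $P^n_\Lambda$) and then summing out all unobserved levels, invoking the intertwining (\ref{eqIntertwining}) at the final step to convert $\Lambda^{n_1}_{n_1-1}\cdots\Lambda^{n_2+1}_{n_2}(P_{n_2})^{t_2-t_1}$ into $\Delta^{n_1}_{n_2,t_2-t_1}$, and notes that general $m$ is the same computation --- so your induction with the strengthened ``recorded points plus layered current configuration'' hypothesis is precisely what the paper's straightforward extension amounts to. The conditional-independence step you flag as the main obstacle is sidestepped in the paper by marginalizing the explicit joint density directly, with the autonomy of the lower-level sub-chain (noted in the text before Proposition~\ref{PropMeasureContTime}) playing the role of your decoupling argument, so your plan requires no ingredient beyond those you list.
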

\begin{proof}
We write the detailed proof for $m=2$. Its extension to generic $m$ is straightforward but a bit lengthly in the notations.
By Proposition~\ref{PropConservationMeasureType} and the definition of the transition probability (\ref{eqMConGTn}), the joint measure at times $t_1$ and $t_2$ is given by
\begin{equation}\label{eq2.27}
\begin{aligned}
&(\mu_n (P_{n})^{t_1})(x^n(t_1))\prod_{k=2}^n\Lambda^k_{k-1}(x^k(t_1),x^{k-1}(t_1))\\
\times & (P_{1})^{t_2-t_1}(x^1(t_1),x^1(t_2))
\prod_{k=2}^n\frac{(P_{k})^{t_2-t_1}(x^k(t_1),x^k(t_2))\Lambda^k_{k-1}(x^k(t_2),x^{k-1}(t_2))}{\Delta^k_{k-1,t_2-t_1}(x^k(t_1),x^{k-1}(t_2))}.
\end{aligned}
\end{equation}
Summing (\ref{eq2.27}) over $x^{n_2+1}(t_2),\ldots,x^n(t_2)$ it results into
\begin{equation}\label{eq2.28}
\begin{aligned}
&(\mu_n (P_{n})^{t_1})(x^n(t_1))\prod_{k=2}^n\Lambda^k_{k-1}(x^k(t_1),x^{k-1}(t_1))\\
\times & (P_{1})^{t_2-t_1}(x^1(t_1),x^1(t_2))
\prod_{k=2}^{n_2}\frac{(P_{k})^{t_2-t_1}(x^k(t_1),x^k(t_2))\Lambda^k_{k-1}(x^k(t_2),x^{k-1}(t_2))}{\Delta^k_{k-1,t_2-t_1}(x^k(t_1),x^{k-1}(t_2))}.
\end{aligned}
\end{equation}
Then summing (\ref{eq2.28}) over $x^{n_1+1}(t_1),\ldots,x^n(t_1)$ and using the definition (\ref{eqDefDelta}) we have
\begin{equation}\label{eq2.29}
\begin{aligned}
&(\mu_n \Delta^n_{n_1,t_1})(x^{n_1}(t_1)) \prod_{k=2}^{n_1}\Lambda^k_{k-1}(x^k(t_1),x^{k-1}(t_1)) \\
\times & (P_{1})^{t_2-t_1}(x^1(t_1),x^1(t_2))
\prod_{k=2}^{n_2}\frac{(P_{k})^{t_2-t_1}(x^k(t_1),x^k(t_2))\Lambda^k_{k-1}(x^k(t_2),x^{k-1}(t_2))}{\Delta^k_{k-1,t_2-t_1}(x^k(t_1),x^{k-1}(t_2))}.
\end{aligned}
\end{equation}
By summing (\ref{eq2.29}) over $x^1(t_1),\ldots,x^{n_2-1}(t_1)$ and then $x^1(t_2),\ldots,x^{n_2-1}(t_2)$ we obtain
\begin{equation}\label{eq2.30}
(\mu_n \Delta^n_{n_1,t_1})(x^{n_1}(t_1))\prod_{k=n_2+1}^{n_1}\Lambda^k_{k-1}(x^k(t_1),x^{k-1}(t_1)) (P_{n_2})^{t_2-t_1}(x^{n_2}(t_1),x^{n_2}(t_2)).
\end{equation}
Finally, summing up (\ref{eq2.30}) over $x^{n_2}(t_1),\ldots,x^{n_1-1}(t_1)$ and using (\ref{eqDefDelta}) together with (\ref{eqIntertwining}) we obtain
\begin{equation}
(\mu_n \Delta^n_{n_1,t_1})(x^{n_1}(t_1)) \Delta^{n_1}_{n_2,t_2-t_1}(x^{n_1}(t_1),x^{n_2}(t_2)),
\end{equation}
which is the claimed result for $m=2$.
\end{proof}

\newpage
\subsubsection{Continuous time analogue}
Consider $S_n=W_n$, $P_n$ as (\ref{eq2.14}) and $\Lambda^n_{n-1}$ as in (\ref{eqMarkovLink}). Then, by taking the continuous time limit, we get that Propositions~\ref{PropConservationMeasureType} and~\ref{PropJointMeasures} still holds for the Charlier case with $(P_n)^t$ replaced by $P_{n,t}$ given in (\ref{eqTransProbaCharlier}).

Below we consider the process arising from the Charlier process at level $N$ starting with $x^N(0)=(-N,-N+1,\ldots,-1)$. Interlacing implies that the initial condition of $GT_N$ given this $x^N(0)$ is the deterministic configuration $x_k^n(0)=-n+k$, $1\leq k \leq n \leq N$. Further, since by construction the dynamics of level $n$ does not depends of the evolution of the level above it, then the evolution of level $n$ is a Charlier process at level $n$ itself.

\begin{prop}\label{PropMeasureContTime}
Consider the process arising from the packed initial condition, $x_k^n(0)=-n-1+k$, $1\leq k \leq n \leq n_1$. Let us consider the joint distributions at $m$ ``space-like'' points $(n_1,t_1)\prec (n_2,t_2)\prec \cdots \prec (n_m,t_m)$. For any level $n$ there is are $c(n)=\# \{i| n_i=n\}\in \{0,\ldots,m\}$ consecutive times in $\{t_1,\ldots,t_m\}$, that we denote by $t_0^n<\ldots<t_{c(n)}^n$. Then, the joint distribution of
\begin{equation}
(x^{n_1}(t_1),\ldots,x^{n_m}(t_m))
\end{equation}
is a marginal of the measure
\begin{multline}\label{eq2.40}
\const\, \prod_{n=1}^{n_1}\bigg[\det[\phi(x_i^{n-1}(t_0^{n-1}),x_j^n(t_{c(n)}^n))]_{i,j=1}^n\\
\times \prod_{a=1}^{c(n)}\det[p_{t_a^n-t_{a-1}^n}(x_i^n(t_{a-1}^n),x_j^n(t_a^n))]_{i,j=1}^n\bigg]\\
\times \Delta_{n_1}(x^{n_1}(t_0)^{n_1})\prod_{i=1}^{n_1} \omega_{t_1}(x_i^{n_1}(t_0^{n_1})+n_1).
\end{multline}
with $\omega_t(x)=e^{-t} t^{x}/x!$.
\end{prop}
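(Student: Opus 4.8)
The plan is to read the statement off the two structural results already established --- the conservation of the form $\mu_n\prod_k\Lambda^k_{k-1}$ under the dynamics (Proposition~\ref{PropConservationMeasureType}) and the ``space-like'' propagation of Proposition~\ref{PropJointMeasures} --- and then to substitute the explicit determinantal expressions for the kernels and let the Vandermonde prefactors telescope. I would work directly in continuous time, using that Propositions~\ref{PropConservationMeasureType} and~\ref{PropJointMeasures} hold in the Charlier case with $(P_n)^t$ replaced by $P_{n,t}$ as recorded just above the statement (alternatively, one does the same in discrete time with the kernels of Proposition~\ref{propDiscreteCharlier} and takes the limit at the end). The first step is to identify the state of the whole pattern at the first observation time: the packed initial condition, read off level $n_1$, is the measure $\delta_{x^{n_1}(0)}(x^{n_1})\prod_{k=2}^{n_1}\Lambda^k_{k-1}$ on $GT_{n_1}$ --- interlacing with the packed top row leaves a unique pattern, and the prefactors $(k-1)!\,\Delta_{k-1}/\Delta_k$ in~\eqref{eqMarkovLink} all evaluate to $1$ there --- so it is of the form $\mu_{n_1}\prod_k\Lambda^k_{k-1}$ with $\mu_{n_1}=\delta_{x^{n_1}(0)}$. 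Proposition~\ref{PropConservationMeasureType} then gives that at time $t_1$ the $GT_{n_1}$-measure is $\bigl(\delta_{x^{n_1}(0)}P_{n_1,t_1}\bigr)(x^{n_1})\prod_{k=2}^{n_1}\Lambda^k_{k-1}$, and by~\eqref{eqDensityCharlier} the first factor is the shifted Charlier weight $\const\,\Delta_{n_1}(x^{n_1})^2\prod_i\omega_{t_1}(x^{n_1}_i+n_1)$.

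From $t_1$ onwards one runs the sequential-update chain~\eqref{eqMConGTn}, and the second step is to run through the computation in the proof of Proposition~\ref{PropJointMeasures} while keeping --- rather than summing out --- every variable $x^n(t^n_a)$ with $1\le n\le n_1$ and $0\le a\le c(n)$. The joint law of this whole family is then, before any cancellation, the product of the time-$t_1$ measure above with one transition factor per edge of the space-time lattice traversed by the path: a factor $P_{n,\,t^n_a-t^n_{a-1}}$ for a time edge inside level $n$ and a factor $\Lambda^n_{n-1}$ for a level edge, all divided by the normalizers $\Delta^k_{k-1,\cdot}$ appearing in~\eqref{eqMConGTn}. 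This requires fixing the time labels along the order~\eqref{eqDefPartialOrder} so that the times at level $n$ are indeed $t^n_0<\dots<t^n_{c(n)}$ and the label at which the path enters level $n-1$ is the one at which it left level $n$, i.e.\ $t^{n-1}_0=t^n_{c(n)}$; this is exactly what makes each interlacing link act between equal-time rows and puts the Charlier weight at $t^{n_1}_0$ with elapsed time $t_1$.

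The third step is substitution and telescoping. Insert $P_{n,t}(x,y)=\tfrac{\Delta_n(y)}{\Delta_n(x)}\det\bigl(p_t(y_j-x_i)\bigr)$ from Proposition~\ref{propCharlier}, $\Lambda^n_{n-1}(x^n,x^{n-1})=(n-1)!\,\tfrac{\Delta_{n-1}(x^{n-1})}{\Delta_n(x^n)}\det\bigl(\phi(x^{n-1}_i,x^n_j)\bigr)$ from~\eqref{eqMarkovLink} and Lemma~\ref{LemInterlacing} (with the virtual variable $x^{n-1}_n=\mathrm{virt}$), and the analogous determinantal form of $\Delta^n_{n-1,t}$, whose prefactor is again a ratio of Vandermondes and which one gets from the intertwining~\eqref{eqIntertwining} and Lemma~\ref{LemInterlacing}. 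Every factor $\Delta_n\bigl(x^n(\tau)\bigr)$ then occurs once in the numerator of one adjacent kernel and once in the denominator of the next, and cancels; the telescoping through all levels and times leaves a single uncancelled Vandermonde --- the second power of the one carried by the Charlier weight at the top row --- surviving as $\Delta_{n_1}\bigl(x^{n_1}(t^{n_1}_0)\bigr)$, while at the bottom one only has $\Delta_1\equiv1$. What remains is exactly the double product of $\det(\phi)$'s and $\det(p_t)$'s in~\eqref{eq2.40}, times $\prod_i\omega_{t_1}(x^{n_1}_i(t^{n_1}_0)+n_1)$, with the superfactorials $\prod_k k!$ and the signs from Lemma~\ref{LemInterlacing} absorbed into $\const$; marginalising over the $x^n(t^n_a)$ that are not among the projected variables $x^{n_i}(t_i)$ finishes the proof.

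The work here is organisational rather than conceptual. The two delicate points are: (a) pinning down the time-label bookkeeping along~\eqref{eqDefPartialOrder} so that the $\phi$-determinants genuinely connect equal-time rows and the Charlier weight is placed at the correct time; and (b) --- the one step that truly uses the special form of the kernels rather than just the abstract intertwining scheme --- checking that the normalizers $\Delta^k_{k-1,\cdot}$ in~\eqref{eqMConGTn} cancel exactly against numerators once the determinantal forms are plugged in. Keeping track of the signs produced by Lemma~\ref{LemInterlacing} is a further minor nuisance, and the passage to continuous time (if one starts in discrete time) is routine given the $p\to0$ limit already recorded before the proposition.
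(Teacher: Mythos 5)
Your proposal is correct and follows essentially the same route as the paper: the paper invokes Proposition~\ref{PropJointMeasures} with $\mu_{n_1}$ the (shifted) Charlier measure (\ref{eqDensityCharlier}) and then expands each $\Delta^{n_j}_{n_{j+1},t_{j+1}-t_j}$ using the determinantal forms (\ref{eqTransProbaCharlier}), (\ref{eqMarkovLink}) and Lemma~\ref{LemInterlacing}, which reintroduces exactly the intermediate ``staircase'' variables you propose to keep from the start, and the Vandermonde telescoping leaving the single factor $\Delta_{n_1}(x^{n_1}(t_0^{n_1}))$ is the same. The only difference is organisational (you rerun the cancellation of the normalizers $\Delta^k_{k-1}$ from the proof of Proposition~\ref{PropJointMeasures} while retaining these variables, instead of citing its statement and re-expanding), so the two arguments coincide in substance.
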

\begin{proof}
By the discussion preceding this proposition, we can restrict wlog at $N=n_1$. The measure at time $t$ on $GT_n$ is given by (see (\ref{eqDensityCharlier}))
\begin{equation}
\mu_n(x^n)=\const\, \Delta_n(x^n)^2 \prod_{i=1}^n \omega_t(x_i^n+n).
\end{equation}
Also, recall that (see (\ref{eqMarkovLink}) and (\ref{eqInterlacingDeterminants}))
\begin{equation}
\Lambda^n_{n-1}(x^n,x^{n-1})=\const\, \frac{\Delta_{n-1}(x^{n-1})}{\Delta_n(x^n)} \det(\phi(x_i^{n-1},x_j^n))_{i,j=1}^n
\end{equation}
and that (see (\ref{eqTransProbaCharlier}))
\begin{equation}
P_{n,t}(x^n,y^n)=\frac{\Delta_n(y^n)}{\Delta_n(x^n)}\det(p_t(y_j^n-x_i^n))_{i,j=1}^n.
\end{equation}
Using these identities, $\Delta^n_{m,t}$ defined in (\ref{eqDefDelta}) becomes
\begin{equation}
\begin{aligned}
&\Delta^n_{m,t}(x^n,y^m)=\sum_{z^{m+1},\ldots,z^n}P_{n,t}(x^n,z^n)\Lambda^n_{n-1}(z^n,z^{n-1})\cdots \Lambda^{m+1}_{m}(z^{m+1},y^m)\\
&=\const \frac{\Delta_m(y^m)}{\Delta_n(x^n)} \sum_{z^{m+1},\ldots,z^n} \det(p_t(z_j^n-x_i^n))_{i,j=1}^n \prod_{\ell=m+1}^n\det(\phi(z_i^{\ell-1},z_j^\ell))_{i,j=1}^\ell.
\end{aligned}
\end{equation}
Then, (\ref{eq2.40}) is obtained by multiplying the $\Delta^{n_j}_{n_{j+1},t_{j+1}-t_j}$ of Proposition~\ref{PropJointMeasures} and then reorder all the terms by increasing levels and decreasing times. The notations introduced in the statement avoids to have empty products, like factors $p_{t_{j+1}-t_j}$ when $t_{j+1}=t_j$.
\end{proof}

\subsection{Projection to TASEP}\label{sectProjTASEP}
Already from Proposition~\ref{PropConservationMeasureType} it is obvious that the projection of the Markov chain on $GT_N$ (\ref{eqMConGTn}) onto $W_N$ is still a Markov chain, namely the Charlier process with transition probability $P_{N,t}$. Further, for packed initial conditions, any level $1\leq n\leq N$ evolves as a Charlier process with packed initial condition (i.e., starting from $(-n,-n+1,\ldots,-1)$.

Less obvious is that there are two other projections which are still Markov chains. One is the projection onto $(x_1^1,\ldots,x_1^n)$, the other is the projection onto $(x_1^1,\ldots,x_n^n)$. The first one (that we will discuss here) is TASEP, while the second one is called PushASEP~\cite{BF07}.

\begin{prop}
The projection of the evolution of the Markov chain on $GT_N$ on $(x_1^1,\ldots,x_N^N)$ is still a Markov chain, more precisely, it is the totally asymmetric simple exclusion process with $N$ particles.
\end{prop}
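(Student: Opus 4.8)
The plan is to show that the marginal dynamics of the left-most coordinates $y_n := x_1^n$, $1\le n\le N$, is autonomous and coincides in law with TASEP. The first observation is that the interlacing $x^{n-1}\prec x^n$ forces $x_1^n < x_1^{n-1}$, so the vector $(x_1^1,\dots,x_1^N)$ is strictly decreasing — exactly the right-to-left ordering of $N$ TASEP particles, with $x_1^n$ playing the role of particle labelled $n$. Under the packed initial condition $x_k^n(0)=-n-1+k$ we get $x_1^n(0)=-n$, which is the step initial condition of Theorem (Johansson). So the identification of state spaces and initial data is immediate; what needs work is that the \emph{transition mechanism} for $(x_1^1,\dots,x_1^N)$ depends only on $(x_1^1,\dots,x_1^N)$ and is the TASEP generator.

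Next I would unwind the sequential update described after \eqref{eqMConGTn} restricted to these coordinates. In discrete time, $x_1^1$ performs a one-sided random walk with jump probability $p$ (item (a)). For $n\ge 2$, by item (c) the coordinate $x_1^n$ is blocked by $x_1^{n-1}$ (it cannot jump onto the site occupied by $x_1^{n-1}$) and is pushed by $x_0^{n-1}$, which does not exist — so $x_1^n$ is only blocked, never pushed. Hence the update rule for $x_1^n$ is: attempt a jump to the right with probability $p$, but suppress it if $x_1^n+1 = x_1^{n-1}$. This is precisely the exclusion rule: particle $n$ (at position $x_1^n$) jumps right with probability $p$ provided the target site is not occupied by particle $n-1$ (which, by the ordering, is the only particle that can block it). Crucially, this rule references only $x_1^{n-1}$ and $x_1^n$, not any $x_k^m$ with $k\ge 2$, so the collection $(x_1^1,\dots,x_1^N)$ evolves by a closed rule; by the Markov property of $P^n_\Lambda$ this closed rule defines a Markov chain. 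Taking the continuous-time limit ($t\mapsto t/p$, $p\to 0$) as in Section~\ref{sectCharlier}: a single attempted jump in time $\dx t$ has probability $\dx t$, so each particle independently attempts a right jump at rate $1$, suppressed exactly when the right neighbour is occupied — this is the generator \eqref{1.1}.

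The one genuine point requiring care — and the step I expect to be the main obstacle — is justifying that this marginal is genuinely autonomous, i.e.\ that conditioning on the full history of $(x_1^1,\dots,x_1^N)$ does not give extra information about its future beyond what its current state gives. The clean way is not to re-derive it from the messy conditional distributions in \eqref{eqMConGTn} but to invoke Proposition~\ref{PropConservationMeasureType}: the class of measures $\mu_n(x^n)\prod_{k=2}^n\Lambda^k_{k-1}$ is preserved, and for such measures the conditional law of the full pattern given its "boundary" statistics factorizes, so that the update of $x_1^n$ genuinely only sees $x_1^{n-1}$ and $x_1^n$. Concretely, one checks that in the product \eqref{eqMConGTn} all the coupling between $x_1^n$ and the higher coordinates $x_k^{n-1}$ with $k\ge 2$ enters only through normalization factors $\Delta^k_{k-1}$ that integrate out, leaving the blocking rule above; this is the computation hidden behind items (a)--(c), and carrying it through for the left-most column is exactly the content of the proposition. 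Once the discrete-time autonomous chain is identified as (discrete-time) TASEP, the continuous-time limit is routine and gives the TASEP generator \eqref{1.1} for $N$ particles.
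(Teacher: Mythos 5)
Your proposal is correct and follows essentially the same route as the paper: restrict the sequential-update dynamics (a)--(c) to the leftmost coordinates, observe that $x_1^1$ is a free one-sided walk and that for $n\geq 2$ the coordinate $x_1^n$ is only blocked by $x_1^{n-1}$ (never pushed), recognize discrete-time TASEP with sequential update, and pass to the continuous-time limit $p\to 0$, $t\mapsto t/p$. The autonomy worry you raise is handled in the paper simply by construction (the update of level $n$ depends only on levels $\leq n$, and for the first coordinates only through $x_1^{n-1}$), so the detour through Proposition~\ref{PropConservationMeasureType} is unnecessary.
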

\begin{proof}
It is quite simple to see this fact if we go back to the discrete time version first. This was described at the end of Section~\ref{sectIntertwining}. If we focus only on particles $x_1^n$'s, then they jump to the right with probability $p$ and stay put with probability $1-p$, with the extra condition that particles $x_1^n$ is blocked by $x_1^{n-1}$ for $n\geq 2$. The update is made first for $x_1^1$, then $x_1^2$ and so on. This model is know as TASEP in discrete time with \emph{sequential update} and it is well known that in the continuous time limit, $p\to 0$ with $t\to t/p$, one recovers the continuous time TASEP.
\end{proof}

\newpage Here is a simple but useful observation.
\begin{cor}\label{CorDistrTASEP}
For any choice of $m$ distinct positive integer numbers $n_1,\ldots,n_m$ it holds
\begin{equation}
\Pb\left(\bigcap_{k=1}^m\{x_1^{n_k}\geq s_k\}\right)=\Pb\left(\bigcap_{k=1}^m\{\textrm{No particles at level }n_k\textrm{ is in }(-\infty,s_k)\}\right)
\end{equation}
\end{cor}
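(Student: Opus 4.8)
The plan is to observe that the two events on either side of the claimed identity are literally the same subset of the sample space, so that equality of probabilities is immediate once this is checked. First I would recall that the state at level $n_k$ of a Gelfand--Tsetlin pattern is a vector $(x_1^{n_k},\ldots,x_{n_k}^{n_k})\in W_{n_k}$, so by the very definition of the Weyl chamber one has $x_1^{n_k}<x_2^{n_k}<\cdots<x_{n_k}^{n_k}$; in particular $x_1^{n_k}=\min_{1\le j\le n_k}x_j^{n_k}$ is the leftmost particle at level $n_k$. Hence the event $\{x_1^{n_k}\ge s_k\}$ is exactly the event that \emph{every} coordinate at level $n_k$ is $\ge s_k$, i.e.\ that no particle at level $n_k$ lies in $(-\infty,s_k)$.

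Intersecting this identity of events over $k=1,\ldots,m$ and taking probabilities then gives the corollary. The one point worth noting is that the left-hand side is phrased in terms of the TASEP particle positions while the right-hand side refers to the full $GT_N$ configuration; but by the projection described in the previous proposition the TASEP particle with label $n_k$ is precisely the coordinate $x_1^{n_k}$ of the $GT_N$ pattern, so the two descriptions coincide and there is no ambiguity to resolve.

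There is no genuine obstacle here: the statement is a bookkeeping identity recording that, because the particles at a fixed level are strictly ordered, controlling the leftmost particle at that level is the same as asserting that the whole level sits to the right of a given point. The reason to isolate it is practical: it converts TASEP observables of the form $\{x_1^{n_k}\ge s_k\}$ into ``no particle of the point process at level $n_k$ in a given region'' events, which is exactly the shape needed to apply the determinantal-correlation machinery of Section~\ref{SectLmeasures} to the measures produced by Proposition~\ref{PropMeasureContTime}.
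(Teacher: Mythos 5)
Your proof is correct and is exactly the observation the paper intends: since each level lives in the Weyl chamber, $x_1^{n_k}$ is the leftmost particle at level $n_k$, so the two events coincide as sets; the paper states the corollary without proof as a "simple but useful observation" and your argument fills in precisely that reasoning.
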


\begin{remark}
We have seen that the projection of our Markov chain on $GT_N$ to the $x_1^n$'s has TASEP dynamics. So, any given initial measure $\mu_N$ on $W_N$ induces a measure on the initial condition for TASEP. However, often measures on $\mu_N$ do not lead to ``natural'' initial conditions for TASEP. On the other hand, there are some interesting initial conditions for TASEP, e.g., $x_k=-2k$ for $1\leq k\leq N$, which do not correspond to a probability measure on $\mu_N$, but only to a signed measure. Nevertheless, the algebraic structure goes through and the interlacing structure can still be used to analyze such initial conditions, see e.g.~\cite{BFPS06,BF07,BFS07b,BFS07}.
\end{remark}

\section{$L$-ensembles and determinantal correlations}\label{SectLmeasures}
\subsection{$L$-measures and its correlation functions}
Let us first define determinantal point processes and $L$-ensembles. Our presentation is strongly inspired by~\cite{Bor10,RB04}. Further surveys on determinantal point process are~\cite{Sos00,Sos06,Lyo03,Jo05,Koe04,BKPV05}. Let $\X$ be a discrete space. A \textbf{simple point process} $\eta$ on $\X$ is a probability measure on the set $2^\X$ of all subsets of $\X$. Then, $\eta$ is called \textbf{determinantal} if there exists a function $K:\X\times\X\to \C$ such that for any finite $(x_1,\ldots,x_n)\subset \X$ one has
\begin{equation}\label{eq3.1}
\rho^{(n)}(x_1,\ldots,x_n):=\Pb(X\in 2^\X \, |\, (x_1,\ldots,x_n)\subset X) = \det(K(x_i,x_j))_{i,j=1}^n.
\end{equation}
The function $K$ is called the \textbf{correlation kernel} of $\eta$.
If one thinks at the sites of $\X$ either occupied by a particle or empty, then $\rho^{(n)}(x_1,\ldots,x_n)$ is the probability that each of the sites $x_1,\ldots,x_n$ is occupied by a particle. $\rho^{(n)}$ is also known as $n$-point correlation functions. In the continuous setting, like for the eigenvalues of $N\times N$ GUE random matrices where $\X=\R$, then $\rho^{(n)}(x_1,\ldots,x_n)$ is the probability density\footnote{This, in case the reference measure is Lebesgue. Correlation functions are given with respect to a reference measure. We will here not specify it and use always counting measure for discrete and Lebesgue for the continuous cases. The reference measure in the background is the reason of the factor $b$ in front of the kernel obtained after the change of variable in Lemma~\ref{lemRescalingKernel}} of finding an eigenvalue at each of the $x_1,\ldots,x_n$.

Remark that this does not mean that $\rho^{(n)}$ is normalized to one. For instance, if the point process $\eta$ consists of configurations with exactly $m$ particles, then $\sum_{x\in\X}\rho^{(1)}(x)=m$.

Now, let $L:\X\times\X\to\C$ be a matrix and $\X$ finite (for the moment). For any subset $X\subset \X$ we denote by $L_X$ the symmetric submatrix of $L$ corresponding to $X$, i.e., $L_X=[L(x_i,x_j)]_{x_i,x_j\in X}$. If the determinants of all such submatrices are nonnegative (that is the case for instance if $L$ is positive definite), then one can define a random point process on $\X$ by
\begin{equation}
\Pb(X)=\frac{\det(L_X)}{\det(\Id+L)},\quad X\subset \X.
\end{equation}
This process is called \textbf{$L$-ensemble}.
\begin{thm}[Macchi'75~\cite{Ma75}]
The $L$-ensemble is a determinantal point process with correlation kernel $K=L(\Id+L)^{-1}$.
\end{thm}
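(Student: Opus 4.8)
The plan is to evaluate the correlation functions $\rho^{(n)}$ of the $L$-ensemble straight from their definition and to recognise them as the minors of $K=L(\Id+L)^{-1}$; the whole argument hinges on one elementary determinant identity obtained by multilinearity. Write $M=\Id+L$. By the standard multilinear expansion (a special case of the identity proved below) $\det M=\sum_{Z\subseteq\X}\det(L_Z)$, which under the hypothesis is a sum of nonnegative terms whose $Z=\varnothing$ contribution equals $1$; hence $\det M\ge 1$, $M$ is invertible, $\Pb$ is a genuine probability measure, and $K=LM^{-1}$ is well defined.

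\emph{Step 1 (reduction to a minor sum).} For a finite $Y=\{x_1,\dots,x_n\}\subseteq\X$ the definition of the correlation functions gives
\[
\rho^{(n)}(x_1,\dots,x_n)=\Pb(Y\subseteq X)=\sum_{Z:\,Y\subseteq Z\subseteq\X}\Pb(X=Z)=\frac{1}{\det M}\sum_{Z:\,Y\subseteq Z}\det(L_Z),
\]
so everything reduces to computing $\sum_{Z\supseteq Y}\det(L_Z)$. \emph{Step 2 (the key identity).} Let $E$ be the diagonal matrix with $E(x,x)=1$ for $x\notin Y$ and $E(x,x)=0$ for $x\in Y$. I claim
\[
\sum_{Z:\,Y\subseteq Z\subseteq\X}\det(L_Z)=\det(L+E).
\]
To see this, expand $\det(L+E)$ by multilinearity in the columns: the column of $L+E$ indexed by $j$ is the $j$-th column of $L$ plus $e_j$ when $j\notin Y$, and is just the $j$-th column of $L$ when $j\in Y$. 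Choosing, for each $j\notin Y$, which of the two summands to keep amounts to choosing a set $T\subseteq\X\setminus Y$ of columns on which we take $e_j$; since each such column is a distinct standard basis vector sitting in its own slot, expanding the determinant along these columns produces sign $+1$ and leaves precisely the principal minor $\det(L_{\X\setminus T})$. As $T$ runs over all subsets of $\X\setminus Y$ the set $\X\setminus T$ runs over all $Z$ with $Y\subseteq Z\subseteq\X$, which is the claim; the case $Y=\varnothing$ recovers the expansion of $\det M$ used above.

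\emph{Step 3 (identifying the kernel).} Combining Steps 1 and 2, using $L+E=M-E'$ where $E'=\Id-E$ is the diagonal projection onto $Y$, and $\det(AB)=\det(BA)$,
\[
\rho^{(n)}(x_1,\dots,x_n)=\frac{\det(M-E')}{\det M}=\det\bigl(M^{-1}(M-E')\bigr)=\det\bigl(\Id-M^{-1}E'\bigr).
\]
The matrix $\Id-M^{-1}E'$ coincides with the identity in every column indexed by $\X\setminus Y$ (there $E'$ annihilates the column), so expanding its determinant along those unit columns leaves exactly the $Y\times Y$ minor $\det\bigl((\Id-M^{-1})_Y\bigr)$. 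Finally $\Id-M^{-1}=M^{-1}(M-\Id)=M^{-1}L=LM^{-1}=K$, because $L$ commutes with $M=\Id+L$ and hence with $M^{-1}$; therefore $\rho^{(n)}(x_1,\dots,x_n)=\det\bigl(K(x_i,x_j)\bigr)_{i,j=1}^n$, which is precisely the assertion that the $L$-ensemble is determinantal with correlation kernel $K$.

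The one point needing genuine care is the sign bookkeeping in Step 2: one must verify that expanding a determinant along several columns, each equal to a distinct standard basis vector $e_j$, introduces no sign and no cancellation, so that the sum over all $Z\supseteq Y$ collapses to the single determinant $\det(L+E)$. This is exactly where the combinatorial positivity of the construction enters; the reduction in Step 1 and the matrix algebra in Step 3 are otherwise routine.
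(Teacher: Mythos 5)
Your proof is correct, and there is nothing in the paper to compare it against: the theorem is stated without proof, attributed to Macchi~\cite{Ma75}. Your argument is the standard one for this result. The key identity $\sum_{Z\supseteq Y}\det(L_Z)=\det(L+E)$ is established cleanly by column multilinearity, the sign bookkeeping when expanding along distinct unit columns $e_j$ is handled correctly (each such expansion carries cofactor sign $(-1)^{j+j}=+1$ and the surviving unit entries stay on the diagonal of the reduced matrix, so the claim follows by induction with no cancellation), and the reduction $\det(\Id-M^{-1}E')=\det\bigl((\Id-M^{-1})_Y\bigr)$ together with $\Id-M^{-1}=LM^{-1}=K$ finishes the identification of the kernel. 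You also correctly dispose of the invertibility of $\Id+L$ via $\det(\Id+L)=\sum_Z\det(L_Z)\geq\det(L_\varnothing)=1$. The only implicit convention worth stating is that the empty principal minor equals $1$, which you in effect use both there and in the normalisation; with that noted, the proof is complete and self-contained.
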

Next, consider a (nonempty) subset $\Y$ of $\X$ and a given $L$-ensemble on $\X$. Define a random point process on $\Y$ by considering the intersections of the random point configurations $X\subset \X$ of the $L$-ensemble with $\Y$, provided that these point configurations contain $\Y^c:=\X\setminus \Y$. This new process can be defined by
\begin{equation}
\Pb(Y)=\frac{\det(L_{Y\cup\Y^c})}{\det(\Id_\Y+L)},
\end{equation}
for $Y$ configurations in $\Y$. This process is called \textbf{conditional $L$-ensemble}.
\begin{thm}[Borodin,Rains; Theorem 1.2 of~\cite{RB04}]\label{ThmConditionalLensembles}
The conditional \mbox{$L$-ensemble} is a determinantal point process with correlation kernel
\begin{equation}
 K=\Id_\Y-(\Id_\Y+L)^{-1}\big|_{\Y\times\Y}
\end{equation}
\end{thm}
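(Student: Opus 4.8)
The plan is to compute the $n$-point correlation functions of the conditional $L$-ensemble directly and read off the kernel from them; everything reduces to one elementary expansion identity for determinants. I will take $\X$ finite and assume $\Id_\Y+L$ invertible (automatic if $L$ is positive definite), so that the normalisation $\det(\Id_\Y+L)$ appearing in the definition is nonzero; here $\Id_\Y$ denotes the partial identity that is the identity on $\Y$ and zero on $\Y^c$.

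The first step is to establish that for every subset $\Z\subseteq\X$ one has $\det(\Id_\Z+L)=\sum_{Y\subseteq\Z}\det(L_{Y\cup\Z^c})$. I would prove this by expanding $\det(\Id_\Z+L)$ by multilinearity in the columns indexed by $\Z$: each such column is a column of $\Id_\Z$ plus a column of $L$, and the term in which one selects the $\Id_\Z$-columns along a subset $T\subseteq\Z$ collapses, after cofactor expansion along those unit columns (and the corresponding rows), to $\det(L_{(\Z\setminus T)\cup\Z^c})$ — with sign $+1$, since the $1$'s sit on the diagonal at every stage. Summing over $T$ and setting $Y=\Z\setminus T$ gives the identity. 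The case $\Z=\Y$ already shows, under the usual $L$-ensemble positivity hypothesis, that the conditional $L$-ensemble is a genuine probability measure on configurations of $\Y$.

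Next, for a fixed $R=\{y_1,\dots,y_n\}\subseteq\Y$, the definition of the conditional $L$-ensemble gives
\[
\rho^{(n)}(y_1,\dots,y_n)=\frac{1}{\det(\Id_\Y+L)}\sum_{Y\subseteq\Y,\ Y\supseteq R}\det(L_{Y\cup\Y^c}).
\]
Writing $Y=R\sqcup Y'$ with $Y'\subseteq\Y\setminus R$ turns the numerator into $\sum_{Y'\subseteq\Y\setminus R}\det(L_{Y'\cup(\Y\setminus R)^c})$, and the identity of the previous paragraph (with $\Z=\Y\setminus R$) identifies this with $\det(\Id_{\Y\setminus R}+L)=\det((\Id_\Y+L)-\Id_R)$. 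Setting $N:=\Id_\Y+L$ and factoring out $N$, I would then write
\[
\rho^{(n)}(y_1,\dots,y_n)=\frac{\det(N-\Id_R)}{\det N}=\det(\Id-N^{-1}\Id_R).
\]
Since $\Id-N^{-1}\Id_R$ agrees with the identity in every column not indexed by $R$, its determinant equals that of its $R\times R$ submatrix, namely $\det((\Id-N^{-1})_{R\times R})$; as $R\subseteq\Y$ this is exactly $\det(K(y_i,y_j))_{i,j=1}^n$ with $K=\Id_\Y-(\Id_\Y+L)^{-1}\big|_{\Y\times\Y}$, which is the asserted kernel.

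I do not expect a serious obstacle: the whole argument is two determinant manipulations once the expansion identity of the second paragraph is in hand, and the only points to watch are the sign in that expansion (it comes out $+1$ because the selected unit columns lie on the diagonal) and the standing invertibility of $\Id_\Y+L$. As an alternative one could instead observe that the conditional $L$-ensemble is itself an ordinary $L$-ensemble on $\Y$, with $L$-matrix the Schur complement of the $\Y^c\Y^c$-block of $L$, and then quote the preceding theorem of Macchi together with the block-inversion formula for $(\Id_\Y+L)^{-1}$; but that route requires the $\Y^c\Y^c$-block to be invertible, whereas the direct correlation-function computation avoids this.
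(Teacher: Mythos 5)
Your argument is correct: the expansion $\det(\Id_\Z+L)=\sum_{Y\subseteq\Z}\det(L_{Y\cup\Z^c})$ holds with the signs as you say (the selected unit columns sit on the diagonal), the substitution $\Z=\Y\setminus R$ turns the summed numerator into $\det\bigl((\Id_\Y+L)-\Id_R\bigr)$, and reading off the $R\times R$ block of $\Id-(\Id_\Y+L)^{-1}$ gives exactly the stated kernel. Note that the paper itself offers no proof of this theorem — it is quoted from Borodin--Rains (Theorem 1.2 of their paper) — so there is nothing in-text to compare against; your direct correlation-function computation is essentially the standard argument from that reference, and your closing remark about the Schur-complement shortcut (and why it needs invertibility of the $\Y^c\times\Y^c$ block, which the direct computation avoids) is accurate.
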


\begin{rem}
The results extends easily to countable $\X$ by a limiting procedure provided that the normalization constants in the above formulas remain finite (and also to uncountable spaces like $\X=\R$, where of course the standard reference measure becomes Lebesgue instead of counting measure).
\end{rem}

For determinantal point processes, the probability of having a region that is empty (called \textbf{gap probability}) is given by the series expansion of a Fredholm determinant.

\begin{lem}[Gap probability formula]\label{lemGapProbability}
Let $B$ be a (Borel) subset of $\X$. Then, the probability that a random configuration $X=(x_i)_i$ of a determinantal point process with correlation kernel $K$ is empty is equal to
\begin{equation}
\Pb(|X\cap B|=\emptyset)=\det(\Id-K)_{\ell^2(B)}\equiv \sum_{n=0}^\infty \frac{(-1)^n}{n!} \sum_{x_1,\ldots,x_n\in B} \det(K(x_i,x_j))_{i,j=1}^n.
\end{equation}
\end{lem}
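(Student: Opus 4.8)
The plan is to reduce this well-known formula to a general inclusion--exclusion identity, valid for \emph{any} simple point process, and to invoke the determinantal structure only at the last step. I would first prove it assuming $B$ is a finite subset of the discrete space $\X$, and then obtain the general (countable, or continuous) case by exhausting $B$ with finite sets; the only genuine subtlety is the convergence issue in that limiting step.

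For finite $B$, write $N_x=\Id_{[x\in X]}$ for the occupation variable at $x\in\X$. Since $X\cap B=\emptyset$ precisely when no site of $B$ is occupied, and $N_x^2=N_x$, one has the exact identity
\begin{equation*}
\Id_{[X\cap B=\emptyset]}=\prod_{x\in B}(1-N_x)=\sum_{A\subseteq B}(-1)^{|A|}\prod_{x\in A}N_x .
\end{equation*}
Taking expectations and using that, by the very definition \eqref{eq3.1} of the correlation functions, $\E\big[\prod_{x\in A}N_x\big]=\Pb(A\subseteq X)=\rho^{(|A|)}(A)$ (with the convention $\rho^{(0)}\equiv1$), gives
\begin{equation*}
\Pb(X\cap B=\emptyset)=\sum_{A\subseteq B}(-1)^{|A|}\rho^{(|A|)}(A)
=\sum_{n\geq0}\frac{(-1)^n}{n!}\sum_{\substack{x_1,\dots,x_n\in B\\ \text{pairwise distinct}}}\rho^{(n)}(x_1,\dots,x_n),
\end{equation*}
the factor $1/n!$ accounting for the orderings of an $n$-element subset.

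Now I would insert the determinantal form $\rho^{(n)}(x_1,\dots,x_n)=\det(K(x_i,x_j))_{i,j=1}^n$. Since this determinant has two equal rows, hence vanishes, as soon as two of the arguments coincide, the restriction to pairwise distinct points in the inner sum can be dropped without changing its value. The right-hand side then becomes exactly the series expansion of $\det(\Id-K)_{\ell^2(B)}$, which proves the lemma for finite $B$.

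For infinite $B$ (and, with counting measure replaced by Lebesgue measure, in the continuous setting), I would take finite sets $B_1\subseteq B_2\subseteq\cdots$ with $\bigcup_kB_k=B$, apply the finite case to each $B_k$, and let $k\to\infty$. The left-hand side converges to $\Pb(X\cap B=\emptyset)$ by continuity from above of the probability measure along the decreasing events $\{X\cap B_k=\emptyset\}$. The right-hand side converges term by term; the point that must be controlled --- and this is the main obstacle --- is the absolute convergence of the Fredholm series and the legitimacy of passing to the limit inside it. This follows once one has an a priori bound making $K$ restricted to $B$ trace class (equivalently, the normalization constants of the underlying $L$-ensemble staying finite, cf.\ the remark after Theorem~\ref{ThmConditionalLensembles}), so that dominated convergence applies; in the models treated in these notes such a bound is read off from the explicit kernel, so I would not elaborate further.
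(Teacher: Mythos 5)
Your argument is correct and takes essentially the same route as the paper: an inclusion--exclusion expansion identified with the correlation functions $\rho^{(n)}$, followed by inserting the determinantal formula (the paper expands $\prod_i(1-\Id_B(x_i))$ over the particles of the configuration rather than $\prod_{x\in B}(1-N_x)$ over the sites of $B$, which avoids your finite-$B$ reduction and exhaustion step). Your explicit remarks on dropping the diagonal terms via vanishing of the determinant and on the convergence issues for infinite $B$ merely spell out points the paper leaves implicit.
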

\begin{proof}
One just have to write $\Pb(|X\cap B|=\emptyset)$ in terms of correlation functions. For simple point processes we have
\begin{equation}
\begin{aligned}
\Pb(|X\cap B|=\emptyset) &= \E\left(\prod_{i}(1-\Id_{B}(x_i))\right)=\sum_{n\geq 0}(-1)^n\E\left(\sum_{i_1<\ldots<i_n}\prod_{k=1}^n\Id_{B}(x_{i_k})\right)\\
&\stackrel{\textrm{sym}}{=}\sum_{n\geq 0}\frac{(-1)^n}{n!}\E\bigg(\sum_{\begin{subarray}{c}
i_1,\ldots,i_n\\ \textrm{all different}\end{subarray}}\prod_{k=1}^n\Id_{B}(x_{i_k})\bigg)\\
&=\sum_{n\geq 0}\frac{(-1)^n}{n!}\sum_{y_1,\ldots,y_n\in B} \E\bigg(\sum_{\begin{subarray}{c}
i_1,\ldots,i_n\\ \textrm{all different}\end{subarray}}\prod_{k=1}^n\Id_{y_k}(x_{i_k})\bigg)\\
&=\sum_{n\geq 0}\frac{(-1)^n}{n!}\sum_{y_1,\ldots,y_n\in B} \rho^{(n)}(y_1,\ldots,y_n).
\end{aligned}
\end{equation}
Replacing the formula for determinantal point processes of the correlation function the proof is completed.
\end{proof}

Remark that the joint distribution of TASEP particles (see Corollary~\ref{CorDistrTASEP}) can be written as a gap probability. As we will see, the point process in the background is determinantal. It is from this formula that (\ref{eq1.4}) and (\ref{eq1.11}) are obtained after scaling limit. Of course, in the continuous case $\X=\R$, the sum is replaced by the integral with respect to the Lebesgue measure.

In applications one often deals with scaling limits which are affine transformations. Thus, let us shortly write how the kernel of a determinantal point process is changed.
\begin{lem}\label{lemRescalingKernel}
Let $\eta$ be a determinantal point process on $\X$ with kernel $K$ and consider the change of variable $\X \ni x = a+b x'$. Let $\eta'$ be the image of the point process that now live on $\X'=(\X-a)/b$. Then, $\eta'$ is also determinantal with kernel
\begin{equation}
K'(x',y')=b K(a+b x',a+b y').
\end{equation}
\end{lem}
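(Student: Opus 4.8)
The plan is to reduce the statement to the defining property of determinantal processes, namely that the $n$-point correlation functions are given by the determinant $\det(K(x_i,x_j))_{i,j=1}^n$, and simply track how these transform under the affine map $x = a + b x'$. First I would recall that the image point process $\eta'$ is defined pathwise: a configuration $X=\{x_i\}_i$ of $\eta$ is sent to the configuration $X'=\{x_i'\}_i$ with $x_i = a + b x_i'$, equivalently $X' = (X-a)/b$. Since an affine map with $b\neq 0$ is a bijection between $\X$ and $\X'$, this is a well-defined measure-preserving pushforward, and $\eta'$ is again a simple point process on $\X'$.

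Next I would identify the correlation functions of $\eta'$. For a discrete space, the $n$-point function of $\eta'$ at distinct points $x_1',\ldots,x_n'\in\X'$ is $\Pb(\{x_1',\ldots,x_n'\}\subset X')$, which by the bijection equals $\Pb(\{a+bx_1',\ldots,a+bx_n'\}\subset X) = \rho^{(n)}(a+bx_1',\ldots,a+bx_n') = \det(K(a+bx_i',a+bx_j'))_{i,j=1}^n$ using that $\eta$ is determinantal. This already shows $\eta'$ is determinantal with kernel $\widetilde K(x',y') = K(a+bx',a+by')$, with no factor of $b$; the factor appears only in the continuous case, where correlation functions are densities against a reference measure. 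There the change of variables in the Lebesgue measure $\dx x = |b|\,\dx x'$ (or $\dx x = b^n \dx x_1'\cdots\dx x_n'$ component-wise) forces the density of $\eta'$ to carry one factor of $b$ per variable: $\rho'^{(n)}(x_1',\ldots,x_n') = b^n \rho^{(n)}(a+bx_1',\ldots,a+bx_n')$. Distributing these $n$ factors into the rows (or columns) of the determinant gives $\det(bK(a+bx_i',a+bx_j'))_{i,j=1}^n$, hence the kernel $K'(x',y') = bK(a+bx',a+by')$ as claimed. (With the sign conventions of the paper one writes $b$ rather than $|b|$, since in applications $b$ is a positive scaling factor; this matches the footnote after \eqref{eq3.1}.)

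The only genuine subtlety — and the step I would be most careful about — is the bookkeeping of the reference measure in the continuous case: the "kernel" of a determinantal point process is only defined relative to a background measure, and the factor $b$ in the statement is exactly the Jacobian of the affine change of variables being absorbed into the kernel. I would state explicitly that we use Lebesgue measure on both $\X=\R$ and $\X'=\R$, so that rescaling the coordinate rescales the reference measure, and that one puts the $n$ Jacobian factors symmetrically (one into each row, say) so that the resulting kernel is $bK(a+b\cdot,a+b\cdot)$ rather than, e.g., $b^n$ sitting outside a single determinant. In the discrete case, where the reference measure is counting measure and is unaffected by the bijection, $b=1$ effectively and the kernel transforms by pure substitution; writing the single unified formula with the factor $b$ covers both situations. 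Everything else is a one-line substitution and requires no further argument.
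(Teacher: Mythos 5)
Your argument is correct and is exactly the bookkeeping the paper has in mind: the paper states this lemma without proof, and its footnote after (\ref{eq3.1}) attributes the factor $b$ precisely to the change of reference (Lebesgue) measure, which is the Jacobian argument you spell out via $\rho'^{(n)}(x_1',\ldots,x_n')=b^n\rho^{(n)}(a+bx_1',\ldots,a+bx_n')=\det\big(bK(a+bx_i',a+bx_j')\big)_{i,j=1}^n$. Your remarks on the discrete case (pure substitution, no factor) and on $b$ versus $|b|$ are consistent with how the lemma is used later, e.g.\ in (\ref{eq3.50}).
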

A further important remark is that the kernel is not uniquely determined by the determinantal point process.
\begin{lem}\label{lemConjugate}
Let $\eta$ be a determinantal point process on $\X$ with kernel $K$. For any function $f(x)>0$ for all $x\in \X$, then the kernel
\begin{equation}
\widetilde K(x,y)=\frac{f(x)}{f(y)} K(x,y)
\end{equation}
defines the same determinantal point process $\eta$ as (\ref{eq3.1}) is unchanged.
\end{lem}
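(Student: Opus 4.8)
The plan is to show that $\widetilde K$ produces exactly the same $n$-point correlation functions as $K$, and then to invoke the fact that on a discrete space a simple point process is determined by its correlation functions.

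First I would fix $n$ and points $x_1,\dots,x_n\in\X$ and introduce the diagonal matrix $D=\diag(f(x_1),\dots,f(x_n))$, which is invertible since $f>0$ everywhere. The matrix $(\widetilde K(x_i,x_j))_{i,j=1}^n$ is then precisely $D\,(K(x_i,x_j))_{i,j=1}^n\,D^{-1}$, because its $(i,j)$ entry is $\frac{f(x_i)}{f(x_j)}K(x_i,x_j)$. Taking determinants and using multiplicativity, $\det(\widetilde K(x_i,x_j))_{i,j=1}^n=\det D\cdot\det(K(x_i,x_j))_{i,j=1}^n\cdot(\det D)^{-1}=\det(K(x_i,x_j))_{i,j=1}^n$. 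Hence the candidate correlation functions built from $\widetilde K$ via \eqref{eq3.1} coincide with $\rho^{(n)}$ for every $n$ and every choice of points; in particular $\widetilde K$ is a legitimate correlation kernel (its principal minors are still the nonnegative numbers $\rho^{(n)}$).

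It then remains to argue that a simple point process on the discrete space $\X$ is uniquely determined by its family $\{\rho^{(n)}\}_{n\geq 0}$. This is inclusion–exclusion: for a finite window $B\subset\X$, the probability of any cylinder event ``$X\cap B$ equals a prescribed subset'' can be written as an alternating sum of the $\rho^{(n)}$ evaluated on subsets of $B$ — the same computation carried out in the proof of Lemma~\ref{lemGapProbability}, run on a finite window rather than just for the empty event. Since $K$ and $\widetilde K$ yield identical $\rho^{(n)}$, they yield identical probabilities for all cylinder events, hence define the same point process $\eta$.

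The only mild subtlety — and the step I would flag as the potential obstacle — is this last inversion: it is immediate when $\X$ is finite, but for countable (or uncountable) $\X$ one must pass to the limit, which is licensed by the standing remark in the text that all such formulas extend by a limiting procedure whenever the relevant normalization constants remain finite. Granting that, the genuine content of the lemma is the one-line determinant identity $\det(DMD^{-1})=\det M$.
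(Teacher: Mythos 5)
Your first paragraph is exactly the content of the lemma, and it is the paper's (implicit) argument: the paper states the lemma without a separate proof, the justification being precisely that the matrix $(\widetilde K(x_i,x_j))_{i,j}$ is the diagonal conjugate $D(K(x_i,x_j))_{i,j}D^{-1}$, so every determinant in (\ref{eq3.1}) is unchanged. Where you diverge is in adding a second step, arguing that the correlation functions determine the process. Under the paper's reading this step is not needed: the process $\eta$ is \emph{given}, and the claim is only that $\widetilde K$ is another valid correlation kernel for the same $\eta$, i.e.\ that (\ref{eq3.1}) holds with $\widetilde K$ in place of $K$ --- which is exactly the determinant identity. Your uniqueness argument is correct for finite windows in a discrete $\X$ (the occupation variables are $\{0,1\}$-valued, so finite inclusion--exclusion, as in Lemma~\ref{lemGapProbability}, recovers all cylinder probabilities from finitely many $\rho^{(n)}$'s), but be careful about promoting it to a general principle: the paper itself points out immediately after the lemma that correlation functions do \emph{not} always determine a point process, and gives the growth condition $\rho^{(n)}\leq n^{2n}c^n$ as a sufficient criterion; the remark you invoke about limiting procedures concerns extending the $L$-ensemble formulas to countable spaces, not the moment-type uniqueness question. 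So: the genuine content is your one-line conjugation identity, which matches the paper; the extra step is superfluous (and, as stated for general $\X$, would need the growth condition rather than the remark you cite).
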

In the same way as the moments of random variables do not, in general, determine its law, the correlation functions do not determine always the point process. However, in a lot of applications this holds. A sufficient (an easy to verify) condition implying that the correlation functions determine the point process is that
\begin{equation}
\rho^{(n)}(x_1,\ldots,x_n)\leq n^{2n} c^n\quad a.s.
\end{equation}
for some finite constant $c>0$.

\subsubsection{Example with fixed number of particles}\label{sectFixedNumber}
As first example, let us see how the conditional $L$-ensembles fit in the setting of Eynard-Mehta's theorem~\cite{EM97}. Consider the two-time joint measure of Dyson's Brownian motion with initial measure (\ref{eqGUE}) and transition probability (\ref{eq1.6}), i.e.,
\begin{equation}
\Pb(H(t)\in \dx H_1, H(0)\in \dx H_0)=\const e^{-\frac{1}{2N}\Tr(H_0^2)} e^{-\frac{\Tr(H_1-q H_0)^2}{2N(1-q^2)}} \dx H_0 \dx H_1,
\end{equation}
with $q=e^{-t/2N}$.

Harish-Chandra/Itzykson-Zuber formula~\cite{HC57,IZ80} is the following. Let \mbox{$A=\diag(a_1,\dots,a_N)$} and $B=\diag(b_1,\dots,b_N)$ two diagonal $N\times N$ matrices. Let $\dx\mu$ denote the Haar measure on the unitary group ${\cal U}(N)$. Then,
\begin{equation}\label{eqIZHC1}
\int_{{\cal U}(N)}\dx \mu(U) \exp\left(\Tr(A U B U^*)\right) =
\frac{\det\left(e^{a_i b_j}\right)_{i,j=1}^N}{\Delta_N(a)\Delta_N(b)}\prod_{p=1}^{N-1} p! ,
\end{equation}
where $\Delta_N(a)$ is the Vandermonde determinant of the vector $a=(a_1,\dots,a_N)$.

One writes (for $k=1,2$) $H_k=U_k \Lambda^{(k)} U_k^*$, with $\Lambda^{(k)}=\diag(\lambda_1^{(k)},\ldots,\lambda_N^{(k)})$ is the diagonal matrix with the eigenvalues of $H_k$ and $U_k$ an unitary matrix and uses (\ref{eqIZHC1}) to obtain the joint distribution of the eigenvalues of $H_1$ and $H_0$. By using the identity $\Delta_N(\lambda)= \det(\lambda_i^{j-1})_{i,j=1}^N$, the joint density of the eigenvalues has the form
\begin{equation}
\const \det[\Phi_i(\lambda_j^{(0)})]_{i,j=1}^N \det[\T(x_i^{(0)},x_j^{(1)})]_{i,j=1}^N \det[\Psi_i(\lambda_j^{(1)})]_{i,j=1}^N.
\end{equation}

This is a special case, of the following situation. Consider a random point process on a disjoint union of $m$ (finite) sets $\X^{(1)}\cup \cdots\cup \X^{(m)}$ which lives on $mN$-point configurations with exactly $N$ points in each $\X^{(k)}$, $k=1,\ldots,m$, equipped with the probability measure
\begin{multline}\label{eq3.8}
\Pb\left(\{x_i^{(1)}\}_{1\leq i\leq N}\cap\cdots\cap \{x_i^{(m)}\}_{1\leq i\leq N}\right)=\const \det[\Phi_i(x_j^{(1)})]_{i,j=1}^N \\
\times \prod_{k=1}^{m-1}\det[\T_{k,k+1}(x_i^{(k)},x_j^{(k+1)})]_{i,j=1}^N \det[\Psi_i(x_j^{(m)})]_{i,j=1}^N.
\end{multline}
Now, consider $\X=\{1,\ldots,N\}\cup\X^{(1)}\cup\cdots\cup\X^{(m)}$ and consider the conditional $L$-ensemble on $\X$ with $\Y=\X^{(1)}\cup\cdots\cup\X^{(m)}$, with the matrix $L$ given in block form by
\begin{equation}\label{eqLmatrixDBM}
L=\left(
    \begin{array}{cccccc}
      0 & \Phi^{\mathrm{T}} & 0 & 0 & \cdots & 0 \\
      0 & 0 & -\T_{1,2} & 0 & \cdots & 0 \\
      0 & 0 & 0 & -\T_{2,3} & \cdots & 0 \\
      \vdots & \vdots & \vdots & \vdots & \ddots & \vdots \\
      0 & 0 & 0 & 0 & \cdots & -\T_{m-1,m} \\
      \Psi & 0 & 0 & 0 & \cdots & 0 \\
    \end{array}
  \right),
\end{equation}
where $\Psi=(\Psi_1,\ldots,\Psi_N)$, $\Phi=(\Phi_1,\ldots,\Phi_N)$, and $\Phi^{\mathrm{T}}$ is the transpose of $\Phi$. Then, the conditional $L$-ensemble is the point process distributed according to (\ref{eq3.8}). Indeed, the determinant of a block matrix of type (\ref{eqLmatrixDBM}) is non-zero only if the sizes of all blocks are equal. In that case, the determinant is equal to the product of the determinants of the nonzero blocks (up to a sign). By the choice of $\Y$, $\Y^c=\{1,\ldots,N\}$ so that for any $Y\in\Y$, $\det(L_{Y\cup\Y^c})$ is a determinant of a matrix of the form (\ref{eqLmatrixDBM}) with the $\Psi$ block having $N$ columns. Thus, $\det(L_{Y\cup\Y^c})$ can be non-zero only if $Y\in\Y$ is a configuration with $N$ points in each of the $\X^{(k)}$, $k=1,\ldots,m$, in which case $\det(L_{Y\cup\Y^c})=\const \times (\ref{eq3.8})$.

In the following we use the notation
\begin{equation}
\begin{aligned}
(a*b)(x,y)=\sum_z a(x,z) b(z,y), &\quad (a*c)(x)= \sum_z a(x,z) c(z),\\
(c*a)(x)=\sum_z c(z) a(z,x),&\quad (c*d)=\sum_z c(x) d(x),
\end{aligned}
\end{equation}
for arbitrary functions $a(x,y)$, $b(x,y)$, $c(x)$, and $d(x)$.

An application of Theorem~\ref{ThmConditionalLensembles} (see~\cite{RB04} for details) gives then
\begin{thm}[Eynard-Mehta theorem~\cite{EM97}]\label{thmEynardMehta}
The random point process defined by (\ref{eq3.8}) is determinantal. Its correlation kernel can be written as follows,
\begin{equation}
K(t_1,x_1;t_2,x_2)=-\T_{t_1,t_2}(x_1,x_2)+\sum_{k,\ell=1}^N [G^{-1}]_{k,\ell}  (\T_{t_1,m}*\Psi_k)(x_1) (\Phi_\ell*\T_{1,t_2})(x_2),
\end{equation}
where $t_1,t_2\in\{1,\ldots,m\}$, $G$ is the $N\times N$ matrix
\begin{equation}
G_{i,j}=\Phi_i *\T_{1,2}*\cdots *\T_{m-1,m}*\Psi_j,
\end{equation}
and
\begin{equation}
\T_{i,j}(x,y)=\left\{
\begin{array}{ll}
(\T_{i,i+1}*\cdots * \T_{j-1,j})(x,y),& i<j,\\
0, & i\geq j.
\end{array}
\right.
\end{equation}
\end{thm}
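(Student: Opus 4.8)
The plan is to read the kernel directly off Theorem~\ref{ThmConditionalLensembles}. As explained in the paragraph preceding the statement, the point process (\ref{eq3.8}) is precisely the conditional $L$-ensemble on $\X=\{1,\ldots,N\}\cup\X^{(1)}\cup\cdots\cup\X^{(m)}$ with $\Y=\X^{(1)}\cup\cdots\cup\X^{(m)}$ and $L$ the block matrix (\ref{eqLmatrixDBM}). Hence its correlation kernel is $K=\Id_\Y-(\Id_\Y+L)^{-1}\big|_{\Y\times\Y}$, and the whole problem is to compute the $\Y\times\Y$ block of $(\Id_\Y+L)^{-1}$ in closed form.

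First I would write $\Id_\Y+L$ in $2\times2$ block form with respect to the splitting $\X=\Y^c\sqcup\Y$, where $\Y^c=\{1,\ldots,N\}$. The $\Y^c\times\Y^c$ block is $0$ (the $(1,1)$ entry of $L$ is $0$ and $\Id_\Y$ has no mass on $\Y^c$); the $\Y^c\times\Y$ block $B$ of $L$ is supported on the $\X^{(1)}$ block-column, where it equals $\Phi^{\mathrm T}$; the $\Y\times\Y^c$ block $C$ is supported on the $\X^{(m)}$ block-row, where it equals $\Psi$; and the $\Y\times\Y$ block is $D=\Id_\Y+N$, with $N$ the block-nilpotent matrix carrying $-\T_{k,k+1}$ just above the diagonal. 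Since $D$ is block-unitriangular it is invertible, and expanding the terminating Neumann series $D^{-1}=\sum_{j\ge0}(-N)^j$ one reads off that the $(k,\ell)$ block of $D^{-1}$ is $\T_{k,\ell}$ (in the convolution sense of the statement) for $k<\ell$, the identity for $k=\ell$, and $0$ for $k>\ell$ --- the minus sign of $(-N)^j$ and the minus signs of the entries $-\T_{k,k+1}$ cancel.

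Next I would use the Schur-complement formula for a block inverse: when the top-left block vanishes and $D$ is invertible, the $\Y\times\Y$ block of $(\Id_\Y+L)^{-1}$ equals $D^{-1}-D^{-1}C\,(BD^{-1}C)^{-1}\,BD^{-1}$, provided $BD^{-1}C$ is invertible. A direct multiplication --- $B$ hits only the $\X^{(1)}$ block, $D^{-1}$ contributes the $(1,m)$ block $\T_{1,m}=\T_{1,2}*\cdots*\T_{m-1,m}$, and $C$ contributes $\Psi$ out of the $\X^{(m)}$ block --- shows that $BD^{-1}C$ is exactly the $N\times N$ matrix with $(i,j)$ entry $\Phi_i*\T_{1,2}*\cdots*\T_{m-1,m}*\Psi_j=G_{i,j}$, so $(BD^{-1}C)^{-1}=G^{-1}$; the invertibility of $G$ is automatic, since $\det(\Id_\Y+L)=\pm\det G$ and the ensemble is defined precisely when this normalization is nonzero. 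Therefore $K=\Id_\Y-D^{-1}+D^{-1}C\,G^{-1}\,BD^{-1}$, and it remains to extract the $\X^{(t_1)}\times\X^{(t_2)}$ block: the $(t_1,t_2)$ block of $\Id_\Y-D^{-1}$ is $-\T_{t_1,t_2}$ (the diagonal identities cancel, and with the convention $\T_{i,j}=0$ for $i\ge j$ this covers all cases), while the $(t_1,t_2)$ block of $D^{-1}C\,G^{-1}\,BD^{-1}$, evaluated at $(x_1,x_2)$, is $\sum_{k,\ell=1}^N(\T_{t_1,m}*\Psi_k)(x_1)\,[G^{-1}]_{k,\ell}\,(\Phi_\ell*\T_{1,t_2})(x_2)$, because the $t_1$ block-row of $D^{-1}C$ has $k$-th column $\T_{t_1,m}*\Psi_k$ and the $t_2$ block-column of $BD^{-1}$ has $\ell$-th row $\Phi_\ell*\T_{1,t_2}$. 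Adding the two gives the asserted formula.

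I expect the only real obstacle to be bookkeeping rather than substance: keeping straight which block-index runs over $\{1,\ldots,N\}$ and which over the $\X^{(k)}$, the order in which the convolutions $*$ compose, and the interplay of the signs coming from $-\T_{k,k+1}$ with those from the Schur complement. One must also fix the boundary conventions ($\T_{t_1,m}*\Psi_k$ meaning $\Psi_k$ at $t_1=m$, and $\Phi_\ell*\T_{1,t_2}$ meaning $\Phi_\ell$ at $t_2=1$) so that the formula makes sense at the extreme times, and note that the computation of $D^{-1}$ and of $BD^{-1}C=G$ is the spot where a sign slip is easiest.
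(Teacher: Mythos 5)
Your proposal is correct and follows exactly the route the paper indicates: identify (\ref{eq3.8}) as the conditional $L$-ensemble with the block matrix (\ref{eqLmatrixDBM}) and apply Theorem~\ref{ThmConditionalLensembles}. The paper itself delegates the block inversion of $\Id_\Y+L$ to~\cite{RB04}; your Schur-complement computation (including the identification $BD^{-1}C=G$, the sign bookkeeping in $D^{-1}$, and the boundary conventions at $t_1=m$, $t_2=1$) is precisely the omitted detail, carried out correctly.
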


\subsubsection{Example with increasing number of particles}
The next example is motivated by the Markov chain in the interlacing particle configurations. Consider the measure (\ref{eq2.20}) with $\mu_N(x^N)$ of the form $\Delta_N(x^N)^2 \prod_{k=1}^N \omega(x_k^N)$ (as it is the case in (\ref{eqDensityCharlier})). Then, by (\ref{eqMarkovLink}) and Lemma~\ref{LemInterlacing} we obtain a measure of the form
\begin{equation}\label{eq3.12}
\const \prod_{n=1}^{N}\det(\phi_n(x_i^{n-1},x_j^n))_{i,j=1}^n \det(\Psi_i(x_j^N))_{i,j=1}^N,
\end{equation}
with $x_n^{n-1}\equiv \virt$, there are some given functions
\begin{equation}\label{eq3.13}
\begin{aligned}
\phi_n(\cdot,\cdot):\X_{n-1}\times \X_n\to \C,&\quad n=2,\ldots,N,\\
\phi_n(\virt,\cdot):\X_n\to\C,&\quad n=1,\ldots,N,\\
\Psi_i(\cdot):\X_N\to\C,&\quad i=1,\ldots,N,
\end{aligned}
\end{equation}
and where $(x_1^n,\ldots,x_n^n)$ is a $n$-point configuration in a space $\X_n$. Take \mbox{$\X=\{1,\ldots,N\}\times\cup\Y$} with $\Y=\X_1\cup\ldots\cup\X_N$, with $\X_n=\Z$ is the space where the $n$ variables at level $n$ live. Consider the conditional $L$-ensemble with matrix $L$ given by
\begin{equation}
L=\left(
    \begin{array}{cccccc}
      0 & E_0 & E_1 & E_2 & \cdots & E_{N-1} \\
      0 & 0 & -\phi_{1,2} & 0 & \cdots & 0 \\
      0 & 0 & 0 & -\phi_{2,3} & \cdots & 0 \\
      \vdots & \vdots & \vdots & \vdots & \ddots & \vdots \\
      0 & 0 & 0 & 0 & \cdots & -\phi_{N-1,N} \\
      \Psi & 0 & 0 & 0 & \cdots & 0 \\
    \end{array}
  \right),
\end{equation}
where
\begin{equation}
\begin{aligned}
\left[\Psi\right]_{x,i}&=\Psi_i(x),\quad x\in \X_N,i\in\{1,\ldots,N\},\\
[E_n]_{i,x}&=\left\{
\begin{array}{ll}
\phi_{n+1}(\virt,x),&\quad i=n+1,x\in\X_{(n+1)},\\
0,&\quad \textrm{otherwise},
\end{array}
\right.\\
[\phi_{n,n+1}]_{x,y}&=\phi_{n+1}(x,y),\quad x\in\X_{(n)},y\in\X_{(n+1)}.
\end{aligned}
\end{equation}
For example, when $N=2$ and $Y=(x_1^1)\cup(x_1^2,x_2^2)$, then
\begin{equation}
L_{Y\cup\Y^c} =
\left(
  \begin{array}{ccccc}
    0 & 0 & \phi_1(\virt,x_1^1) & 0 & 0 \\
    0 & 0 & 0 & \phi_2(\virt,x_1^2) & \phi_2(\virt,x_2^2) \\
    0 & 0 & 0 & -\phi_2(x_1^1,x_1^2) & -\phi_2(x_1^1,x_2^2) \\
    \Psi_1(x_1^2) & \Psi_2(x_1^2) & 0 & 0 & 0 \\
    \Psi_1(x_2^2) & \Psi_2(x_2^2) & 0 & 0 & 0 \\
  \end{array}
\right)
\end{equation}

We want to see that for a configuration $Y\in\Y$, then \mbox{$\det(L_{Y\cup\Y^c})=\const\times(\ref{eq3.12})$} provided that $Y$ has exactly $n$ points in $\X_n$, and otherwise $\det(L_{Y\cup\Y^c})=0$. To see that this is the case, first notice that if $Y$ has $c$ points in $\X_n$, then the matrix has $c$ columns from $E_n$ filled with $0$ except for the ($n+1$)th row. So, for $c>n$, the matrix does not has full rank and its determinant is zero. Thus, the number of points in $\X_n$ is at most $n$. Further, if there are strictly less than $N$ points in $\X_N$, the matrix is also not full rank because of the columns coming from $\Psi$. Given this, if the number of points in $\X_{N-1}$ is strictly less than $N-1$, then looking at the columns from $E_{N-1}$ one sees that the matrix is not full rank either. Similarly for $n=N-2,N-3,\ldots,1$.

By using Theorem~\ref{ThmConditionalLensembles} we get the following (see Lemma 3.4 of~\cite{BFPS06} for details):
\begin{thm}[Borodin, Ferrari, Pr\"ahofer, Sasamoto; Lemma 3.4 of~\cite{BFPS06}]\label{ThmMinors}
The random point process on $\X$ defined by (\ref{eq3.12}) is determinantal. Its correlation kernel can be written as follows.
Define the functions
\begin{equation}
\phi_{n_1,n_2}(x,y)=\left\{
\begin{array}{ll}
(\phi_{n_1+1}* \cdots * \phi_{n_2})(x,y),&\quad n_1<n_2,\\
0,&\quad n_1\geq n_2.
\end{array}
\right.
\end{equation}
Then,
\begin{equation}
\begin{aligned}
K(n_1,x_1;n_2,x_2)&=-\phi_{n_1,n_2}(x_1,x_2)\\
&+ \sum_{\ell=1}^{n_2}\sum_{k=1}^N [G^{-1}]_{k,\ell} (\phi_{n_1,N}*\Psi_k)(x_1) (\phi_{\ell}*\phi_{\ell,n_2})(\virt,x_2),
\end{aligned}
\end{equation}
where $G$ is the $N\times N$ matrix defined by $[G]_{i,j}=(\phi_i * \phi_{i,N} * \Psi_j)(\virt)$.
\end{thm}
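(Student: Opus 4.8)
The plan is to reduce the statement to a direct application of Theorem~\ref{ThmConditionalLensembles}, exactly as was done in the fixed-number-of-particles case (Theorem~\ref{thmEynardMehta}), with the block matrix $L$ already written down in the excerpt. First I would verify that for a configuration $Y\subset\Y$ the determinant $\det(L_{Y\cup\Y^c})$ is nonzero precisely when $Y$ has exactly $n$ points in each $\X_n$, and in that case equals $\const\times(\ref{eq3.12})$. The argument is a rank count on the block structure: the columns originating from $E_n$ are supported only on row $n+1$, so more than $n$ points in $\X_n$ forces a rank deficiency; the $\Psi$-columns force exactly $N$ points in $\X_N$; and then a downward induction on $n=N-1,N-2,\dots,1$ using the $E_n$ columns forces exactly $n$ points in $\X_n$. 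When the sizes match, a cofactor expansion along the block structure shows $\det(L_{Y\cup\Y^c})$ factorizes into $\pm\det(\Psi_i(x_j^N))\prod_n\det(\phi_n(x_i^{n-1},x_j^n))$ — this is the same computation as for (\ref{eqLmatrixDBM}), the only novelty being the ``staircase'' of $E_n$ blocks instead of a single $\Phi$ block, and the virtual variable $x_n^{n-1}\equiv\virt$ which simply records that the $n$-th row of the $\phi_n$-block is the constant vector $\phi_n(\virt,\cdot)$.

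Next I would apply Theorem~\ref{ThmConditionalLensembles}: the kernel is $K=\Id_\Y-(\Id_\Y+L)^{-1}\big|_{\Y\times\Y}$. The remaining work is to invert the block matrix $\Id_\Y+L$ restricted appropriately and read off the claimed closed form. I would introduce the convolution notation $(a*b)$ already set up in the excerpt and the composite kernels $\phi_{n_1,n_2}$, then compute $(\Id+L)^{-1}$ by the standard geometric-series / Neumann-type manipulation available because $L$ is nilpotent-like on the $\Y$-blocks (the $-\phi_{n,n+1}$ chain is strictly upper triangular in the level index, so only finitely many terms survive); the $\{1,\dots,N\}$-block couples back in through $E_n$ and $\Psi$, producing the $G^{-1}$ factor, where $G_{i,j}=(\phi_i*\phi_{i,N}*\Psi_j)(\virt)$ is exactly the ``closing the loop'' matrix. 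Matching terms gives $K(n_1,x_1;n_2,x_2)=-\phi_{n_1,n_2}(x_1,x_2)+\sum_{\ell=1}^{n_2}\sum_{k=1}^N[G^{-1}]_{k,\ell}(\phi_{n_1,N}*\Psi_k)(x_1)(\phi_\ell*\phi_{\ell,n_2})(\virt,x_2)$.

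The main obstacle is bookkeeping in the inversion of the block operator: one must keep careful track of which level indices the convolutions run over, handle the virtual-variable row correctly (it contributes the $(\phi_\ell*\phi_{\ell,n_2})(\virt,\cdot)$ terms and makes $G$ well-defined), and check that the formally infinite Neumann series truncates. Rather than grinding through this by hand I would invoke Lemma~3.4 of~\cite{BFPS06}, where precisely this computation is carried out; the role of the present proof is then simply to exhibit (\ref{eq3.12}) as a conditional $L$-ensemble with the stated $L$, after which Theorem~\ref{ThmConditionalLensembles} and that reference deliver the kernel. One should also remark that the invertibility of $G$ (equivalently, that the normalization $\det(\Id_\Y+L)$ is nonzero) is assumed implicitly, as is standard, and that the kernel is only determined up to the conjugation freedom of Lemma~\ref{lemConjugate}.
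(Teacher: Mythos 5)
Your proposal follows essentially the same route as the paper: exhibit the measure (\ref{eq3.12}) as a conditional $L$-ensemble with the stated block matrix $L$ (using exactly the rank-count on the $E_n$ and $\Psi$ columns to pin down the particle numbers and the block factorization of $\det(L_{Y\cup\Y^c})$), then invoke Theorem~\ref{ThmConditionalLensembles} and defer the explicit inversion yielding the kernel to Lemma~3.4 of~\cite{BFPS06}, which is precisely what the text does. Your added remarks on the Neumann-series truncation, the invertibility of $G$, and the conjugation freedom are consistent with that reference and do not change the argument.
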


\begin{remark}
To have a manageable form of the kernel and take asymptotics, one usually tries to find a change of basis in (\ref{eq3.8}), resp.\ (\ref{eq3.12}), such that the matrix $G$ to be inverted becomes the identity matrix. In the classical examples this is possible by using orthogonal polynomials. Generically, one looks for biorthogonal ensembles~\cite{Bor98} (see for instance Lemma 3.4 of~\cite{BFPS06}). See also Section~\ref{sectApplication} below for an application.
\end{remark}

\subsubsection{A generalization}
Consider $c(1),\ldots,c(N)$ be arbitrary nonnegative integers and let
\begin{equation}
t_0^N\leq \cdots\leq t_{c(N)}^N=t_0^{N-1}\leq \cdots\leq t_{c(N-1)}^{N-1}=t_0^{N-2}\leq \cdots\leq t_{c(2)}^2=t_0^1\leq \cdots\leq t_{c(1)}^1
\end{equation}
be real numbers (that in our case are the observation times of the state of our Markov chain on $GT_N$). Let $\phi_n$ and $\Psi$ be as in (\ref{eq3.13}) and
\begin{equation}
\T_{t_a^n,t_{a-1}^n}(\cdot,\cdot):\X_n\times\X_n\to \C,\quad n=1,\ldots,N,\quad a=1,\ldots,c(n)
\end{equation}
be arbitrary functions.

Let our configurations live in the space \mbox{$\Y=\X^{(1)}\cup \cdots\cup \X^{(N)}$}, with \mbox{$\X^{(n)}=\X_0^{(n)}\cup\cdots\cup\X_{c(n)}^{(n)}$}, where each of the $\X_a^{(n)}$ is a copy of the space $\X_n$ where the variables live at time $t_a^n$. Then, consider the point process whose point configurations $Y\in \Y$ have weight zero unless it has exactly $n$ points in each copy of $\X_n$, $n=1,\ldots,N$. In the latter case, we denote by $x_k^n(t_a^n)$ the points of $Y$ in the $a$-th copy of $\X_n$, for $k=1,\ldots,n$, and we assign a measure of $Y$ given by
\begin{multline}\label{eqPushASEPmeasure}
\const\, \prod_{n=1}^{N}\bigg[\det[\phi_n(x_i^{n-1}(t_0^{n-1}),x_j^n(t_{c(n)}^n))]_{i,j=1}^n\\
\times \prod_{a=1}^{c(n)}\det[\T_{t_a^n,t_{a-1}^n}(x_j^n(t_a^n),x_i^n(t_{a-1}^n))]_{i,j=1}^n\bigg] \det[\Psi_{i}(x_j^N(t_0^{N}))]_{i,j=1}^N,
\end{multline}
where again $x_n^{n-1}(\cdot)=\virt$ for all $n=1,\ldots,N$. Remark that (\ref{eq2.40}) is a special case of such a measure.

A measure of the form (\ref{eqPushASEPmeasure}) is determinantal. To describe the kernel we need some notations. For any $n=1,\ldots,N$ and two time moments $t_a^n>t_b^n$, we define
\begin{equation}
\T_{t_a^n,t_b^n}=\T_{t_a^n,t_{a-1}^n} * \T_{t_{a-1}^n,t_{a-2}^n} * \cdots * \T_{t_{b+1}^n,t_b^n},\quad \textrm{and}\quad \T^n=\T_{t^n_{c(n)},t_0^n}.
\end{equation}
Further, for any two time moments $t_{a_1}^{n_1}\geq t_{a_2}^{n_2}$ with $(a_1,n_1)\neq (a_2,n_2)$, we denote by
\begin{equation}
\phi^{(t_{a_1}^{n_1},t_{a_2}^{n_2})}= \T_{t_{a_1}^{n_1},t_{0}^{n_1}}*\phi_{n_1+1}*\T^{n_1+1}*\cdots * \phi_{n_2} * \T_{t_{c(n_2)}^{n_2},t_{a_2}^{n_2}}
\end{equation}
the convolution over all the transitions between them. Is there are no such transitions, i.e., if we do not have $(t_{a_1}^{n_1},n_1)\prec (t_{a_2}^{n_2},n_2)$, then we set \mbox{$\phi^{(t_{a_1}^{n_1},t_{a_2}^{n_2})}=0$}. Finally, define the $N\times N$ matrix $G$ by
\begin{equation}
G_{i,j}=(\phi_i * \T^i * \cdots * \phi_N * \T^N * \Psi_j)(\virt),\quad 1\leq i,j\leq N
\end{equation}
and set
\begin{equation}
\Psi^{t_a^n}_j=\phi^{(t_a^n,t_0^N)} * \Psi_j,\quad 1\leq j\leq N.
\end{equation}
Then, by applying Theorem~\ref{ThmConditionalLensembles} one proves the following.
\begin{thm}[Borodin, Ferrari; Theorem 4.2 of~\cite{BF07}]\label{thmPushASEPKernel}
The random point process on $\Y$ defined by (\ref{eqPushASEPmeasure}) is determinantal. Its correlation kernel can be written as
\begin{equation}
\begin{aligned}
K(t_{a_1}^{n_1},x_1;t_{a_2}^{n_2},x_2)=&-\phi^{(t_{a_1}^{n_1},t_{a_2}^{n_2})}(x_2,x_1)\\
&+\sum_{k=1}^{N} \sum_{\ell=1}^{n_2} [G^{-1}]_{k,\ell} \Psi_k^{t_{a_1}^{n_1}}(x_1) (\phi_\ell * \phi^{(t^\ell_{c(\ell)},t^{n_2}_{a_2})})(\virt,x_2).
\end{aligned}
\end{equation}
In the case when the matrix $G$ is upper triangular, there is a simpler way to write the kernel.
Set
\begin{equation}\label{eq4.37}
\Phi^{t_a^n}_k(x)=\sum_{\ell=1}^n [G^{-1}]_{k,\ell} \big(\phi_\ell * \phi^{(t^\ell_{c(\ell)},t^n_a)}\big)(\virt,x)
\end{equation}
for all $n=1,\ldots,N_1$ and $k=1,\ldots,n$. Then, $\big\{\Phi^{t^n_a}_{k}\big\}_{k=1,\ldots,n}$ is the unique basis of the linear span of
\begin{equation}\label{eq2.16}
\Big\{(\phi_1* \phi^{(t^1_{c(1)},t^{n}_a)})(\virt,x), \ldots, (\phi_{n}*\phi^{(t^{n}_{c(n)},t^{n}_a)})(\virt,x) \Big\}
\end{equation}
that is biorthogonal to $\{\Psi^{t^n_a}_{k}\}$, i.e., satisfying
\begin{equation}
\Phi^{t^n_a}_i *\Psi^{t^n_a}_j = \delta_{i,j},\quad i,j=1,\ldots,n.
\end{equation}
The correlation kernel can then be written as
\begin{equation}
K(t^{n_1}_{a_1},x_1; t^{n_2}_{a_2},x_2)= -\phi^{(t^{n_1}_{a_1},t^{n_2}_{a_2})}(x_1,x_2) + \sum_{k=1}^{n_2} \Psi^{t^{n_1}_{a_1}}_{k}(x_1) \Phi^{t^{n_2}_{a_2}}_{k}(x_2).
\end{equation}
\end{thm}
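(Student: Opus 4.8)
The plan is to deduce the statement from Theorem~\ref{ThmConditionalLensembles} by realising the measure \eqref{eqPushASEPmeasure} as a constant multiple of the weight of a conditional $L$-ensemble, in exactly the way Theorem~\ref{thmEynardMehta} and Theorem~\ref{ThmMinors} were obtained; the only genuinely new feature is the presence of the within-level transitions $\T_{t_a^n,t_{a-1}^n}$, which lengthen the relevant chain but introduce no new idea.

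Concretely, I would take $\X=\{1,\ldots,N\}\cup\Y$ and build on $\X$ a block matrix $L$ that generalises the matrix used for Theorem~\ref{ThmMinors} by ``inflating'' each level $n$ into its $c(n)+1$ time-copies: the $n$-th coordinate of the auxiliary block $\{1,\ldots,N\}$ feeds into the last time-copy of $\X_n$ through $\phi_n(\virt,\cdot)$; inside level $n$ the consecutive time-copies are linked by $-\T_{t_a^n,t_{a-1}^n}$; the first time-copy of level $n$ is linked to the last time-copy of level $n+1$ by the non-virtual part of $-\phi_{n+1}$; and finally $\Psi$ closes the cycle from the last time-copy of $\X_N$ back to the auxiliary block. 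Then I would run the rank-counting argument already used for Theorem~\ref{ThmMinors}: the auxiliary columns force any nonvanishing $\det(L_{Y\cup\Y^c})$ to contain exactly $N$ points in the last time-copy of $\X_N$, and pushing this constraint backwards along the chain --- through the $\T$'s and $\phi$'s of level $N$, then of level $N-1$, and so on down to level $1$ --- forces exactly $n$ points in every time-copy of $\X_n$. On such configurations the block structure makes $\det(L_{Y\cup\Y^c})$ equal, up to sign and the overall normalisation, to the product of the determinants of the square blocks, i.e.\ precisely \eqref{eqPushASEPmeasure}. (That the submatrix determinants have the right sign, or more generally that everything is a formal algebraic identity when the measure is signed, is handled exactly as in the remark after Theorem~\ref{ThmConditionalLensembles} and as in \cite{BFPS06,BF07}; for the probability measure \eqref{eq2.40} positivity is automatic.)

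Theorem~\ref{ThmConditionalLensembles} then gives $K=\Id_\Y-(\Id_\Y+L)^{-1}\big|_{\Y\times\Y}$, and the remaining work is to compute this inverse. I would separate the $\{1,\ldots,N\}$ block from the $\Y$ block and use the Schur-complement form of the block inverse. The key observation is that $L$ restricted to $\Y$ is strictly triangular along the chain, so $(\Id_\Y+L_{\Y\Y})^{-1}=\sum_{k\ge 0}(-L_{\Y\Y})^k$ is a finite sum; summing over all directed paths in the chain reproduces exactly the convolutions $\phi^{(t_{a_1}^{n_1},t_{a_2}^{n_2})}$ with the correct signs, and the convention that $\phi^{(\cdot,\cdot)}=0$ when $(t_{a_1}^{n_1},n_1)\not\prec(t_{a_2}^{n_2},n_2)$ is automatic, since no such path exists. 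The Schur complement of the $\Y$ block turns out to be the matrix $G$ (the closed loop through all levels evaluated at $\virt$), and feeding $G^{-1}$ back through the $\Psi$ and $\phi_\ell$ legs produces the rank-$N$ correction $\sum_{k=1}^{N}\sum_{\ell=1}^{n_2}[G^{-1}]_{k,\ell}\Psi_k^{t_{a_1}^{n_1}}(x_1)(\phi_\ell*\phi^{(t_{c(\ell)}^\ell,t_{a_2}^{n_2})})(\virt,x_2)$. Subtracting the $\Y\times\Y$ block of $(\Id_\Y+L)^{-1}$ from $\Id_\Y$ and reading off the $(t_{a_1}^{n_1},x_1;t_{a_2}^{n_2},x_2)$ entry yields the asserted kernel. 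For the upper-triangular case, I would note that for each fixed $(n,a)$ the functions $\Phi_k^{t_a^n}$ of \eqref{eq4.37} are obtained from the spanning set \eqref{eq2.16} by a triangular change of coefficients, hence form a basis of its span; the relation $\Phi_i^{t_a^n}*\Psi_j^{t_a^n}=\delta_{i,j}$ is immediate from the definitions of $G$ and of $\Psi_j^{t_a^n}$, and upper-triangularity of $G$ is exactly what makes these bases compatible across different levels, so that substituting \eqref{eq4.37} collapses the double sum into $\sum_{k=1}^{n_2}\Psi_k^{t_{a_1}^{n_1}}(x_1)\Phi_k^{t_{a_2}^{n_2}}(x_2)$.

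I expect the main obstacle to be organisational rather than conceptual: setting up the two-index (level $n$ and time $a$) block matrix so that both the rank-counting and the resummation of the finite Neumann series go through transparently, and carefully tracking the signs coming from the $-\T$ and $-\phi$ blocks through the Schur complement. Once the bookkeeping is fixed, the two substantive points --- that the Schur complement equals $G$, and that the geometric series is exactly the family of $\phi^{(\cdot,\cdot)}$ convolutions with the vanishing convention built in --- are short.
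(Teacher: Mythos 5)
Your proposal is correct and takes essentially the same route the paper indicates: realise \eqref{eqPushASEPmeasure} as a conditional $L$-ensemble on $\X=\{1,\ldots,N\}\cup\Y$ with a block matrix $L$ generalising those used for Theorem~\ref{thmEynardMehta} and Theorem~\ref{ThmMinors} (time-copies within each level linked by $-\T$, levels linked by $-\phi$, $\Psi$ closing the loop), then apply Theorem~\ref{ThmConditionalLensembles} and resum the finite Neumann series/Schur complement to recover the $\phi^{(\cdot,\cdot)}$ convolutions and the matrix $G$, exactly as in~\cite{BF07}. The only small gloss is that the biorthogonality $\Phi_i^{t_a^n}*\Psi_j^{t_a^n}=\delta_{i,j}$ for $i,j\leq n$ is not purely definitional: it reduces to $\sum_{\ell=1}^{n}[G^{-1}]_{i,\ell}G_{\ell,j}=\delta_{i,j}$, and extending this truncated sum to $\ell=1,\ldots,N$ itself uses the upper triangularity of $G$, just as the collapse of the double sum does.
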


\subsection{Application to the measure of Proposition~\ref{PropMeasureContTime}}\label{sectApplication}
Consider the measure (\ref{eq2.40}) obtained by starting the Markov chain on $GT_N$ from \textbf{packed initial condition}, i.e., with $x_k^n(0)=-n-1+k$ for \mbox{$1\leq k \leq n \leq N$}, see Figure~\ref{Figure2dDynamics} for an illustration.
\begin{figure}[t!]
\begin{center}
\psfrag{x11}[cl]{$x_1^1$}
\psfrag{x12}[cl]{$x_1^2$}
\psfrag{x13}[cl]{$x_1^3$}
\psfrag{x14}[cl]{$x_1^4$}
\psfrag{x15}[cl]{$x_1^5$}
\psfrag{x22}[cl]{$x_2^2$}
\psfrag{x23}[cl]{$x_2^3$}
\psfrag{x24}[cl]{$x_2^4$}
\psfrag{x25}[cl]{$x_2^5$}
\psfrag{x33}[cl]{$x_3^3$}
\psfrag{x34}[cl]{$x_3^4$}
\psfrag{x35}[cl]{$x_3^5$}
\psfrag{x44}[cl]{$x_4^4$}
\psfrag{x45}[cl]{$x_4^5$}
\psfrag{x55}[cl]{$x_5^5$}
\includegraphics[width=\textwidth]{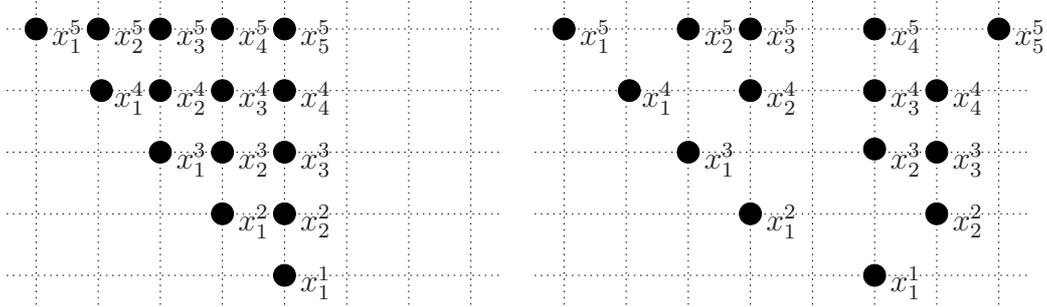}
\caption{(Left) Illustration of the initial conditions for the particles system. (Right) A configuration obtained from the initial conditions. For a Java animation of the model see~\cite{FerAKPZ}.}
\label{Figure2dDynamics}
\end{center}
\end{figure}
Then, since the Vandermonde determinant is a determinant, (\ref{eq2.40}) is a measure of the form (\ref{eqPushASEPmeasure}).
The goal of this section is to determine an explicit and ``asymptotic friendly'' formula for the correlation kernel. By noticing that the probability distribution of a given number of TASEP particles is nothing else that a gap probability, it follows from Lemma~\ref{lemGapProbability} that their joint distribution is a Fredholm determinant of the correlation kernel. Large time asymptotics are then obtained by analyzing the kernel under certain scaling limit.

\subsubsection{Correlation kernel}
To obtain the correlation kernel, first a result on the measure at time $t=0$ on $GT_N$. This is obtained by setting $\mu_N$ to be the delta-measure on \mbox{$x^N=(-N,-N+1,\ldots,-1)$}. The interlacing condition fixes then all the particles at lower levels and the measure on $GT_N$ is given by (\ref{eq2.28a}).
\begin{lem}\label{lemPackedIC}
Consider the probability measure on $W_N$ given by
\begin{equation}
\mu_N(x^N)=\const\, \Delta_N(x^N) \det(\Psi_i(x_j^N))_{i,j=1}^N
\end{equation}
where
\begin{equation}\label{eq3.35}
\Psi_i(x)=\frac{1}{2\pi\I}\oint_{\Gamma_0}\dx z (1-z)^{N-i}z^{x+i-1},\quad 1\leq i\leq N.
\end{equation}
Then $\mu_N(x^N)=\delta_{x^N,(-N,\ldots,-1)}$.
\end{lem}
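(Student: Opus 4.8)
The plan is to compute the contour integrals defining $\Psi_i(x)$ explicitly and then recognize the resulting matrix as triangular, so that the determinant becomes a single monomial — after which the Vandermonde prefactor and the normalization pin down the delta-measure. First I would evaluate $\Psi_i(x)$ for $x$ in the relevant range. Expanding $(1-z)^{N-i}=\sum_{\ell=0}^{N-i}\binom{N-i}{\ell}(-z)^\ell$, the integrand becomes a sum of terms $\binom{N-i}{\ell}(-1)^\ell z^{x+i-1+\ell}$, and the contour integral $\frac{1}{2\pi\I}\oint_{\Gamma_0}z^k\,\dx z$ picks out $k=-1$. Hence $\Psi_i(x)=\binom{N-i}{-x-i}(-1)^{-x-i}$, which is nonzero only when $0\leq -x-i\leq N-i$, i.e.\ when $-N\leq x\leq -i$. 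In particular $\Psi_i(-i)=\binom{N-i}{0}=1$, while $\Psi_i(x)=0$ for $x>-i$.

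Next I would feed this into the determinant $\det(\Psi_i(x_j^N))_{i,j=1}^N$, where $x^N=(x_1^N<\cdots<x_N^N)$ runs over $W_N$. Since $\Psi_i(x)=0$ unless $x\leq -i$, the $(i,j)$ entry forces $x_j^N\leq -i$. Combined with the strict ordering $x_1^N<x_2^N<\cdots<x_N^N$ and the fact that these are integers, the only configuration in $W_N$ that can give a nonzero determinant is the packed one $x_j^N=-N-1+j$, i.e.\ $(-N,-N+1,\ldots,-1)$: indeed $x_N^N\leq -1$ forces $x_N^N=-1$, then $x_{N-1}^N\leq -2$ together with $x_{N-1}^N<x_N^N=-1$ forces $x_{N-1}^N=-2$, and inductively $x_j^N=-N-1+j$. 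At this configuration the matrix $\Psi_i(x_j^N)=\Psi_i(-N-1+j)$ is lower-triangular with $1$'s on the diagonal (the $(i,i)$ entry is $\Psi_i(-i)=1$, and for $j<i$ we have $x_j^N=-N-1+j<-i$... — more carefully, one checks $\Psi_i(-N-1+j)\neq 0$ requires $-N-1+j\leq -i$, i.e.\ $i\leq N+1-j$, so the matrix is anti-triangular, but in any case its determinant is $\pm1$), hence the determinant is $\pm1$, a single nonzero value.

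Therefore $\mu_N(x^N)$, being $\const\,\Delta_N(x^N)\det(\Psi_i(x_j^N))$, is supported on the single point $(-N,\ldots,-1)$; since $\Delta_N$ and the determinant are both nonzero there, the normalization constant is forced and $\mu_N=\delta_{x^N,(-N,\ldots,-1)}$. The main thing to get right is the bookkeeping of which $(i,j)$ pairs give nonzero entries and checking that the ordering constraint in $W_N$ leaves exactly one admissible configuration; this is elementary but must be done carefully to get the triangularity (and the correct sign, though the sign is absorbed into $\const$). I expect the only mild subtlety is keeping track of the shift by $1$ in the exponent $z^{x+i-1}$ versus the index $i-1$ of the Vandermonde $\det(z^{i-1})$, which is exactly what makes the matrix triangular rather than merely sparse.
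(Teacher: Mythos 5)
Your proposal is correct and follows essentially the same route as the paper: both rest on the observation that $\Psi_i$ is supported on $\{-N,\ldots,-i\}$ with $\Psi_i(-i)=1$ (which you obtain by the explicit residue computation), from which the determinant vanishes unless $x^N=(-N,\ldots,-1)$, where it equals $\pm 1$ by the (anti-)triangular structure. The paper phrases the elimination step slightly more cleanly---any $x_j\notin\{-N,\ldots,-1\}$ produces a zero column, and the $N$ distinct points must therefore be exactly $\{-N,\ldots,-1\}$---which is the tidier way to justify the inductive pinning-down you sketch, but the content is the same.
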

\begin{proof}
First notice that $\textrm{span}(\Psi_1,\ldots,\Psi_N)$ is exactly the space of all functions on $\Z$ supported by $\{-N,\ldots,-1\}$. This is due to (1)  $\Psi_i$ is supported on $\{-N,\ldots,-i\}$, (2) they are linearly independent since $\Psi_i(-i)=1\neq 0$, and (3) the mentioned space has dimension $N$. Thus, $\det(\Psi_i(x_j^N))_{i,j=1}^N=0$ if at least one of the $x_j^N\not\in\{-N,\ldots,-1\}$. By antisymmetry of the determinant $\det(\Psi_i(x_j^N))_{i,j=1}^N=0$ if two of the $x_j^N$ are equal. Thus the only configuration in $W_N$ such that $\det(\Psi_i(x_j^N))_{i,j=1}^N\neq 0$ is $x^N=(-N,\ldots,-1)$.
\end{proof}

The reason of the choice of the particular form of $\Psi$ in Lemma~\ref{lemPackedIC} becomes clear when one starts doing the computations and apply $\T$ and the $\phi_n$'s. It is easy to see that the measure of Proposition~\ref{PropMeasureContTime} is a special case of the measure (\ref{eqPushASEPmeasure}). Indeed, we need to set $t_0^N=0$, $\Psi_i(x)$ as in (\ref{eq3.35}), and
\begin{equation}
\begin{aligned}
\T_{t_i,t_j}(x,y)&=\frac{1}{2\pi\I}\oint_{\Gamma_0}\dx z e^{(t_j-t_i)/z} z^{x-y-1}=\frac{1}{2\pi\I}\oint_{\Gamma_0}\dx w \frac{e^{(t_j-t_i)w}}{w^{x-y+1}},\\
\phi_n(x,y)&=\Id_{[y\geq x]}=\frac{1}{2\pi\I}\oint_{\Gamma_0}\dx w \frac{(1-w)^{-1}}{x^{x-y+1}},\quad \phi_n(\virt,y)=1.
\end{aligned}
\end{equation}

With these ingredients we can apply Theorem~\ref{thmPushASEPKernel} and obtain the kernel. This needs some computations, which can be found in the proof of Proposition 3.1 of~\cite{BF07} and its specialisation to the so-called ``step initial condition'' in Section 3.2 of~\cite{BF07}. The result is the following.

\begin{thm}[Proposition 3.4 of~\cite{BF07}]\label{thmspacelikekernel}
The point process issued by the continuous Markov chain on $GT_N$ of Section~\ref{sectIntertwining}, i.e., generating the measure of Proposition~\ref{PropMeasureContTime}, is determinantal along ``space-like paths'', i.e., if we look at decreasing levels by increasing times. That is, for any $m\in \N$, pick $m$ (distinct) triples
\begin{equation}
(n_i,t_i,x_j)\in\N\times\R_+\times\Z
\end{equation}
such that $n_1\leq n_2\leq \ldots\leq n_m$ and $t_1\geq t_2\geq \ldots\geq t_m$. Then,
\begin{multline}
\Pb(\{\textrm{For each }j=1,\ldots,m\textrm{ there is a }k_j, 1\leq k_j\leq n_j\\
\textrm{ such that }x_{k_j}^{n_j}(t_j)=x_j\})=\det[K(n_i,t_i,x_i;n_j,t_j,x_j)]_{i,j=1}^m,
\end{multline}
where the correlation kernel can be written as\footnote{Here we write $(n_1,t_1)$ instead of $t_{a_1}^{n_1}$ to make more explicit the dependence on the two entries, time and level.}
\begin{equation}
\begin{aligned}
&K(n_1,t_1,x_1;n_2,t_2,x_2)\\
&=-\frac{1}{2\pi\I}\oint_{\Gamma_{0}}\dx w \frac{1}{w^{x_1-x_2+1}}\left(\frac{w}{1-w}\right)^{n_2-n_1} e^{(t_1-t_2)w}\Id_{[(n_1,t_1)\prec (n_2,t_2)]}\\
&+\frac{1}{(2\pi\I)^2}\oint_{\Gamma_1}\dx z\oint_{\Gamma_{0}}\dx w \frac{e^{t_1 w}(1-w)^{n_1}}{w^{x_1+n_1+1}}\frac{z^{x_2+n_2}}{e^{t_2 z}(1-z)^{n_2}}\frac{1}{w-z}.
\end{aligned}
\end{equation}
\end{thm}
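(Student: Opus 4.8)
The plan is to obtain the kernel by a direct application of Theorem~\ref{thmPushASEPKernel} to the measure of Proposition~\ref{PropMeasureContTime}, which (as noted in the text) is a special case of (\ref{eqPushASEPmeasure}) with the ingredients $\Psi_i$, $\T_{t_i,t_j}$ and $\phi_n$ written in their contour-integral forms. First I would check that, with $\Psi_i(x)=\frac{1}{2\pi\I}\oint_{\Gamma_0}\dx z\,(1-z)^{N-i}z^{x+i-1}$, the matrix $G_{i,j}=(\phi_i*\T^i*\cdots*\phi_N*\T^N*\Psi_j)(\virt)$ is upper triangular (in fact one wants $G=\Id$ or close to it, after possibly rescaling the $\Psi_i$'s by constants). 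This is where the particular choice of $\Psi_i$ in Lemma~\ref{lemPackedIC} pays off: convolving $\phi_n$ (a geometric-type kernel contributing a factor $(1-w)^{-1}$) and $\T^n$ (contributing $e^{t w}$) against the $z^{x+i-1}(1-z)^{N-i}$ structure keeps everything inside the class of functions of the form $\frac{1}{2\pi\I}\oint_{\Gamma_0}\dx z\,g(z)z^{x}$, and the residue at $\virt$ (i.e.\ the action of $\phi_n(\virt,\cdot)=1$) picks out a coefficient that vanishes unless $i\le j$. So in the triangular case the kernel reduces to the biorthogonal form $K=-\phi^{(t^{n_1}_{a_1},t^{n_2}_{a_2})}+\sum_{k=1}^{n_2}\Psi^{t^{n_1}_{a_1}}_k\Phi^{t^{n_2}_{a_2}}_k$ from Theorem~\ref{thmPushASEPKernel}.

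The second step is to compute the two pieces explicitly. The term $-\phi^{(t^{n_1}_{a_1},t^{n_2}_{a_2})}$ is the convolution $\T_{t_1,t_0^{n_1}}*\phi_{n_1+1}*\T^{n_1+1}*\cdots*\phi_{n_2}*\T_{t^{n_2}_{c(n_2)},t_2}$; since each factor is a single contour integral around $0$ of a ratio of the form $e^{\tau w}$, $(1-w)^{-1}$, or $w^{-k}$, the convolutions collapse (each intermediate sum is a geometric series evaluated by residues) into the single integral $-\frac{1}{2\pi\I}\oint_{\Gamma_0}\dx w\,\frac{1}{w^{x_1-x_2+1}}\big(\frac{w}{1-w}\big)^{n_2-n_1}e^{(t_1-t_2)w}$, with the indicator $\Id_{[(n_1,t_1)\prec(n_2,t_2)]}$ simply recording when there are transitions to convolve (otherwise $\phi^{(\cdot,\cdot)}=0$ by definition). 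For the double sum $\sum_k \Psi^{t_1}_k\Phi^{t_2}_k$, one uses that $\Psi^{t_1}_k(x_1)=(\phi^{(t_1,t^N_0)}*\Psi_k)(x_1)$ is again a single integral $\frac{1}{2\pi\I}\oint_{\Gamma_0}\dx w\,\frac{e^{t_1 w}(1-w)^{n_1}}{w^{x_1+n_1+1}}$ times the $k$-dependent polynomial coming from $\Psi_k$, and with $G=\Id$ the biorthogonal dual $\Phi^{t_2}_k(x_2)$ is (by the explicit biorthogonality) $\frac{1}{2\pi\I}\oint_{\Gamma_0}\dx z\,\frac{z^{x_2+n_2}}{e^{t_2 z}(1-z)^{n_2}}\,z^{k-1}(\text{suitable factor})$. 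Summing over $k$ a geometric series in $z/w$ (valid on contours with $|z|>|w|$, whence the $z$-contour $\Gamma_1$ encircling both $0$ and $1$ while $w$ stays on $\Gamma_0$) produces the factor $\frac{1}{w-z}$ and gives the stated double-contour term.

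The main obstacle, and the part that needs genuine care rather than bookkeeping, is controlling the \textbf{contours} throughout: one must verify that all the geometric-series resummations used to collapse the convolutions and to perform the $k$-sum converge on the chosen contours, that the $z$-contour $\Gamma_1$ can be taken to enclose both $0$ and $1$ without the integrand developing spurious poles (note $\Psi_k$ has no pole at $1$, but $\phi$ contributes $(1-w)$-powers that are fine since $w\in\Gamma_0$), and that no residue at $z=w$ or at $z=1$ is lost or double-counted when deforming. A secondary technical point is checking the claimed upper-triangularity of $G$ and pinning down the normalization constant absorbed into $\const$; this is the ``needs some computations'' alluded to in the text and is carried out in detail in the proof of Proposition~3.1 and Section~3.2 of~\cite{BF07}, so I would reference that computation rather than reproduce it, after setting up the dictionary between the ingredients here and those of Theorem~\ref{thmPushASEPKernel}. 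The rest — the space-like determinantal formula for the joint occupation probabilities — is then immediate from Theorem~\ref{thmPushASEPKernel} combined with Proposition~\ref{PropMeasureContTime}.
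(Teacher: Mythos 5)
Your proposal follows essentially the same route as the paper: identify the measure of Proposition~\ref{PropMeasureContTime} as a special case of (\ref{eqPushASEPmeasure}) with the specific $\Psi_i$ of Lemma~\ref{lemPackedIC} and the contour-integral forms of $\T$ and $\phi_n$, apply Theorem~\ref{thmPushASEPKernel}, and defer the explicit biorthogonalization/contour computations to Proposition~3.1 and Section~3.2 of~\cite{BF07} — which is precisely what the text does. Your sketch of how those computations collapse into the stated single and double contour integrals is a reasonable elaboration of the same argument, so there is nothing to correct.
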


Using Corollary~\ref{CorDistrTASEP} and Lemma~\ref{lemGapProbability} the joint distribution of TASEP particles are distributed as follows.
\begin{cor}\label{corTASEP}
Consider a system of TASEP particles starting with step initial condition, i.e., with $x^n_1(0)=-n$ for $n\geq 1$. Denote by $x_1^n(t)$ the position of particle with index $n$ at time $t$. Then, provided $(n_1,t_2)\prec\cdots\prec(n_m,t_m)$, the joint distribution of particle positions is given by the Fredholm determinant
\begin{equation}
\Pb\left(\bigcap_{k=1}^m\{x_1^{n_k}(t_k)\geq s_k\}\right)=\det(\Id-P_s K P_s)_{\ell^2(\{(n_1,t_1),\ldots,(n_m,t_m)\}\times\Z)}
\end{equation}
with $P_s((n_k,t_k))(x)=\Id_{[x<s_k]}$. Explicitly, the above Fredholm determinant can be written as
\begin{equation}\label{eq3.41}
\sum_{n\geq 0}\frac{(-1)^n}{n!}\sum_{\ell_1=1}^m\cdots \sum_{\ell_n=1}^m\sum_{x_1<s_{\ell_1}}\cdots\sum_{x_n<s_{\ell_n}} \det(K(n_{\ell_i},t_{\ell_i},x_i;n_{\ell_j},t_{\ell_j},x_j))_{i,j=1}^n,
\end{equation}
with the kernel $K$ as in Theorem~\ref{thmspacelikekernel}.
\end{cor}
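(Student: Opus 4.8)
The plan is to reduce the claim to the determinantal structure of Theorem~\ref{thmspacelikekernel} via the gap-probability identity of Lemma~\ref{lemGapProbability}. First I would use that TASEP is the projection of the $GT_N$ chain onto the particles $x_1^n$, and that within a level one has $x_1^n<x_2^n<\cdots<x_n^n$, so $x_1^n$ is the leftmost particle at level $n$; hence $\{x_1^{n_k}(t_k)\geq s_k\}$ is exactly the event that no particle of level $n_k$ at time $t_k$ lies in $(-\infty,s_k)$. This is the pathwise content of Corollary~\ref{CorDistrTASEP}, applied simultaneously at the times $t_1,\ldots,t_m$. Writing $B=\bigcup_{k=1}^m\{(n_k,t_k)\}\times\{x\in\Z:x<s_k\}$ and letting $X$ denote the random point configuration of the $GT_N$ process read along the space-like path $(n_1,t_1)\prec\cdots\prec(n_m,t_m)$, this gives
\begin{equation}
\Pb\left(\bigcap_{k=1}^m\{x_1^{n_k}(t_k)\geq s_k\}\right)=\Pb\big(|X\cap B|=\emptyset\big).
\end{equation}
Since $\prec$ is a strict partial order, the level-times $(n_k,t_k)$ are pairwise distinct and $B$ is a disjoint union inside $\X'=\{(n_1,t_1),\ldots,(n_m,t_m)\}\times\Z$.

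Next I would invoke Theorem~\ref{thmspacelikekernel}: along this space-like path the point process $X$ is determinantal with correlation kernel $K$. A key point is that the kernel displayed there is independent of $N$, so for a fixed finite collection of triples it suffices to work with any $N\geq\max_k n_k$, which in particular accommodates the step initial condition $x_1^n(0)=-n$, $n\geq 1$, with its infinitely many particles. Applying Lemma~\ref{lemGapProbability} to the set $B$ then yields
\begin{equation}
\Pb\big(|X\cap B|=\emptyset\big)=\det(\Id-K)_{\ell^2(B)}=\det(\Id-P_s K P_s)_{\ell^2(\X')},
\end{equation}
with $P_s((n_k,t_k))(x)=\Id_{[x<s_k]}$ the projection onto $B$; expanding this Fredholm determinant as the series in Lemma~\ref{lemGapProbability} and organising the $n$-fold sum over $B$ according to which level-time $(n_{\ell_i},t_{\ell_i})$ each of the $n$ points belongs to produces precisely (\ref{eq3.41}).

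All the analytic substance already sits in Theorem~\ref{thmspacelikekernel}; the remaining work is bookkeeping, and the only spots deserving a moment of care are: matching the strict versus non-strict conventions (the reason $\{x_1^{n_k}\geq s_k\}$ becomes the discrete gap region $\{x<s_k\}$ over $\Z$); the passage from finite $N$ to the step initial condition, which is harmless because $K$ does not depend on $N$ and only finitely many levels are tested; and the convergence of the Fredholm series, which holds because each level $n$ carries exactly $n$ particles, so the $n$-point correlation sums over $(-\infty,s_k)$ are finite and the usual determinantal bounds apply. I do not expect any genuine obstacle beyond reindexing $\det(\Id-P_s K P_s)$ into the multi-level form (\ref{eq3.41}).
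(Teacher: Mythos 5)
Your argument is correct and is exactly the route the paper takes: it combines Corollary~\ref{CorDistrTASEP} (translating $\{x_1^{n_k}(t_k)\geq s_k\}$ into a gap event), Theorem~\ref{thmspacelikekernel} (determinantal structure with kernel $K$ along space-like paths), and Lemma~\ref{lemGapProbability} (gap probability as a Fredholm determinant), followed by the reindexing that yields (\ref{eq3.41}). Your added remarks on the $N$-independence of the kernel and the finiteness of the correlation sums are sensible bookkeeping points and do not alter the argument.
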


\subsubsection{Diffusion scaling limit}
Now we consider the diffusion scaling limit for a fixed number of particles. Let us define the rescaled random variables
\begin{equation}
\xi_k^n(\tau):=\lim_{\e\to 0}\e\left(x_k^n(\tfrac12\tau \e^{-2})-\tfrac12 \tau \e^{-2}\right)
\end{equation}
Then, the correlation function of the $\xi_k^n$'s are, along space-like paths, still determinantal with kernel given by
\begin{equation}\label{eq3.43}
\begin{aligned}
K(n_1,\tau_1,\xi_1;n_2,\tau_2,\xi_2)=&-\frac{2}{2\pi\I}\int_{\I\R+\delta}\dx w \frac{e^{(\tau_1-\tau_2)w^2-2(\xi_1-\xi_2)w}}{w^{n_2-n_1}}\Id_{[(n_1,\tau_1)\prec (n_2,\tau_2)]}\\
&+\frac{2}{(2\pi\I)^2}\oint_{|z|=\delta/2}\dx z\int_{\I\R+\delta}\dx w \frac{e^{\tau_1 w^2-2\xi_1 w}w^{n_1}}{e^{\tau_2 z^2-2\xi_2 z}z^{n_2}}\frac{1}{w-z}
\end{aligned}
\end{equation}
where $\delta>0$ is arbitrary.

This is obtained by first defining the rescaled kernel according to Lemma~\ref{lemRescalingKernel} and then doing some asymptotic analysis on the kernel taking the $\e\to 0$ limit (the counting measure becomes the Lebesgue measure).

Since, for finite $n$ the measure if actually a finite determinant (of a matrix of size $n(n+1)/2$) of the correlation kernel, then from the convergence of the kernel follows also the convergence of the measure. In particular, the measure of Proposition~\ref{PropMeasureContTime} converges to the one with delta initial measure and the transition kernel $\T$ becomes the heat kernel.

\subsubsection{Large time limit at the edge}\label{sectAsymptAnalysis}
We can also let the particle level that we focus go to infinity linearly with time. The macroscopic behavior is described in detail in Section 3.1 of~\cite{BF08}. In the bulk the correlation kernel becomes the (extended) sine kernel, one has Gaussian fluctuations and sees a Gaussian Free Field, see Theorem 1.2 and Theorem 1.3 in~\cite{BF08}. Here we want to discuss the case of the edge, i.e., describe the rescaled particle process around the TASEP particles.

For simplicity, we describe the system at a fixed time. A statement along space-like paths can be found in Section 2.3 of~\cite{BF07}.
For a fixed $\alpha\in (0,1)$. TASEP particles with index $n$ close to $\alpha t$ are around position $(1-2\sqrt{\alpha})t$. More precisely, consider the scaling  of level and position at time $t$ given by
\begin{equation}
\begin{aligned}
n(u)&=\alpha t + 2 u t^{2/3},\\
x(u)&=(1-2\sqrt{\alpha})t-\frac{2 u}{\sqrt{\alpha}} t^{2/3}+\frac{u^2}{\alpha^{3/2}} t^{1/3}.
\end{aligned}
\end{equation}
Accordingly, define the rescaled TASEP particle process given by
\begin{equation}
X_{t}(u)=\frac{x_1^{n(u)}(t)-x(u)}{-t^{1/3}}.
\end{equation}
By Corollary~\ref{corTASEP} the joint distributions of the rescaled process is also given by a Fredholm determinant (just do the change of variables).

The analysis of the $t\to\infty$ limit can be made as follows (for a sketch of it see Section 5.2 of~\cite{BF07}, where the scaling holds also for generic space-like paths).\\[6pt]
(1) Define the rescaled kernel as in Lemma~\ref{lemRescalingKernel}, i.e., let
\begin{equation}\label{eq3.50}
K^{\rm resc}_t(u_1,s_1;u_2,s_2):=t^{1/3} K(n(u_1),t,x(u_1)-s_1 t^{1/3};n(u_2),t,x(u_2)-s_2 t^{1/3})
\end{equation}
where $K$ is as in Theorem~\ref{thmspacelikekernel}.\\[6pt]
(2) Do the steep descent analysis of the kernel under this rescaling (see e.g. Section 6.1 of~\cite{BF08} for a description of the single integral case, which can be easily adapted also to double integrals). The leading term in the double integral will come from a region around a double critical point $z_c$. In this case, one has $z_c=1-\sqrt{\alpha}$. After controlling the error terms of the integrals away from the critical points, one does the change of variable \mbox{$z=z_c+(\kappa t)^{-1/3} Z$} and $w=z_c+(\kappa t)^{-1/3} W$ with $\kappa=1/(\sqrt{\alpha}(1-\sqrt{\alpha}))$ and use Taylor approximation. Further controls on the error terms in the Taylor expansion leads to, for $n_1\geq n_2$,
\begin{multline}
\lim_{t\to\infty}K^{\rm resc}_t(u_1,s_1;u_2,s_2)\\
\equiv \frac{S_v^{-1}}{(2\pi\I)^2} \int \dx W \int \dx Z \frac{1}{Z-W} \frac{e^{\frac13 Z^3+ u_2 Z^2/S_h -Z(s_2/S_v-u_2^2/S_h^2)}}{e^{\frac13 W^3+ u_1 W^2/S_h -W(s_1/S_v-u_1^2/S_h^2)}}
\end{multline}
where
\begin{equation}
S_v=\frac{(1-\sqrt{\alpha})^{2/3}}{\alpha^{1/6}},\quad S_h=\alpha^{2/3} (1-\sqrt{\alpha})^{1/3}
\end{equation}
and $\equiv$ means that the equality holds up to a conjugation factor (see Lemma~\ref{lemConjugate}). The integration contours for $W$ and $Z$ can be chosen to be $-\delta+\I\R$ and $\delta+\I\R$ respectively (for any $\delta>0$ and oriented with increasing imaginary part).\\[6pt]
(3) Replacing $1/(Z-W)$ by $\int_{\R_+}\dx\lambda e^{-\lambda (Z-W)}$ and using the integral representation of the Airy functions one has the equality (see e.g. Appendix A of~\cite{BFP09})
\begin{equation}
(\ref{eq3.50})\equiv S_v^{-1} K_2(s_1/S_v,u_1/S_h;s_2/S_v,u_2/S_h),
\end{equation}
where $K_2$ is the extended Airy kernel given in (\ref{eq1.12}).\\[6pt]
(4) Finally, to see that the joint distributions converges, one has to show the convergence of the Fredholm determinants. For this is enough to get some uniform in $t$ estimates in the decay of the kernel for large $s_1,s_2$ and then apply dominated convergence (using also Hadamard's bound that says that the determinant of a $n\times n$ matrix with entries of absolute value not exceeding $1$ is bounded by $n^{n/2}$).
\smallskip

With the procedure described above one obtains
\begin{equation}\label{eq3.49}
\lim_{t\to\infty} X_{t}(u)=S_v{\cal A}_2(u /S_h),
\end{equation}
where ${\cal A}_2$ is the Airy$_2$ process defined by (\ref{eq1.11}) (in the sense of finite dimensional distributions).

\newpage
\section{Random matrices}\label{SectRandMatr}
In this section we go back to random matrix diffusions and will see the similarities with interacting particles above.

\subsection{Random matrix diffusion}
Instead of considering stationary Dyson's Brownian motion, to make the connection more straightforward, we replace the Ornstein-Uhlenbeck processes by Brownian motions starting from $0$. The two models are the same after an appropriate change of scale in space-time.

Let $H(t)$ be an $N\times N$ Hermitian matrix defined by
\begin{equation}
H_{i,j}(t)=
\begin{cases}
\frac{1}{\sqrt{2}}\, b_{i,i}(t), & \textrm{if }1\leq i \leq N, \\
\frac12(b_{i,j}(t)+\I \, \tilde b_{i,j}(t)), & \textrm{if }1\leq i < j \leq N, \\
\frac12(b_{i,j}(t)-\I \, \tilde b_{i,j}(t)), & \textrm{if }1\leq j < i \leq N,
\end{cases}
\end{equation}
where $b_{i,j}(t)$ and $\tilde b_{i,j}(t)$ are independent standard Brownian motions.
The measure on the $N\times N$ matrix at time $t$ is then given by
\begin{equation}
\const\, \exp\left(-\frac{\Tr(H^2)}{t}\right) \dx H.
\end{equation}

For $0<t_1<t_2<\ldots<t_m$, the joint distribution of $H_1=H(t_1)$, $H_2=H(t_2),\ldots, H_m=H(t_m)$ is given by
\begin{equation}\label{eq32}
{\rm const}\times \exp\left(-\frac{\Tr(H_1^2)}{t_1}\right) \prod_{k=1}^{m-1}\exp\left(-\frac{\Tr((H_{k+1}-H_k)^2)}{t_{k+1}-t_k}\right)\dx H_1\,\cdots\, \dx H_m.
\end{equation}
The measure on eigenvalues the Harish-Chandra/Itzykson-Zuber formula~\cite{HC57,IZ80} (\ref{eqIZHC1}). The result is the following.
\begin{lem}\label{LemGUEFixedn}
Denote by $\lambda(t)=(\lambda_1(t),\dots,\lambda_N(t))$ the eigenvalues of $H(t)$. Their joint distribution at $0<t_1<t_2<\ldots<t_m$ is given by
\begin{multline}
{\rm const}\times \Delta_N(\lambda(t_1)) \prod_{k=1}^{m-1}\det\left(e^{-(\lambda_{i}(t_{k})-\lambda_{j}(t_{k+1}))^2/(t_{k+1}-t_k)}\right)_{i,j=1}^N \Delta_N(\lambda(t_m)) \\
\times\prod_{i=1}^N e^{-(\lambda_{i}(t_1))^2 /t_1} \, \dx \lambda_{i}(t_1)\cdots\dx \lambda_{i}(t_m).
\end{multline}
\end{lem}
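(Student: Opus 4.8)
The plan is to integrate out the angular (unitary) degrees of freedom in (\ref{eq32}) by the Harish-Chandra/Itzykson-Zuber formula (\ref{eqIZHC1}), in the same spirit as the derivation of Lemma~\ref{lemGUEonePt} but now simultaneously at the $m$ times $t_1<\cdots<t_m$. First I would diagonalize each matrix, $H_k=U_k\Lambda^{(k)}U_k^*$ with $\Lambda^{(k)}=\diag(\lambda_1(t_k),\dots,\lambda_N(t_k))$, and apply the change to eigenvalue/angular variables used for Lemma~\ref{lemGUEonePt} to each factor $\dx H_k$, producing $\const\,\Delta_N(\lambda(t_k))^2\,\dx\lambda(t_k)\,\dx U_k$. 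Then I expand the quadratic forms in the exponent: $\Tr(H_1^2)=\sum_i\lambda_i(t_1)^2$ is already angle-free, while $\Tr((H_{k+1}-H_k)^2)=\sum_i\lambda_i(t_{k+1})^2+\sum_i\lambda_i(t_k)^2-2\Tr(H_{k+1}H_k)$, and the only angular dependence sits in $\Tr(H_{k+1}H_k)=\Tr\big(\Lambda^{(k+1)}W_k\Lambda^{(k)}W_k^*\big)$ with $W_k:=U_{k+1}^*U_k$.

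The key step is to decouple the $m$ Haar integrals. Changing variables $(U_1,\dots,U_m)\mapsto(U_m,W_1,\dots,W_{m-1})$ is a bijection of ${\cal U}(N)^m$ preserving the product Haar measure; since the integrand does not depend on $U_m$ one has $\int\dx U_m=1$, and what remains factorizes over $W_1,\dots,W_{m-1}$. Each factor $\int\dx W_k\,\exp\!\big(\tfrac{2}{t_{k+1}-t_k}\Tr(\Lambda^{(k+1)}W_k\Lambda^{(k)}W_k^*)\big)$ is evaluated by (\ref{eqIZHC1}), giving, up to a $t$-dependent constant, $\det\big(e^{2\lambda_i(t_{k+1})\lambda_j(t_k)/(t_{k+1}-t_k)}\big)_{i,j=1}^N\big/\big(\Delta_N(\lambda(t_{k+1}))\Delta_N(\lambda(t_k))\big)$.

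It then remains to collect factors. The Vandermonde determinants telescope: the product $\prod_{k=1}^m\Delta_N(\lambda(t_k))^2$ coming from the Weyl-type formula, divided by the product $\prod_{k=1}^{m-1}\Delta_N(\lambda(t_{k+1}))\Delta_N(\lambda(t_k))$ coming from HCIZ, leaves exactly $\Delta_N(\lambda(t_1))\,\Delta_N(\lambda(t_m))$. Finally, for each $k$ I absorb the Gaussian factors $e^{-\lambda_i(t_{k+1})^2/(t_{k+1}-t_k)}$ (one per row) and $e^{-\lambda_j(t_k)^2/(t_{k+1}-t_k)}$ (one per column) into the $k$-th HCIZ determinant; this completes the square, turning it into $\det\big(e^{-(\lambda_i(t_k)-\lambda_j(t_{k+1}))^2/(t_{k+1}-t_k)}\big)_{i,j=1}^N$, while the leftover $e^{-\sum_i\lambda_i(t_1)^2/t_1}$ stays. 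Absorbing all numerical and $t$-dependent prefactors into $\const$ produces the claimed formula.

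The only delicate point I expect is the bookkeeping in the last two steps: verifying that the Weyl and HCIZ Vandermonde factors cancel down to a single power of $\Delta_N$ at the two extreme times $t_1,t_m$ rather than at every time, and checking that the row/column absorption closes the square with the right normalization. The decoupling of the Haar integrals via the relative unitaries $W_k$ is conceptually the heart of the argument but is routine once set up; an alternative, if one prefers to avoid the change of variables, is to run the same computation as an induction on $m$, integrating out one time slice at a time.
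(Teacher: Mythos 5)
Your proposal is correct and follows exactly the route the paper indicates for this lemma: diagonalize each $H(t_k)$, decouple the angular integrals via the relative unitaries, and apply the Harish-Chandra/Itzykson-Zuber formula (\ref{eqIZHC1}) to each time slice, with the Vandermonde and Gaussian bookkeeping working out as you describe (the telescoping leaves one $\Delta_N$ at $t_1$ and one at $t_m$, and each cross term completes the square into the heat-kernel determinant). The paper gives no further detail beyond citing (\ref{eqIZHC1}), so your write-up is essentially the intended proof, just spelled out.
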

This measure is a particular case of the setting discussed in Section~\ref{sectFixedNumber} and thus one can apply Eynard-Mehta theorem (Theorem~\ref{thmEynardMehta}) to determine its correlation kernel. The result is
\begin{equation}
\begin{aligned}
K(t_1,x_1;t_2,x_2)=&-\frac{2}{2\pi\I}\int_{\I\R+\delta}\dx w\, e^{(t_1-t_2)w^2-2(x_1-x_2)w}\Id_{[t_1<t_2]}\\
&+\frac{2}{(2\pi\I)^2}\oint_{|z|=\delta/2}\dx z\int_{\I\R+\delta}\dx w\, \frac{e^{t_1 w^2-2x_1 w}}{e^{t_2 z^2-2x_2 z}}\frac{1}{w-z}
\end{aligned}
\end{equation}
where $\delta>0$ is arbitrary. Notice that this kernel is a special case of the kernel (\ref{eq3.43}) obtained in the diffusion scaling limit of the interlacing particle system.

\subsection{GUE minor process}
Instead of considering the evolution of the eigenvalues, one can also consider the eigenvalues of the $N$ principal minors of the matrix $H(t)$.

Denote by $\lambda_k^m$ the $k$th smallest eigenvalue of the principal submatrix obtained from the first $m$ rows and columns of a GUE matrix. In our context, these principal submatrices are usually referred to as minors, and not (as otherwise customary) their determinants. The result is well known: \emph{given the eigenvalues $\lambda_k^N$, $1\leq k\leq N$, of the $N\times N$ matrix}, the GUE minors' eigenvalues are uniformly distributed on the set
\begin{equation}\label{eq18}
{\cal D}^{(N)}=\{\lambda_k^m\in \R, 1\leq k \leq m\leq N|\, \lambda_k^{m+1} \leq \lambda_k^m\leq \lambda_{k+1}^{m+1}, 1\leq k \leq m\leq N\}
\end{equation}
as shown in~\cite{Bar01}. Note that this is the continuous analogue of $GT_N$, see Figure~\ref{FigGUEminors} for an illustration.
\begin{figure}
\begin{center}
\psfrag{l11}[cc]{$\lambda_1^1$}
\psfrag{l12}[cc]{$\lambda_1^2$}
\psfrag{l22}[cc]{$\lambda_2^2$}
\psfrag{l13}[cc]{$\lambda_1^3$}
\psfrag{l23}[cc]{$\lambda_2^3$}
\psfrag{l33}[cc]{$\lambda_3^3$}
\psfrag{l14}[cc]{$\lambda_1^4$}
\psfrag{l24}[cc]{$\lambda_2^4$}
\psfrag{l34}[cc]{$\lambda_3^4$}
\psfrag{l44}[cc]{$\lambda_4^4$}
\psfrag{n}[cc]{$n$}
\psfrag{R}[cc]{$\R$}
\psfrag{1}[cc]{$1$}
\psfrag{2}[cc]{$2$}
\psfrag{3}[cc]{$3$}
\psfrag{4}[cc]{$4$}
\includegraphics[height=4cm]{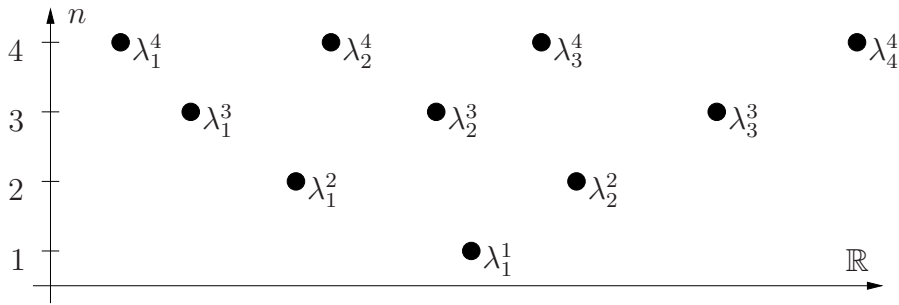}
\caption{Interlacing structure of the GUE minors' eigenvalues.}
\label{FigGUEminors}
\end{center}
\end{figure}

It is proven in~\cite{JN06} that the correlation functions of these eigenvalues are determinantal with correlation kernel
\begin{multline}
K(x_1,n_1;x_2,n_2)=-\frac{2}{2\pi\I}\int_{\I \R+\delta}\dx w\, \frac{e^{-2(x_1-x_2)w}}{w^{n_2-n_1}}\Id_{[n_1<n_2]}\\
+\frac{2}{(2\pi\I)^2}\oint_{|z|=\delta/2} \dx z \int_{\I \R+\delta}\dx w \,\frac{e^{w^2-2w x_1}}{e^{z^2-2z x_2}}\frac{w^{n_1}}{z^{n_2}}\frac{1}{w-z},
\end{multline}
for any $\delta>0$. Also in this case, this kernel is a special case of the kernel (\ref{eq3.43}).

A way of proving is the following. Obviously, changing the condition $\lambda_k^{m+1} \leq \lambda_k^m$ into $\lambda_k^{m+1} < \lambda_k^m$ does not change the system, since we cut out null sets. Then, using Lemma~\ref{LemInterlacing} (that clearly holds also in the continuous) one replaces the interlacing condition by a product of determinants,
\begin{equation}
\prod_{m=2}^{N} \det[\phi(\lambda_i^{m-1},\lambda_j^{m})]_{i,j=1}^{m},
\end{equation}
where $\lambda^{m-1}_m\equiv{\rm virt}$ are \emph{virtual variables}, $\phi(x,y)=\Id_{[x\leq y]}$, $\phi({\rm virt},y)=1$. Thus, the measure on ${\cal D}^{(N)}$ becomes
\begin{equation}\label{eq22}
{\rm const} \times \left(\prod_{m=2}^{N} \det[\phi(\lambda_i^{m-1},\lambda_j^{m})]_{i,j=1}^{m}\right) \Delta_N(\lambda^N)\prod_{i=1}^N e^{-(\lambda_i^N)^2}\,\dx\lambda,
\end{equation}
where $\dx\lambda=\prod_{1\leq k\leq n\leq N}\dx\lambda_k^n$. At this point one applies Theorem~\ref{ThmMinors} and a few computations leads to the above result.

\subsection{Correlation functions along space-like paths}
The natural question, in view of what we made with the interlacing particle system, is whether one can combine the two above special cases and get the correlation functions for space-like paths as above.

There are two aspects to be considered. The first is to determine whether the evolution of the minors' eigenvalues can be described by a Markov process. It is known that it is not the case if one takes at least three consecutive minors~\cite{ANvM10b}. However, along space-like paths the evolution is indeed Markovian as shown in Section 4 of~\cite{FF10}.

The second issue concerns the correlation functions and if they have any similarities with the ones for the interlacing particle system defined above. The answer is affirmative.
For $n\in\{1,\dots,N\}$ we denote by $H(n,t)$ the $n\times n$ minor of $H(t)$, which is obtained by keeping the first $n$ rows and columns of $H(t)$. Denote by $\lambda^n_1(t)\leq \lambda_2^n(t)\leq \dots\leq \lambda_n^n(t)$ the eigenvalues of $H(n,t)$. Then, at any time $t$, the interlacing property (\ref{eq18}) holds. Moreover, along space-like paths the eigenvalues' process is Markovian with correlation functions given as follows.

\begin{thm}[Ferrari, Frings; Theorem 1.2 of~\cite{FF10}]\label{ThmMainGUE}
For any $m=1,2,\dots$, pick $m$ (distinct) triples
\begin{equation}
(n_j,t_j,x_j)\in \N\times \R_{\geq 0}\times\R
\end{equation}
such that $n_1\leq n_2\leq \ldots\leq n_m$ and $t_1\geq t_2\geq \ldots\geq t_m$. Then the eigenvalues' point process is determinantal. Its correlation kernel is given by
where
\begin{equation}\label{eqExtendedGUEkernel}
\begin{aligned}
K(n_1,t_1,x_1;n_2,t_2,x_2) =&-\frac{2}{2\pi\I}\int_{\I\R+\delta}\dx w\, \frac{e^{(t_1-t_2)w^2-2(x_1-x_2)w}}{w^{n_2-n_1}}\Id_{[(n_1,t_1)\prec (n_2,t_2)]}\\
&+ \frac{2}{(2\pi\I)^2}\oint_{|z|=\delta/2}\dx z \int_{\I\R+\delta}\dx w \, \frac{e^{w^2t_1-2x_1w}}{e^{z^2t_2-2x_2z}}
\frac{1}{w-z}\frac{w^{n_1}}{z^{n_2}}
\end{aligned}
\end{equation}
where $\delta>0$.
\end{thm}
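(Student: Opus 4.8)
The plan is to identify the joint law of the minors' eigenvalues along the space-like path as (a marginal of) a determinantal measure of the type (\ref{eqPushASEPmeasure}), apply Theorem~\ref{thmPushASEPKernel} to read off the correlation kernel, and then carry out the explicit contour-integral computation and biorthogonalisation needed to bring it to the form (\ref{eqExtendedGUEkernel}). This follows the same route used above for the fixed-time GUE minor process (via Theorem~\ref{ThmMinors}) and for TASEP (Theorem~\ref{thmspacelikekernel}), the only additional ingredient being the time evolution of a fixed minor. Since the minor process of the first $n_m$ coordinates does not depend on the total matrix size, one may take $N=n_m$ throughout.

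\textbf{Building the measure.} Because the matrix entries are independent Brownian motions, the $n\times n$ principal minor $H(n,\cdot)$ is itself a Hermitian Brownian motion, autonomous from the larger entries; hence its eigenvalues perform Dyson's Brownian motion, and by the Harish-Chandra--Itzykson--Zuber formula (\ref{eqIZHC1}) the transition density of $\lambda^n(\cdot)$ from time $s$ to time $s'>s$ equals $\frac{\Delta_n(y)}{\Delta_n(x)}\det[g_{s'-s}(x_i,y_j)]_{i,j=1}^n$ with $g_t$ a one-dimensional Gaussian kernel. On the other hand, by Baryshnikov's theorem, conditionally on $\lambda^{n+1}(s)$ the eigenvalues $\lambda^n(s)$ of the next smaller minor are uniform on the interlacing polytope, i.e.\ have conditional density proportional to $\frac{\Delta_n(x^n)}{\Delta_{n+1}(x^{n+1})}\Id_{[x^n\prec x^{n+1}]}$. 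Walking along the space-like path from level $n_m$ (observed at the smallest time $t_m$, where $\lambda^{n_m}(t_m)$ has the usual density proportional to $\Delta_{n_m}(\lambda)^2\prod_i e^{-(\lambda_i)^2/t_m}$) down to level $n_1$ (observed at the largest time $t_1$) and then further down to level $1$, one alternately applies ``take the submatrix at a fixed time'' and ``evolve a fixed minor forward in time'', introducing as auxiliary variables the eigenvalues at the intermediate non-observed levels and the eigenvalues of each minor at the time of the next submatrix-step. The point that makes these steps compose is that, by unitary invariance of the Brownian increments, the conditional law of $\lambda^{n_j}(t_j)$ given all earlier data depends on $\lambda^{n_{j+1}}(t_{j+1})$ only through its eigenvalues; this is in essence the Markov property of the minor process along space-like paths established in Section~4 of~\cite{FF10}. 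Telescoping the Vandermonde ratios, rewriting each indicator $\Id_{[\cdot\prec\cdot]}$ as a determinant $\det[\phi(\cdot,\cdot)]$ with a virtual variable via Lemma~\ref{LemInterlacing}, and absorbing the leftover factor $\Delta_{n_m}(\lambda^{n_m}(t_m))\prod_i e^{-(\lambda_i^{n_m}(t_m))^2/t_m}$ into $\det[\Psi_i(\lambda_j^{n_m}(t_m))]$ with $\Psi_i$ suitable Hermite-type functions, one brings the joint density exactly into the form (\ref{eqPushASEPmeasure}) (with $N=n_m$, the $\T$'s equal to Gaussian heat kernels and the $\phi_n$'s equal to $\Id_{[\cdot\le\cdot]}$). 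For the stated conclusion --- determinantal correlation functions with a prescribed kernel --- only this determinantal form of the finite-dimensional density is needed, not the Markov property itself.

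\textbf{Computing the kernel.} With the measure in the form (\ref{eqPushASEPmeasure}), Theorem~\ref{thmPushASEPKernel} gives $K=-\phi^{(\cdot,\cdot)}+\sum_k\Psi_k^{\cdot}\Phi_k^{\cdot}$ once the biorthogonal functions $\Phi_k$ are known. Using the contour-integral representations of the building blocks --- the Gaussian kernel $\T_{s,s'}$ as $\frac{2}{2\pi\I}\int_{\I\R+\delta}\dx w\, e^{(s'-s)w^2-2(x-y)w}$ and $\phi$ as a single integral around $0$ --- the convolution $\phi^{(t_{a_1}^{n_1},t_{a_2}^{n_2})}$ collapses to the single-integral term $\frac{2}{2\pi\I}\int_{\I\R+\delta}\dx w\,\frac{e^{(t_1-t_2)w^2-2(x_1-x_2)w}}{w^{n_2-n_1}}$. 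The functions $\Psi_k$ are the images of the Hermite functions under the relevant convolutions, and building the dual basis $\Phi_k$ amounts to a classical Gaussian biorthogonality solved by a contour-integral ansatz (Hermite polynomials admit a generating representation as an integral over a small circle around $0$); this produces the double-integral term $\frac{2}{(2\pi\I)^2}\oint_{|z|=\delta/2}\dx z\int_{\I\R+\delta}\dx w\,\frac{e^{w^2 t_1-2x_1 w}}{e^{z^2 t_2-2x_2 z}}\frac{1}{w-z}\frac{w^{n_1}}{z^{n_2}}$. Matching contours and orientations yields (\ref{eqExtendedGUEkernel}). Equivalently, one may note that the measure just constructed is a special case of the one producing the kernel (\ref{eq3.43}), so the two kernels must coincide.

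\textbf{Main obstacle.} The genuinely delicate step is the construction of the measure: one must justify rigorously that the conditional-independence chain along the space-like path is valid --- i.e.\ that Baryshnikov's conditional uniformity and the Dyson dynamics of each fixed minor fit together and that no information beyond $\lambda^{n_{j+1}}(t_{j+1})$ is needed to determine the conditional law of the next observation --- keep careful track of the virtual variables and of the $\pm$ signs coming from Lemma~\ref{LemInterlacing}, and check the positivity/finiteness conditions under which the conditional $L$-ensemble machinery (Theorems~\ref{ThmConditionalLensembles} and~\ref{thmPushASEPKernel}) applies. By contrast, the contour-integral manipulations and the biorthogonalisation of the last step, while somewhat lengthy, are routine once the correct Hermite ansatz is in place; and the convergence of Fredholm determinants, needed only if one additionally wants gap probabilities or joint distributions, follows from Hadamard's bound and dominated convergence exactly as in step~(4) of Section~\ref{sectAsymptAnalysis}.
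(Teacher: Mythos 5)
Your proposal is correct and follows essentially the same route as the cited proof (Ferrari--Frings~\cite{FF10}), which is also the one these notes gesture at: write the joint eigenvalue density along the space-like path in the product-determinant form (\ref{eqPushASEPmeasure}), using the autonomy and unitary invariance of each minor's Brownian evolution (Dyson transition via (\ref{eqIZHC1})) together with Baryshnikov's conditional uniformity on the interlacing polytope, and then apply Theorem~\ref{thmPushASEPKernel} with the Hermite-type biorthogonalisation and contour computations to reach (\ref{eqExtendedGUEkernel}). Your closing observation that the constructed measure is the continuum counterpart of the one producing (\ref{eq3.43}) is exactly the remark made after the theorem in the text, and your identification of the delicate point (the conditional-independence chain along the space-like path, i.e.\ the content of Section~4 of~\cite{FF10}) is on target.
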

Notice that this is exactly the kernel (\ref{eq3.43}) obtained above.

\subsection{Large time asymptotics}
The large $(n,t)$-asymptotics along space-like paths can be made by steep descent analysis as explained in Section~\ref{sectAsymptAnalysis}. Here we present a scaling with varying levels and times for which the Airy$_2$ process arises. For asymptotics at fixed time see also~\cite{FN08b}.

Consider the scaling
\begin{equation}
\begin{aligned}
n(u)&=\eta L-\alpha u L^{2/3},\\
t(u)&=\tau L+\beta u L^{2/3}.
\end{aligned}
\end{equation}
where $\eta>0,\tau>0$ and $\alpha,\beta\geq 0$. From the macroscopic picture one obtains that the smallest eigenvalue is at level $n(u)$ and time $t(u)$ is around $-\sqrt{2n(u) t(u)}$ plus fluctuations of order $L^{1/3}$ (for the largest eigenvalue it is similar). Thus consider the scaling
\begin{equation}
x(u,\xi)=-\sqrt{2n(u) t(u)} -\xi L^{1/3}.
\end{equation}
This scaling limit corresponds to consider the rescaled smallest eigenvalues
\begin{equation}
\lambda_t^{\rm resc}(u):=\frac{\lambda_{1}^{n(u)}(t(u))+\sqrt{2n(u) t(u)}}{-L^{1/3}}.
\end{equation}
The accordingly rescaled correlation kernel is therefore given by
\begin{equation}
K_L^{\rm resc}(u_1,\xi_1;u_2,\xi_2)=L^{-1/3} K(n(u_1),t(u_1),x(u_1,\xi_1);n(u_2),t(u_2),x(u_2,\xi_2))
\end{equation}
with $K$ given in Theorem~\ref{ThmMainGUE}. For the asymptotic analysis, the critical point is $z_c=\sqrt{\eta/2\tau}$. The change of variable $z=z_c+W (\kappa L)^{-1/3}$ and \mbox{$w=w_c+W (\kappa L)^{-1/3}$} with $\kappa=(2\tau)^{3/2}/\eta^{1/2}$ and the control of the error terms leads eventually to
\begin{equation}
\lim_{L\to\infty}K_L^{\rm resc}(u_1,\xi_1;u_2,\xi_2) \equiv S_v^{-1} K_2(s_1/S_v,u_1/S_h;s_2/S_v,u_2/S_h)
\end{equation}
with $S_h=(2\eta^{2/3}\tau)/(\alpha \tau +\beta \eta)$ and $S_v=\sqrt{\tau/2}/\eta^{1/6}$.

By further controlling the tails of the kernel one obtains the convergence of the Fredholm determinants. In particular, one obtains for the distribution of the smallest eigenvalues
\begin{equation}
\lim_{L\to\infty} \frac{\lambda_{1}^{n(u)}(t(u))+\sqrt{2n(u) t(u)}}{-L^{1/3}} = S_v {\cal A}_2(u/S_h)
\end{equation}
in the sense of finite-dimensional distributions.

\section{Final remarks}\label{SectOutlook}

\subsection*{Stochastic growth models}
The link between growth models and random matrices goes back to~\cite{Jo00b,PS00}. TASEP can be interpreted as a stochastic growth model, by defining a height function whose discrete gradient is given by $h(j+1)-h(j)=1-2\eta_j$ with $\eta_j$ the TASEP occupation variable. Then, TASEP is a model in the so-called KPZ universality class (where KPZ stands for Kardar-Parisi-Zhang~\cite{KPZ86}), see also the review~\cite{Cor11}.

As for random matrices, where the different symmetry for GOE and GUE ensembles have different limiting distributions, also for growth models the limiting law  differs depending on whether the initial condition is translation invariant or not, see the review~\cite{Fer10b} for more details. Another solvable model and well-studied model in the KPZ class is the polynuclear growth model, see~\cite{BFS07b,PS02,Jo03b}.

A different way of analyzing TASEP is to use the link to a directed last passage percolation (LPP) and then use the non-intersecting line ensembles approach~\cite{Jo03b,PS02,Jo02b}. For multi-point distributions, one can transfer the LPP results to distributions of particle positions or height function using the slow-decorrelation results~\cite{Fer08,CFP10b,CFP10a}. The non-intersecting line ensemble approach is successful only for ``step-like'' initial condition or two-sided Bernoulli, but not for ``flat'' initial condition. The latter can be analyzed using the interlacing structure too~\cite{BFPS06,BFS07b,BF07}.

\subsection*{Random tilings - Aztec diamond}
The Tracy-Widom distribution and the Airy$_2$ process shows up also in random tiling models like the Aztec diamond~\cite{Jo03}. It is not an accident. Indeed, the construction of the Markov chain explained in this lectures works also for a particle system that generates the uniform measure on Aztec diamond~\cite{BF08,Nor08} (this is used for the Java simulation in~\cite{FerAZTEC}, and is related to the shuffling algorithm~\cite{CEP96}).

\subsection*{The connection is only partial}
We saw in these lectures the situations where the eigenvalues' measure coincide with the measure on the interacting particle system. The connection is however restricted to space-like paths and for GUE. The GOE Tracy-Widom distribution~\cite{TW96} arises in TASEP for periodic initial condition for TASEP. However, the (determinantal) measure on interlacing points is not a positive measure, so that the fixed level measure is not a probability measure coming from some random matrix model. Also, it is known that the joint distributions of particle positions with periodic initial condition is asymptotically governed by the Airy$_1$ process~\cite{Sas05,BFPS06}, which does not give the joint distributions of the largest eigenvalue of GOE Dyson's Brownian motion~\cite{BFP08}.

\subsection*{Gaussian Free Field in the bulk of the particle system}
In these lectures we focused at the scaling limit for the edge of the interlacing particles. If we focus in the middle, one finds the sine kernel. Also, interpreting the system as a two-dimensional interface, it leads to a model in the $2+1$-dimensional anisotropic KPZ class~\cite{BF08,BF08b} (another model in this class is~\cite{PS97}). In the bulk the height function is Gaussian in a $\sqrt{\ln(t)}$ scale and one sees the Gaussian Free Field (GFF) as shown in~\cite{BF08} (see~\cite{She03,Ken01} to learn more about the GFF).

\subsection*{Random matrix process in the bulk of random matrices}
The fact that the particle process and the minor process in general are different, can be seen also from the works of Borodin where he obtain the analogue of the GFF for the minors of Wigner matrices~\cite{Bor10b,Bor10c}.

\subsection*{Intertwining in Macdonald processes}
Finally, the abstract formalism of~\cite{BF08} was carried over to a new ground of the so-called Macdonald processes and applied to difference operators arising from the (multivariate) Macdonald polynomials~\cite{BC11}. This lead to a conceptual new understanding of the asymptotic behavior of \mbox{(1+1)-dimensional} random polymers in random media and finding explicit solutions of the (nonlinear stochastic partial differential) KPZ equation~\cite{BC11,BCF12}.

\appendix

\section{Toeplitz-like transition probabilities}\label{AppToeplitz}
Here we give some results on transition probabilities which are translation invariant. They are taken from Section~2.3 of~\cite{BF08}. Our cases are obtained by limits when all $\alpha_i$'s goes to the same value, say $1$.

Recall that $W_n=\{x^n=(x_1,\ldots,x_n)\in\Z^n\, | \, x_1<x_2<\ldots<x_n\}$.
\begin{prop}[Proposition 2.8 of~\cite{BF08}]\label{propApp1}
Let $\alpha_1,\ldots,\alpha_n$ be non-zero complex numbers and let $F(z)$ be an analytic function in an annulus $A$ centered at the origin that contains all $\alpha_j^{-1}$'s. Assume that $F(\alpha_j^{-1})\neq 0$ for all $j$. Then, for $x^n\in W_n$,
\begin{equation}
\frac{\sum_{y^n\in W_n}\det\big(\alpha_i^{y_j^n}\big)_{i,j=1}^n \det(f(x_i^n-y_j^n))_{i,j=1}^n}{F(\alpha_1^{-1})\cdots F(\alpha_n^{-1})} = \det\big(\alpha_i^{x_j^n}\big)_{i,j=1}^n,
\end{equation}
where
\begin{equation}
f(m)=\frac{1}{2\pi\I}\oint_{\Gamma_0}\dx z \frac{F(z)}{z^{m+1}}.
\end{equation}
\end{prop}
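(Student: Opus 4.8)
The plan is to reduce the claimed identity to a residue computation after expanding the determinant $\det(f(x_i^n - y_j^n))$ via the Cauchy--Binet/Andr\'eief identity. First I would write, for each pair $(i,j)$, $f(x_i^n - y_j^n) = \frac{1}{2\pi\I}\oint_{\Gamma_0}\dx z_i\, \frac{F(z_i)}{z_i^{x_i^n - y_j^n + 1}}$, using one integration variable $z_i$ per \emph{row} $i$. Then $\det(f(x_i^n-y_j^n))_{i,j=1}^n = \frac{1}{(2\pi\I)^n}\oint_{\Gamma_0}\cdots\oint_{\Gamma_0} \prod_{i=1}^n \frac{F(z_i)\dx z_i}{z_i^{x_i^n+1}}\; \det\big(z_i^{y_j^n}\big)_{i,j=1}^n$, since the $y_j^n$-dependence factors out of the determinant column by column. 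So the left-hand side becomes
\begin{equation*}
\frac{1}{F(\alpha_1^{-1})\cdots F(\alpha_n^{-1})}\,\frac{1}{(2\pi\I)^n}\oint\cdots\oint \prod_{i=1}^n \frac{F(z_i)\dx z_i}{z_i^{x_i^n+1}} \sum_{y^n\in W_n}\det\big(\alpha_i^{y_j^n}\big)_{i,j=1}^n\det\big(z_i^{y_j^n}\big)_{i,j=1}^n.
\end{equation*}

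Next I would evaluate the inner sum over the Weyl chamber $W_n$. This is a Cauchy-type summation: using the Andr\'eief (continuous Cauchy--Binet) identity in its discrete form, $\sum_{y_1<\cdots<y_n}\det(g_i(y_j))\det(h_i(y_j)) = \det\big(\sum_{y} g_i(y) h_j(y)\big)$ whenever the entrywise sums converge, with $g_i(y)=\alpha_i^y$ and $h_j(y)=z_j^y$. The entrywise sum is a geometric series $\sum_{y\in\Z}(\alpha_i z_j)^y$; convergence requires care, and this is where the location of the contour relative to the annulus $A$ matters — one arranges $|z_j|$ so that each geometric sum converges to $\frac{1}{1-\alpha_i z_j}$ (up to the usual one-sided truncation bookkeeping; since the $\alpha_i^{-1}$ lie in $A$ one can separate the $y\geq 0$ and $y<0$ parts and recombine). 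The upshot is that the sum equals, up to sign/normalization, a Cauchy determinant $\det\big(\tfrac{1}{1-\alpha_i z_j}\big)_{i,j=1}^n = \frac{\prod_{i<j}(\alpha_i-\alpha_j)(z_i-z_j)}{\prod_{i,j}(1-\alpha_i z_j)}$ times $\prod_i \alpha_i$ or a similar monomial prefactor.

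Then I would plug this Cauchy determinant back and compute the $n$-fold contour integral. The integrand in $z_i$ now has poles at $z_i = 0$ (of order $x_i^n+1$, from the $z_i^{-x_i^n-1}$ factor) and at $z_i = \alpha_j^{-1}$ (from the Cauchy denominator). Deforming each $\Gamma_0$ appropriately and picking up residues at $z_i=\alpha_j^{-1}$, the antisymmetric structure of $\prod_{i<j}(z_i-z_j)$ forces, after summing over which variable captures which pole $\alpha_j^{-1}$, exactly a determinant $\det(\alpha_i^{x_j^n})_{i,j=1}^n$; the factors $F(\alpha_j^{-1})^{-1}$ cancel against $F(z_i)$ evaluated at the residue, and the Vandermonde $\prod_{i<j}(\alpha_i-\alpha_j)$ from the Cauchy determinant combines with the sign of the residue expansion to reconstruct the determinant on the right-hand side.

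The main obstacle I anticipate is the convergence/contour bookkeeping in the geometric summation over $W_n$: the sum $\sum_{y^n\in W_n}$ is only conditionally under control, and one must justify splitting it into a convergent form (exploiting that $f$ is supported appropriately, or that $F$ is analytic in the annulus $A$ containing all $\alpha_j^{-1}$, so that shrinking/expanding the $z$-contours makes $|\alpha_i z_j|<1$ or $>1$ consistently) before Andr\'eief applies. Once that is pinned down, the residue calculation is routine bookkeeping of Vandermonde signs. I would handle the convergence issue by first proving the identity for $\alpha_j$ in a suitable regime (say all $|\alpha_j|$ small) where everything converges absolutely, and then invoking analyticity in the $\alpha_j$ to extend to the stated hypotheses — both sides are rational in the $\alpha_j$ (the left side after resummation, the right side manifestly), so agreement on an open set suffices.
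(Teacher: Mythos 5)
Your proposal diverges from the argument this statement rests on (the paper itself does not reprove it, citing Proposition 2.8 of~\cite{BF08}), and as written it has a genuine gap at its central step. After you insert the contour representation of $f$ and pull the $z$-integrations outside, you must evaluate $\sum_{y^n\in W_n}\det\big(\alpha_i^{y_j^n}\big)\det\big(z_i^{y_j^n}\big)$, and this sum diverges for \emph{every} choice of the contours: already for $n=1$ it is the bilateral geometric series $\sum_{y\in\Z}(\alpha z)^y$, which converges for no value of $\alpha z$; for general $n$, shifting all coordinates $y_j\mapsto y_j+1$ multiplies every summand by $\prod_i\alpha_i z_i$, so one of the two tails always diverges regardless of how the radii are chosen. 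Hence the interchange of the $y$-sum with the $z$-integrals is unjustified, and the Cauchy determinant you want to produce does not exist as a convergent sum (formally, the bilateral sum is a delta-type object, not $\det(1/(1-\alpha_iz_j))$, which would only come from one-sided sums). Your fallback --- prove the identity for small $|\alpha_j|$ and continue analytically --- does not rescue this, because the divergence sits in the $z$-variables on the contour and occurs for every nonzero $\alpha$, so your method does not establish the identity on any open set of parameters from which to continue.

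The repair is to perform the $y$-summation \emph{before} introducing any contour integrals. Apply the discrete Andr\'eief identity directly with $g_i(y)=\alpha_i^{y}$ and $h_j(y)=f(x_j^n-y)$: because $F$ is analytic in an annulus containing all $\alpha_j^{-1}$, its Laurent coefficients satisfy geometric bounds bracketing each $|\alpha_i^{-1}|$, so each entrywise sum converges absolutely and equals $\sum_{y\in\Z}\alpha_i^{y}f(x_j^n-y)=\alpha_i^{x_j^n}\sum_{m\in\Z}f(m)\alpha_i^{-m}=\alpha_i^{x_j^n}F(\alpha_i^{-1})$, i.e.\ the Laurent series of $F$ evaluated at $\alpha_i^{-1}$. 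Taking the determinant and extracting the factor $F(\alpha_i^{-1})$ from each row yields $\prod_{i}F(\alpha_i^{-1})\det\big(\alpha_i^{x_j^n}\big)$, which is the claim; this is essentially the proof in~\cite{BF08}. If you insist on contours, you must first split $F=F_++F_-$ into its nonnegative- and negative-power parts, represent $f_+$ on a small circle and $f_-$ on a large circle (both inside the annulus and on opposite sides of all $|\alpha_j^{-1}|$), and expand the determinant columnwise over the two contributions; only then are all geometric series one-sided and convergent, and the computation recombines into the same answer.
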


A simple corollary for the specific case of the transition probability given in Proposition~\ref{propDiscreteCharlier} is the following.
\begin{cor}
For $F(z)=1-p+pz^{-1}$, it holds
\begin{equation}
f(m)=\frac{1}{2\pi\I}\oint_{\Gamma_0}\dx z \frac{F(z)}{z^{m+1}}
=\left\{
         \begin{array}{ll}
           p, & \textrm{if }m=-1, \\
           1-p, & \textrm{if }m=0, \\
           0, & \textrm{otherwise},
         \end{array}
       \right.
\end{equation}
and
\begin{equation}
\sum_{y^n\in W_n}\Delta_n(y^n) \det(f(x_i^n-y_j^n))_{i,j=1}^n=\Delta_n(x^n).
\end{equation}
\end{cor}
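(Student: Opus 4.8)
The first formula for $f(m)$ is just a residue computation: splitting $F(z)/z^{m+1}=(1-p)z^{-(m+1)}+p\,z^{-(m+2)}$ and recalling that $\frac{1}{2\pi\I}\oint_{\Gamma_0} z^{-k-1}\dx z=\Id_{[k=0]}$, one reads off $f(m)=(1-p)\Id_{[m=0]}+p\,\Id_{[m=-1]}$, which is exactly the claimed piecewise expression. Note that $f$ does not depend on the $\alpha$'s and that $f(x-y)=P(x,y)$ with $P$ as in (\ref{eq2.11}).

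For the second identity the plan is to obtain it as the $\alpha_i\to 1$ degeneration of Proposition~\ref{propApp1}. Fix $n$ distinct complex numbers $\xi_1,\dots,\xi_n$ and set $\alpha_i=e^{\e\xi_i}$ for small $\e>0$. Then $F(z)=1-p+pz^{-1}$ is analytic on $\C\setminus\{0\}$, hence on any annulus containing the $\alpha_j^{-1}$, and $F(\alpha_j^{-1})=1-p+p e^{\e\xi_j}\to 1$ as $\e\to 0$, so $F(\alpha_j^{-1})\neq 0$ for $\e$ small. Proposition~\ref{propApp1} thus gives, for all such $\e$ and all $x^n\in W_n$,
\begin{equation}
\sum_{y^n\in W_n}\det\big(\alpha_i^{y_j^n}\big)_{i,j=1}^n \det(f(x_i^n-y_j^n))_{i,j=1}^n=\Big(\prod_{j=1}^n F(\alpha_j^{-1})\Big)\det\big(\alpha_i^{x_j^n}\big)_{i,j=1}^n .
\end{equation}
I will divide by an appropriate power of $\e$ and let $\e\to 0$.

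The one real computation is the asymptotics of the generalized Vandermonde determinant. Expanding each row $e^{\e\xi_i z_j}=\sum_{k\geq 0}\frac{(\e\xi_i)^k}{k!}z_j^k$ and using multilinearity, $\det(\alpha_i^{z_j})_{i,j=1}^n=\sum_{k_1,\dots,k_n\geq 0}\prod_i\frac{(\e\xi_i)^{k_i}}{k_i!}\det(z_j^{k_i})_{i,j}$; only tuples of distinct $k_i$ contribute, and the lowest order comes from $\{k_1,\dots,k_n\}=\{0,1,\dots,n-1\}$, giving
\begin{equation}
\det\big(\alpha_i^{z_j}\big)_{i,j=1}^n=\e^{n(n-1)/2}\,\frac{\Delta_n(\xi)\,\Delta_n(z)}{\prod_{k=0}^{n-1}k!}\,(1+O(\e))
\end{equation}
for any fixed $z^n\in W_n$ (using $\sum_\sigma\sgn(\sigma)\prod_i\xi_i^{\sigma(i)-1}=\Delta_n(\xi)$ and $\det(z_j^{i-1})_{i,j}=\Delta_n(z)$). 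Applying this to both $y^n$ and $x^n$, dividing the displayed identity by $\e^{n(n-1)/2}\Delta_n(\xi)/\prod_{k=0}^{n-1}k!$, and letting $\e\to 0$ yields the claim: on the right $\prod_j F(\alpha_j^{-1})\to 1$, and on the left the sum is actually finite, since $f(m)\neq 0$ forces $m\in\{-1,0\}$, so a nonzero term requires $\{y_1^n,\dots,y_n^n\}\subset\bigcup_i\{x_i^n,x_i^n+1\}$ and only finitely many $y^n\in W_n$ contribute; hence the limit passes through the (finitely many) terms and produces $\sum_{y^n\in W_n}\Delta_n(y^n)\det(f(x_i^n-y_j^n))_{i,j=1}^n=\Delta_n(x^n)$. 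The only delicate point is pinning down the leading coefficient in the generalized-Vandermonde expansion; the rest is bookkeeping. (Equivalently, the identity says that the $h$-transformed transition matrix $P_n$ of Proposition~\ref{propDiscreteCharlier} is stochastic, i.e.\ its row sums equal one.)
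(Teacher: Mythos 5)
Your proof is correct and follows essentially the same route as the paper: the paper's own argument is precisely the $\alpha_k\to 1$ degeneration of Proposition~\ref{propApp1}, stated tersely as the convergence of the ratio $\det(\alpha_i^{y_j^n})/\det(\alpha_i^{x_j^n})\to\Delta_n(y^n)/\Delta_n(x^n)$ together with $F(\alpha_j^{-1})\to 1$. You merely carry out this limit in more detail (explicit leading asymptotics of the generalized Vandermonde and the finite-support argument justifying the exchange of limit and sum), which is a welcome but not essentially different elaboration.
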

\begin{proof}
In the limit $\alpha_k\to 1$ for all $k=1,\ldots,n$ we have
\begin{equation}
\det\big(\alpha_i^{y_j^n}\big)_{i,j=1}^n /\det\big(\alpha_i^{x_j^n}\big)_{i,j=1}^n \to \Delta_n(y^n)/\Delta_n(x^n).
\end{equation}
\end{proof}

\begin{prop}[Proposition 2.9 of~\cite{BF08}]\label{propApp2}
Let $\alpha_1,\ldots,\alpha_n$ be non-zero complex numbers and let $F(z)$ be an analytic function in an annulus $A$ centered at the origin that contains all $\alpha_j^{-1}$ for $j=1,\ldots,n-1$ and assume that $F(\alpha_j^{-1})\neq 0$ for all $j=1,\ldots,n-1$. Let us set $y_n^{n-1}=\virt$ and $f(x-\virt)=\alpha_n^x$. Then
\begin{equation}
\frac{\sum_{y^{n-1}\in W_{n-1}}\det\big(\alpha_i^{y_j^{n-1}}\big)_{i,j=1}^{n-1} \det(f(x_i^n-y_j^{n-1}))_{i,j=1}^n}{F(\alpha_1^{-1})\cdots F(\alpha_{n-1}^{-1})} = \det\big(\alpha_i^{x_j^n}\big)_{i,j=1}^n.
\end{equation}
\end{prop}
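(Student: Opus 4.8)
The plan is to follow the proof of Proposition~\ref{propApp1} almost verbatim, the only new ingredient being the virtual variable occupying the last column of the $n\times n$ determinant. Write $\Sigma$ for the sum in the numerator on the left-hand side. Since $y_n^{n-1}=\virt$ and $f(x_i^n-\virt)=\alpha_n^{x_i^n}$ does not depend on any summation variable, the last column of $\det(f(x_i^n-y_j^{n-1}))_{i,j=1}^n$ is the fixed vector $(\alpha_n^{x_1^n},\dots,\alpha_n^{x_n^n})^{\mathrm T}$. Laplace-expanding along that column,
\begin{equation*}
\det(f(x_i^n-y_j^{n-1}))_{i,j=1}^n=\sum_{i=1}^n(-1)^{i+n}\alpha_n^{x_i^n}\det\big(f(x_k^n-y_j^{n-1})\big)_{\substack{k\in\{1,\dots,n\}\setminus\{i\},\ 1\le j\le n-1}},
\end{equation*}
so that $\Sigma=\sum_{i=1}^n(-1)^{i+n}\alpha_n^{x_i^n}\,\Sigma_i$, where each $\Sigma_i$ is a sum over $y^{n-1}\in W_{n-1}$ of a product of two honest $(n-1)\times(n-1)$ determinants -- exactly the configuration handled by Proposition~\ref{propApp1}.

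To each $\Sigma_i$ I would apply the discrete Andr\'eief (Heine--Cauchy--Binet) identity: both determinants are antisymmetric in $(y_1^{n-1},\dots,y_{n-1}^{n-1})$, so summing over the Weyl chamber $W_{n-1}$ collapses the product to the single determinant $\det\big(\sum_{y\in\Z}\alpha_a^{\,y}f(x_k^n-y)\big)$ with $a$ ranging over $\{1,\dots,n-1\}$ and $k$ over $\{1,\dots,n\}\setminus\{i\}$. Next I compute the entries: substituting $m=x_k^n-y$ and recognising $f(m)=\frac{1}{2\pi\I}\oint_{\Gamma_0}F(z)z^{-m-1}\dx z$ as the $m$-th Laurent coefficient of $F$ on the annulus $A$, one gets $\sum_{y\in\Z}\alpha_a^{\,y}f(x_k^n-y)=\alpha_a^{x_k^n}\sum_{m\in\Z}F_m\alpha_a^{-m}=\alpha_a^{x_k^n}F(\alpha_a^{-1})$, legitimate precisely because $\alpha_a^{-1}\in A$. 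Pulling $F(\alpha_a^{-1})$ out of row $a$ gives $\Sigma_i=\big(\prod_{a=1}^{n-1}F(\alpha_a^{-1})\big)\det(\alpha_a^{x_k^n})_{\substack{1\le a\le n-1,\ k\in\{1,\dots,n\}\setminus\{i\}}}$. Substituting back, $\Sigma=\big(\prod_{a=1}^{n-1}F(\alpha_a^{-1})\big)\sum_{i=1}^n(-1)^{i+n}\alpha_n^{x_i^n}\det(\alpha_a^{x_k^n})_{\substack{1\le a\le n-1,\ k\neq i}}$, and the $i$-sum is nothing but the cofactor expansion of $\det(\alpha_i^{x_j^n})_{i,j=1}^n$ along its $n$-th row, which equals $(\alpha_n^{x_1^n},\dots,\alpha_n^{x_n^n})$. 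Dividing by the nonzero quantity $F(\alpha_1^{-1})\cdots F(\alpha_{n-1}^{-1})$ yields the claim.

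The only genuine work is the analytic bookkeeping: one must verify that the multiple sum defining $\Sigma$ converges absolutely, so that the Laplace expansion, the Andr\'eief step, and the interchange of the $y$-summations with the contour integral are all justified, and that $\sum_{m\in\Z}F_m\alpha_a^{-m}$ really reproduces $F(\alpha_a^{-1})$. Both points follow from the hypothesis that $F$ is analytic on an annulus $A$ containing each $\alpha_j^{-1}$ with $j\le n-1$, since then the Laurent coefficients $F_m$ decay geometrically as $m\to\pm\infty$; note that no condition is imposed at $\alpha_n^{-1}$, consistently with the virtual column carrying the pure power $\alpha_n^{x}$. In the concrete situations needed in these notes (e.g.\ $F(z)=1-p+pz^{-1}$, cf.\ Proposition~\ref{propDiscreteCharlier}) the function $f$ has finite support and all sums are finite, so this subtlety is purely technical; alternatively, Proposition~\ref{propApp2} can be recovered from a limiting form of Proposition~\ref{propApp1} in which one parameter is sent to a value whose inverse leaves the annulus, but the direct argument above is shorter and cleaner.
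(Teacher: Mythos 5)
Your argument is correct: expanding the $n\times n$ determinant along the virtual column, applying the Andr\'eief summation to each $(n-1)\times(n-1)$ block (equivalently, invoking Proposition~\ref{propApp1} with the point $x_i^n$ removed), summing the Laurent series of $F$ at $\alpha_a^{-1}$, and reassembling via the cofactor expansion of $\det(\alpha_i^{x_j^n})$ along its $n$-th row gives exactly the claim, and your remarks on absolute convergence and on the absence of any hypothesis at $\alpha_n^{-1}$ are the right ones. Note that the paper itself offers no proof of this proposition, citing Proposition~2.9 of~\cite{BF08} instead; your reduction to Proposition~\ref{propApp1} through the virtual column is the standard route taken there, so there is nothing to add.
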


A corollary concerning the transition kernel (\ref{eqMarkovLink}) is the following.
\begin{cor}\label{CorA4}
Let us choose $F(z)=(1-z)^{-1}$. Then
\begin{equation}
f(m)=\frac{1}{2\pi\I}\oint_{\Gamma_0}\dx z \frac{F(z)}{z^{m+1}}
=\left\{
         \begin{array}{ll}
           1, & \textrm{if }m\geq 0, \\
           0, & \textrm{otherwise},
         \end{array}
       \right.
\end{equation}
and, with $y_n^{n-1}=\virt$,
\begin{equation}
\sum_{y^{n-1}\in W_{n-1}}(n-1)!\Delta_{n-1}(y^{n-1}) \det(f(x_i^n-y_j^{n-1}))_{i,j=1}^n=\Delta_n(x^n).
\end{equation}
\end{cor}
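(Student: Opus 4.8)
For the first claim I would simply evaluate the contour integral: since $F(z)=(1-z)^{-1}=\sum_{k\ge 0}z^k$ is holomorphic near $z=0$, the quantity $\frac{1}{2\pi\I}\oint_{\Gamma_0}\dx z\, F(z)z^{-m-1}$ is just the coefficient of $z^m$ in $F$, hence $f(m)=\Id_{[m\ge 0]}$.

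For the summation identity the plan is to specialise Proposition~\ref{propApp2} and then degenerate the $\alpha_j$'s, mimicking the corollary to Proposition~\ref{propApp1}. I would take $F(z)=(1-z)^{-1}$, keep $\alpha_1,\dots,\alpha_{n-1}$ distinct and different from $0$ and $1$ (so $F$ is analytic near each $\alpha_j^{-1}$ with $F(\alpha_j^{-1})\ne 0$), and set $\alpha_n=1$, which is admissible since Proposition~\ref{propApp2} imposes no restriction on $\alpha_n$. Then $f(x-\virt)=\alpha_n^{x}=1$, so the last column of $\big(f(x_i^n-y_j^{n-1})\big)_{i,j=1}^n$ is $(1,\dots,1)^{\mathrm{T}}$ and this matrix is exactly the one of Lemma~\ref{LemInterlacing} up to transposition; hence it equals $\pm\Id_{[y^{n-1}\prec x^n]}$ with a size-dependent sign $\sigma_n$. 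Consequently Proposition~\ref{propApp2} becomes
\begin{equation*}
\sigma_n\sum_{y^{n-1}\prec x^n}\det\big(\alpha_i^{y_j^{n-1}}\big)_{i,j=1}^{n-1}
=\Big(\prod_{j=1}^{n-1}\frac{1}{1-\alpha_j^{-1}}\Big)\det\big(\alpha_i^{x_j^n}\big)_{i,j=1}^{n},
\end{equation*}
the last row of the right determinant being $(1,\dots,1)$ (from $\alpha_n=1$), and --- importantly --- the remaining sum runs over the \emph{finite} interlacing set, so one may pass to a limit termwise.

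Now I would let $\alpha_j=e^{\varepsilon a_j}$ with $a_1,\dots,a_{n-1}$ distinct nonzero reals and send $\varepsilon\to 0$, invoking the confluent Vandermonde asymptotics (a Cauchy--Binet expansion of the Taylor series of $e^{\varepsilon a b}$): for an $r$-tuple $b$ one has $\det\big(e^{\varepsilon a_i b_j}\big)_{i,j=1}^r=\frac{\varepsilon^{\binom{r}{2}}\Delta_r(a)\Delta_r(b)}{\prod_{k=0}^{r-1}k!}+O(\varepsilon^{\binom{r}{2}+1})$. Applying this with $r=n-1$, $b=y^{n-1}$ on the left, and with $r=n$, $b=x^n$, $a=(a_1,\dots,a_{n-1},0)$ on the right (the $\alpha_n=1$ row being the ``$a_n=0$'' specialisation), together with $F(\alpha_j^{-1})=(1-e^{-\varepsilon a_j})^{-1}=(\varepsilon a_j)^{-1}(1+O(\varepsilon))$ and the factorisation $\Delta_n(a_1,\dots,a_{n-1},0)=(-1)^{n-1}\big(\prod_{j=1}^{n-1}a_j\big)\Delta_{n-1}(a_1,\dots,a_{n-1})$, the powers of $\varepsilon$, the factor $\Delta_{n-1}(a)$ and the product $\prod_j a_j$ all cancel on the two sides, and matching leading coefficients collapses to $(n-1)!\sum_{y^{n-1}\prec x^n}\Delta_{n-1}(y^{n-1})=\Delta_n(x^n)$; feeding $\det\big(f(x_i^n-y_j^{n-1})\big)_{i,j=1}^n=\sigma_n\Id_{[y^{n-1}\prec x^n]}$ back into the statement's left side then yields the asserted formula (with the same size-dependent sign convention as in Lemma~\ref{LemInterlacing}).

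The residue evaluation and the Cauchy--Binet expansion are routine; the step that needs genuine care is the combinatorial bookkeeping --- keeping the superfactorials $\prod_{k=0}^{r-1}k!$ matched on the two sides and checking that $\sigma_n$ combines with the sign of $\Delta_n(a_1,\dots,a_{n-1},0)$ to reproduce exactly the normalisation $(n-1)!$ in the statement. Choosing $\alpha_n=1$ from the outset is what makes the limit harmless: it renders the $y^{n-1}$-sum finite and removes any need for a dominated-convergence argument when interchanging sum and limit.
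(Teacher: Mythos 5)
Your proposal is correct and is essentially the paper's own argument: both proofs specialize Proposition~\ref{propApp2} to $F(z)=(1-z)^{-1}$ and degenerate the $\alpha_j$'s to $1$, with the factor $(n-1)!$ emerging from $\prod_{k=1}^{n-1}F(\alpha_k^{-1})^{-1}$ and the ratio of confluent (generalized Vandermonde) determinants producing $\Delta_{n-1}(y^{n-1})/\Delta_n(x^n)$; the only cosmetic difference is that the paper takes the concrete family $\alpha_k=(1-k\e)^{-1}$ for all $k$ (so $\alpha_n\to 1$ in the limit) rather than fixing $\alpha_n=1$ exactly and invoking the general $\det\big(e^{\e a_i b_j}\big)$ asymptotics. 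One small touch-up: choose the $a_j$ positive (equivalently all $\alpha_j>1$), since the hypothesis of Proposition~\ref{propApp2} requires a single annulus containing every $\alpha_j^{-1}$ on which $F$ is analytic, which forces all $\alpha_j^{-1}$ to lie inside the unit circle away from the pole of $F$ at $z=1$, exactly as in the paper's choice.
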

\begin{proof}
We can get the result by considering first $1<\alpha_n<\alpha_{n-1}<\cdots<\alpha_1$ so that their inverse lie inside an annulus of radius less than $1$ (where the function $F$ is analytic) and then take the limit when all $\alpha_i$'s go to $1$. For instance, let $\alpha_k=(1-k\e)^{-1}$ for $k=1,\ldots,n$ (with $\e<1/n$). Then,
\begin{equation}
\prod_{k=1}^{n-1}\frac{1}{F(\alpha_k^{-1})}=(n-1)! \e^{n-1}.
\end{equation}
Further,
\begin{equation}
\det\big(\alpha_i^{y_j^{n-1}}\big)_{i,j=1}^{n-1}= \Delta_{n-1}(y^{n-1}) \e^{n(n-1)/2}(1+o(\e))
\end{equation}
so that as $\e\to 0$,
\begin{equation}
\frac{\det\big(\alpha_i^{y_j^{n-1}}\big)_{i,j=1}^{n-1}}{\det\big(\alpha_i^{x_j^n}\big)_{i,j=1}^n}= \frac{\Delta_{n-1}(y^{n-1})}{\Delta_n(x^n)\e^{n-1}} (1+o(\e))
\end{equation}
which leads to the result.
\end{proof}

By the above results, we can define the transition kernels
\begin{equation}
T_n(\alpha_1,\ldots,\alpha_n;F)(x^n,y^n) =\frac{\det\big(\alpha_i^{y_j^n}\big)_{i,j=1}^n}{\det\big(\alpha_i^{x_j^n}\big)_{i,j=1}^n}\frac{\det(f(x_i^n-y_j^n))_{i,j=1}^n}{F(\alpha_1^{-1})\cdots F(\alpha_n^{-1})}
\end{equation}
for $x^n,y^n\in W_n$, and
\begin{equation}
T^n_{n-1}(\alpha_1,\ldots,\alpha_n;F)(x^n,y^{n-1}) =\frac{\det\big(\alpha_i^{y_j^{n-1}}\big)_{i,j=1}^{n-1} }{\det\big(\alpha_i^{x_j^n}\big)_{i,j=1}^n}\frac{\det(f(x_i^n-y_j^{n-1}))_{i,j=1}^n}{F(\alpha_1^{-1})\cdots F(\alpha_{n-1}^{-1})}
\end{equation}
for $x^n\in W_n$ and $y^{n-1}\in W_{n-1}$. In our application, $T_n$ is $P_n$ and $T^n_{n-1}$ is $\Lambda^n_{n-1}$. The intertwining condition is then a consequence of the following result.
\begin{prop}[Proposition 2.10 of~\cite{BF08}]\label{propApp3}
Let $F_1$ and $F_2$ two functions holomorphic in an annulus centered at the origin and containing all the $\alpha_j^{-1}$'s that are nonzero at these points.
Then,
\begin{equation}
\begin{aligned}
&T_n(F_1) T_n(F_2)=T_n(F_2) T_n(F_1)=T_n(F_1 F_2),\\
&T_n(F_1) T^n_{n-1}(F_2)=T^n_{n-1}(F_1) T_n(F_2)=T^n_{n-1}(F_1 F_2).
\end{aligned}
\end{equation}
\end{prop}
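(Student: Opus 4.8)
The plan is to deduce all of the stated identities from two elementary facts: the Andr\'eief (Cauchy--Binet) identity and the convolution rule for Laurent coefficients. Write $f_i(m)=\frac{1}{2\pi\I}\oint_{\Gamma_0}\dx z\, F_i(z)/z^{m+1}$ for the doubly infinite sequence of Laurent coefficients of $F_i$ on a common annulus $A$ (centered at the origin) on which both $F_1$ and $F_2$ are holomorphic and which contains the relevant $\alpha_j^{-1}$'s; then $f_i(m)$ decays geometrically as $m\to\pm\infty$. Two consequences I will use repeatedly: (i) $\sum_{z\in\Z}f_1(m-z)f_2(z)=f_{12}(m)$, the $m$-th Laurent coefficient of $F_1F_2$, the bilateral sum converging absolutely; and (ii) $\sum_{m\in\Z}f_i(m)\beta^{-m}=F_i(\beta^{-1})$ whenever $\beta^{-1}\in A$.

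First I would treat $T_n(F_1)T_n(F_2)$. Expanding the composition as a sum over $z^n\in W_n$, the factor $\det(\alpha_i^{z_j^n})_{i,j=1}^n$ occurs once in a numerator (from $T_n(F_1)$) and once in a denominator (from $T_n(F_2)$) and cancels. The remaining summand $\det(f_1(x_i^n-z_j^n))_{i,j}\det(f_2(z_i^n-y_j^n))_{i,j}$ is invariant under permutations of the $z$'s (each determinant is alternating in $z$, so two sign changes cancel) and vanishes unless the $z_j^n$ are pairwise distinct; hence $\sum_{z^n\in W_n}$ may be replaced by $\frac1{n!}\sum_{z^n\in\Z^n}$, and Andr\'eief's identity turns this into $\det\bigl(\sum_z f_1(x_i^n-z)f_2(z-y_j^n)\bigr)_{i,j=1}^n=\det(f_{12}(x_i^n-y_j^n))_{i,j=1}^n$ by (i). Since $\prod_k F_1(\alpha_k^{-1})F_2(\alpha_k^{-1})=\prod_k(F_1F_2)(\alpha_k^{-1})$, what remains is exactly $T_n(F_1F_2)(x^n,y^n)$. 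Commutativity $T_n(F_1)T_n(F_2)=T_n(F_2)T_n(F_1)$ is then immediate, either from $F_1F_2=F_2F_1$ or directly from the symmetry of Andr\'eief in its two determinants.

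For the mixed identity the scheme is the same, with one bookkeeping point coming from the virtual variable. In $T_n(F_1)T^n_{n-1}(F_2)$ (a map from level $n$ to level $n-1$), summing over the inner variable $z^n\in W_n$ and cancelling $\det(\alpha_i^{z_j^n})_{i,j=1}^n$ leaves $\det(f_1(x_i^n-z_j^n))_{i,j=1}^n\det(f_2(z_i^n-y_j^{n-1}))_{i,j=1}^n$, where the last column of the second determinant is $f_2(z_i^n-\virt)=\alpha_n^{z_i^n}$. Passing to $\frac1{n!}\sum_{\Z^n}$ and applying Andr\'eief produces the determinant of the matrix whose $(i,j)$ entry for $j<n$ is $(f_1*f_2)(x_i^n-y_j^{n-1})=f_{12}(x_i^n-y_j^{n-1})$ and whose last column is $\sum_z f_1(x_i^n-z)\alpha_n^{z}=\alpha_n^{x_i^n}F_1(\alpha_n^{-1})$ by (ii); pulling out $F_1(\alpha_n^{-1})$ promotes $\prod_{k=1}^{n-1}F_1(\alpha_k^{-1})$ to $\prod_{k=1}^{n}F_1(\alpha_k^{-1})$, exactly matching the normalisation carried over from $T_n(F_1)$, and one reads off $T^n_{n-1}(F_1F_2)$. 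For the reversed order --- which for the composition to make sense must be read as $T^n_{n-1}(F_1)T_{n-1}(F_2)$ --- the virtual column now sits in the first ($n\times n$) determinant while only the $n-1$ variables $z^{n-1}\in W_{n-1}$ are summed; expanding that determinant by its last column (Laplace) reduces the inner sum to the pure-level computation above applied to the $(n-1)\times(n-1)$ complementary minors, and reassembling gives $\det(f_{12}(x_i^n-y_j^{n-1}))_{i,j=1}^n$ with virtual last column $\alpha_n^{x_i^n}$, i.e.\ again $T^n_{n-1}(F_1F_2)$. (Specialising $F_2$ to a suitable monomial recovers Propositions~\ref{propApp1} and~\ref{propApp2}, so one may instead cite those to shortcut parts of this.)

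The only genuine difficulty is analytic rather than combinatorial: one must justify interchanging the $z$-summation with the determinant/Leibniz expansions, i.e.\ the absolute convergence of the original $W_n$-sum and of the bilateral convolutions $\sum_z f_1(x_i-z)f_2(z-y_j)$. This is exactly where the hypothesis that $F_1,F_2$ be holomorphic on a common annulus containing the relevant $\alpha_j^{-1}$ is used essentially: it forces two-sided geometric decay of $f_1(m),f_2(m)$, which dominates the finitely many (at most polynomially growing) terms of the expansions, so Fubini applies, $f_1*f_2$ is again the Laurent-coefficient sequence of a function holomorphic on that annulus --- namely $F_1F_2$ --- and the assumption $F_i(\alpha_k^{-1})\ne 0$ guarantees $(F_1F_2)(\alpha_k^{-1})\ne 0$, so that $T_n(F_1F_2)$ and $T^n_{n-1}(F_1F_2)$ are themselves well-defined. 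Everything else is routine Andr\'eief/Laplace bookkeeping.
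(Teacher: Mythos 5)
Your argument is correct, and it is essentially the standard proof: the paper itself does not prove this proposition but defers to Proposition 2.10 of~\cite{BF08}, where the identities are obtained by exactly this computation — cancel the ratio of $\alpha$-determinants, apply the Cauchy--Binet/Andr\'eief identity to the sum over the intermediate ordered configuration, identify $\sum_z f_1(x-z)f_2(z-y)$ with the Laurent coefficients of $F_1F_2$ and $\sum_z f_1(x-z)\alpha_n^{z}=\alpha_n^{x}F_1(\alpha_n^{-1})$ for the virtual column — and you are also right that the middle term of the second identity must be read as $T^n_{n-1}(F_1)T_{n-1}(F_2)$ (a typo in the notes), matching the intertwining (\ref{eqIntertwining}). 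One small refinement: the Laurent coefficients need not decay in both directions unless the annulus straddles the unit circle, but absolute convergence of the bilateral convolutions (and hence your Fubini step) still holds by bounding $f_1(x-z)$ and $f_2(z-y)$ with two \emph{different} radii $\rho_1\neq\rho_2$ chosen inside the common annulus, so the conclusion is unaffected.
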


%\bibliographystyle{patplain}
%\bibliography{Biblio}

\end{document}